\numberwithin{equation}{section}
\theoremstyle{plain}
\newtheorem{definition}{Definition}
\newtheorem{remark}{Remark}
\newtheorem{lemma}{Lemma}
\newtheorem{assumption}{Assumption}
\newtheorem{theorem}{Theorem}
\newtheorem{example}{Example}
\newcommand{\norm}[1]{\left\lVert#1\right\rVert}
\DeclarePairedDelimiter\abs{\lvert}{\rvert}%
\let\oldabs\abs
\def\abs{\@ifstar{\oldabs}{\oldabs*}}
\newcommand{\Btheta}{\boldsymbol{\theta}}
\newcommand{\By}{\mathbf{y}}
\newcommand{\Bu}{\mathbf{u}}
\newcommand*{\addFileDependency}[1]{
  \typeout{(#1)}
  \@addtofilelist{#1}
  \IfFileExists{#1}{}{\typeout{No file #1.}}
}
\newcommand*{\myexternaldocument}[1]{%
    \externaldocument{#1}%
    \addFileDependency{#1.tex}%
    \addFileDependency{#1.aux}%
}
\title{A correlated pseudo-marginal approach to doubly intractable problems}
\author{Yu Yang\thanks{School of Economics, University of New South Wales.} \and Matias Quiroz \thanks{School of Mathematical and Physical Sciences, University of Technology Sydney} \and Robert Kohn$^*$\thanks{Data Analytics for Resources and Environments (DARE), University of Sydney.} \and Scott A. Sisson\thanks{School of Mathematics \& Statistics, University of New South Wales.}\,\,\,\thanks{UNSW Data Science Hub, UNSW Sydney, Australia.}}
\date{}
\begin{document}

\maketitle

\begin{abstract}
Doubly intractable models are encountered in a number of fields, e.g.\ social networks, ecology and epidemiology. Inference for such models requires the evaluation of a likelihood function, whose normalising factor depends on the
model parameters and is assumed to be computationally intractable. The normalising constant of the posterior distribution and the additional normalising factor of the likelihood function result in a so-called doubly intractable posterior, for which it is difficult to directly apply Markov chain Monte Carlo methods. We propose a signed pseudo-marginal Metropolis-Hastings algorithm with an unbiased block-Poisson estimator to sample from the posterior distribution of doubly intractable models. As the estimator can be negative, the algorithm targets the absolute value of the estimated posterior and uses an importance sampling estimator to ensure simulation-consistent estimates of the posterior mean of a function of the parameters. The importance sampling estimator can perform poorly when its denominator is close to zero. We derive a finite-sample concentration inequality that ensures, with high probability, that this pathological case does not occur. Our estimator for doubly intractable problems has three advantages over existing estimators. First, the estimator is well-suited for efficient parallelisation and vectorisation. Second, its structure is ideal for correlated pseudo-marginal methods, which are well known to dramatically increase sampling efficiency. Third, the estimator enables the derivation of heuristic guidelines for tuning its hyperparameters under simplifying assumptions. We demonstrate the superior performance of our method in the standard benchmark example that models correlated spatial data using the Ising model, as well as the Kent distribution model for spherical data.
\vspace{2mm}
\\ 
 \textbf{Keywords}: Pseudo-marginal MCMC, Doubly intractable posterior, Ising model, Spherical data.
\end{abstract}

\newpage 
\section{Introduction}
Markov chain Monte Carlo (MCMC) methods \citep[see, e.g.,][for an overview] {brooks2011handbook} sample from a posterior distribution without evaluating its normalising constant, also known as the marginal likelihood. However, in some settings, the likelihood function itself contains an additional normalising constant that depends on the model parameters, and the resulting so-called doubly intractable posterior distribution falls outside the standard MCMC framework. To distinguish these normalisation quantities, we refer to the first as a normalising constant and the latter as a normalising function. 
Many well-known models give rise to doubly intractable posteriors, such as the exponential random graph models for social networks \citep{hunter2006inference} and non-Gaussian Markov random field models in spatial statistics, including the Ising model and its variants \citep{lenz1920beitrvsge,ising1925beitrag, hughes2011autologistic}. Doubly intractable models are recognised as among the most challenging problems in statistics, as they involve an intractable likelihood and also make it difficult to simulate from the model for fixed parameter values \citep{rudolf2024perturbations}.

Several algorithms are available to tackle the doubly intractable problem in Bayesian statistics; see \cite{park2018bayesian} for a review. These algorithms are classified into two main categories, with some overlap between them. The first category of methods introduces cleverly chosen auxiliary variables that cancel the normalising function when carrying out the MCMC sampling, and standard MCMC such as the Metropolis-Hastings (MH) algorithm
 \citep{metropolis1953equation, hastings1970monte} can thus be applied. This approach is model-dependent and cannot always be applied. The second category of methods, which applies more generally, approximates the likelihood function (including the normalising function) and substitutes the approximation in place of the exact likelihood in the estimation procedure. The
pseudo-marginal (PM) method \citep{beaumont2003estimation,andrieu2009pseudo}  is often used when a positive and unbiased estimator of the likelihood is available through Monte Carlo simulation. However, in some problems, including doubly intractable models, forming an unbiased estimator that is almost surely positive is only possible under unrealistic assumptions \citep{jacob2015nonnegative}. 
The so-called Russian roulette estimator  \citep{lyne2015russian} is an example of a method that can be used to unbiasedly estimate the likelihood function in doubly intractable models, although the estimate is not necessarily positive. Our paper focuses on methods that, asymptotically (in terms of chain length), yield samples from the exact doubly intractable posterior or provide estimates of expectations with respect to it.

We propose a method for exact inference on posterior expectations in doubly intractable problems based on the approach in \cite{lyne2015russian}, where an unbiased, but not necessarily positive, estimator of the likelihood function is used. The algorithm targets a posterior density that uses the absolute value of the likelihood, resulting in iterates from a perturbed target density. We follow \cite{lyne2015russian} and reweight the samples from the perturbed target density using importance sampling to obtain simulation-consistent estimates of the expectation of a function of the parameters with respect to the true posterior density. By simulation-consistent, we mean that the posterior expectation can be estimated to an arbitrary precision by increasing the number of iterations of the algorithm. While our method does not sample from the target of interest, we refer to it as exact due to its simulation-consistent property. 

Our main contribution is to explore the use of the block-Poisson (BP) estimator \citep{quiroz2021block} in the context of estimating doubly intractable models using the signed PMMH approach. Moreover, we contribute to the signed PMMH literature by deriving a finite-sample (in terms of chain length) result that guarantees, with high probability, that the denominator of the importance sampling estimator --- used to compute expectations under the doubly intractable posterior --- remains bounded away from zero. A near–zero denominator of the importance sampling estimator is known to be detrimental for signed PMMH methods \citep{lyne2015russian,quiroz2021block}. Our method provides the following advantages over the Russian roulette method. First, the BP estimator's simpler structure enables greater computational efficiency through parallelisation and vectorisation.  Second, the block form of our estimator makes it possible to correlate the estimators of the doubly intractable posterior at the current and proposed draws in the MH algorithm. Introducing such correlation dramatically improves the efficiency of PM algorithms \citep{tran2016block, deligiannidis2018correlated}. Finally, under simplifying assumptions, some statistical properties of the logarithm of the absolute value of our estimator are derived and used to obtain heuristic guidelines to tune the hyperparameters of the estimator. We demonstrate empirically that our method outperforms \cite{lyne2015russian} when estimating the parameter of an Ising model following the settings in the review paper \cite{park2018bayesian}. In a real data application for directional data, our method has a significantly shorter computing time compared to competing Bayesian methods. To the best of our knowledge, our method and that of \cite{lyne2015russian} with its extensions are the only alternatives in the PM framework to perform exact inference (in the sense of simulation-consistent estimates of posterior expectations) for general doubly intractable problems. Compared to algorithms using auxiliary variables to avoid evaluating the normalising function, signed PMMH algorithms are more widely applicable and generic as they do not require exact sampling from the likelihood, which may be hard to implement. Exact sampling refers to the ability to draw independent (data) samples exactly distributed according to the likelihood.

The rest of the paper is organised as follows. Section \ref{sec:Doubly_intractable} introduces the doubly intractable problem and discusses previous research. Section \ref{sec: method} introduces our methodology and presents a theoretical result (Theorem \ref{thm: sum far from zero}) that helps avoid a pathological case arising in finite-length MCMC chains when implementing signed pseudo-marginal methods---a practical concern that appears to have received little attention in the literature. Section \ref{sec: method} establishes the guidelines for tuning the hyperparameters of the estimator. Section \ref{sec: demo} reports on a replicated simulation study from the review paper \cite{park2018bayesian} for the Ising model and, additionally, the Kent distribution for modelling directional data. Section \ref{sec: empirical study} analyses four real-world datasets using the Kent distribution. Section \ref{sec: discussion} concludes and outlines future research.  The paper has a supplement that contains all proofs and details of the simulation studies, as well as the details of the other methods applied. We refer to equations, sections, lemmas in the main paper as (1.1), Section~1, Lemma~1 etc., and to equations, sections and lemmas, etc. in the supplement as (S1.1), Section~S1 and Lemma~S1, etc.

\section{Doubly intractable problems}\label{sec:Doubly_intractable}
\subsection{Doubly intractable posterior distributions\label{sec: doubly intract prob}}
Let $p(\By|\Btheta) $ denote the density of the data vector $\By$,  where $\Btheta$ is the vector of model parameters. Suppose $p(\By |\Btheta) = f(\By |\Btheta)/Z(\Btheta) $, where $f(\By |\Btheta)$ is computable while the normalising function  $Z(\Btheta)$ is not.  The reason that $Z(\Btheta)$  is intractable may be that it is prohibitively expensive to evaluate numerically, or lacks a closed form because $Z(\Btheta)$ is defined as an integral over a complex or high-dimensional space which is hard to evaluate, or it involves summing over an intractably large number of terms.  Two examples are given below to demonstrate the intractability for both discrete and continuous observations $\By$. 

\begin{example}[The Ising model \citep{ising1925beitrag}]\label{Ex: Ising}

 Consider an $L \times L$ lattice with binary observation $y_{ij}\in \{ -1,1\}$ in row $i$ and column $j$. The likelihood of  $\theta \in \mathbb{R}$ is
\begin{align} \label{eq: Ising}
    p(\By|\theta) &= \frac{1}{Z(\theta)}\exp ( \theta S(\By)) ; \quad 
    S(\By) = \sum_{i=1}^{L} \sum_{j=1}^{L-1} y_{i,j} y_{i,j+1} + \sum_{i=1}^{L-1} \sum_{j=1}^L y_{i,j} y_{i+1,j};\\
    \mbox{with } 
    Z(\theta)  &= \sum_{\By} \exp(\theta S(\By))\notag.
\end{align}
The normalising function $Z(\theta)$ in the Ising model is a sum over $2^{L^2}$ terms, making it computationally intractable even for moderate values of $L$. See  Section \ref{sec: ising model} for a further discussion.
\end{example}

\begin{example} [The Kent distribution \citep{kent1982fisher}] \label{Ex: Kent}

The density of the Kent distribution 
 for $\By \in \mathbb{R}^3, \norm{\By} = 1$, is
 \begin{align} \label{eq: Kent}
     f(\By|\boldsymbol{\gamma_1},\boldsymbol{\gamma_2},\boldsymbol{\gamma_3},\beta,\kappa) &= \frac{1}{c(\kappa,\beta)} \exp\left\{\kappa \boldsymbol{\gamma_1}^\top \cdot \By + \beta \left[(\boldsymbol{\gamma_2}^\top \cdot\By)^2 - (\boldsymbol{\gamma_3}^\top \cdot\By)^2\right]\right\};\\
      \mbox{with } c(\kappa, \beta) &= 2\pi \sum_{j=0}^{\infty} \dfrac{\Gamma(j+0.5)}{\Gamma(j+1)} \beta^{2j} (0.5\kappa)^{-2j-0.5} I_{2j+0.5}(\kappa),\; \notag
 \end{align}
where $I_\nu(.)$ is the modified Bessel function of the first kind and $\boldsymbol{\gamma_1},\boldsymbol{\gamma_2},\boldsymbol{\gamma_3}$ form a set of 3-dimensional orthonormal vectors. The normalising function $c(\kappa,\beta)$ is an infinite sum and thus intractable to evaluate. See  Section \ref{subsec: kent_dist} for a further discussion.
\end{example}

The doubly intractable posterior density of $\Btheta$ is
\begin{align}\label{eq:target_doubly_intractable}
\pi(\Btheta|\By)  = \dfrac{f(\By|\Btheta) \pi(\Btheta)}{Z(\Btheta) p(\By)} \propto \dfrac{f(\By|\Btheta) \pi(\Btheta)}{Z(\Btheta)},
\end{align}
where $\pi(\Btheta)$ is the prior for $\Btheta$ and  \begin{align}\label{eq:normalising_const}
    p(\By) & =\int \dfrac{f(\By|\Btheta) \pi(\Btheta)}{Z(\Btheta)} d\Btheta
\end{align} 
is the normalising constant for the posterior. Suppose we devise a Metropolis-Hastings algorithm to sample from \eqref{eq:target_doubly_intractable} with a proposal density $q(\cdot|\Btheta)$. The probability of accepting a proposed sample $\Btheta'$ is
\begin{align}\label{eq:MH_accept}
\alpha(\Btheta',\Btheta) = \min \left\{1,
     \dfrac{\pi(\Btheta')f(\By|\Btheta')/Z(\Btheta')}{\pi(\Btheta)f(\By|\Btheta))/Z(\Btheta)}\times  \dfrac{q(\Btheta|\Btheta')}{q(\Btheta'|\Btheta)}\right\}. 
\end{align}
The marginal likelihood in \eqref{eq:normalising_const} cancels in \eqref{eq:MH_accept}, but the normalising function does not. Since $Z(\Btheta)/Z(\Btheta')$ is computationally intractable, \eqref{eq:MH_accept} cannot be evaluated and thus MCMC sampling via the Metropolis-Hastings algorithm is impossible.

\subsection{Previous research} \label{sec: previous research}
Previous research on doubly intractable problems is mainly divided into the auxiliary variable approach and the likelihood approximation approach; see \cite{park2018bayesian} for an excellent review of both approaches.

The auxiliary variable approach cleverly chooses the joint transition kernel of the parameters and the auxiliary variables so that the normalising function cancels in the resulting MH acceptance ratio. The most well-known algorithms in this space are the exchange algorithm \citep{murray2006mcmc} and the auxiliary variable method \citep{moller2006efficient}. \cite{Andrieu2020MetropolisHastingsWA} extend the exchange algorithm by leveraging averaged acceptance ratios for improved stability. These algorithms are model-dependent and, crucially, rely on the sampling technique used to draw observations from the likelihood function. Perfect sampling \citep{propp1996exact} is often used to generate data samples from the model without knowing the normalising function. However, for some complex models, such as the Ising model on a large grid, perfect sampling is prohibitively expensive.  To overcome this issue,  \cite{liang2010double} and \cite{liang2016adaptive} relax the requirement of exact sampling and propose the double MH sampler and the adaptive exchange algorithm. However, the former generates inexact inference results, while the latter suffers from memory issues as many intermediate variables need to be stored within each iteration. 

Methods belonging to the likelihood approximation approach can be simulation-consistent. One example is \cite{atchade2013bayesian}, who directly approximate $Z(\Btheta)$ through multiple importance sampling. Their approach also depends on an auxiliary variable technique, but does not require perfect sampling. The downside is similar to that of the adaptive exchange algorithm; a large memory is usually required to store the intermediate variables generated in each iteration. An alternative method is to approximate $1/Z(\Btheta)$ directly using the signed PMMH algorithm to replace the likelihood function by its unbiased estimator as proposed in \cite{lyne2015russian}. To obtain the unbiased estimator, $1/Z(\Btheta)$ is expressed as a geometric series, which is truncated using a Russian roulette (RR) approach. The RR method first appeared in the physics literature \citep{carter1975particle} and is useful for obtaining an unbiased estimator through a finite stochastic truncation of an infinite series. To implement RR, a tight upper bound for $Z(\Btheta)$ is required; otherwise, the convergence of the geometric series is slow and makes the algorithm inefficient. In practice, an upper bound is usually unavailable, which may lead to negative estimates of the likelihood, and thus a signed PMMH approach is necessary, although this inflates the asymptotic variance of the resulting importance sampling estimator of posterior expectations. In particular, having nearly half of the estimates being negative is detrimental for the asymptotic variance \citep{lyne2015russian,quiroz2021block}. It is therefore crucial to control the probability of a negative estimate, which is difficult for the RR estimator as, to the best of our knowledge, no analytical expression exists for this quantity. In contrast, our estimator is more tractable, and the probability of a positive estimate is analytically derived under simplifying assumptions. Besides the upper bound, a few other hyperparameters of the RR estimator need to be determined. We are unaware of any guidelines for selecting these based on analytical expressions. We provide such guidelines for the hyperparameters in the estimator proposed in our paper. \cite{wei2017markov} combine RR with Markov chain coupling to produce an estimator with lower variance and a larger probability of producing positive estimates. However, their estimator is not tractable enough to readily provide tuning guidelines. \cite{caimulti} propose a multi-fidelity MCMC method to approximate the doubly intractable target density, which, like the Russian roulette method, stochastically truncates an infinite series and uses slice sampling \citep{murray2016pseudo}. However, similarly to \cite{lyne2015russian}, the method lacks guidelines for tuning the hyperparameters.

Finally,  although our focus is on exact approaches, we note that several approximate methods for estimating doubly intractable posteriors are available. \cite{Alquier2016NoisyMC} propose the noisy exchange algorithm, which uses multiple auxiliary variables to obtain an estimate of the acceptance ratio in the exchange algorithm. As the number of auxiliary variables goes to infinity, the algorithm samples from the exact doubly intractable posterior. However, for a finite number of auxiliary variables, the method is approximate. Variational Bayes approaches include \cite{tan2020bayesian} and \cite{lee2024stein}. \cite{Park2020emulation} propose approximating the normalising function at several parameter values using importance sampling, and interpolating it at other parameter values via Gaussian process-based emulation.

\section{Methodology} \label{sec: method}

\subsection{The block-Poisson estimator}\label{sec: blockPois}
The block-Poisson estimator \citep{quiroz2021block} was proposed for estimating the likelihood unbiasedly given an unbiased estimator of the log-likelihood obtained by data subsampling. The BP estimator builds on the Poisson estimator \citep{Wagner1988336,papaspiliopoulos2011monte}, which is useful for estimating $\exp(B)$ unbiasedly given an unbiased estimator $\widehat{B}$ of $B$, i.e.\  $E(\widehat{B}) = B$. The estimator $\widehat{B}$ can be obtained by Monte Carlo integration based on $M$ samples. The Poisson estimator of $\exp(B)$ is
\begin{align}
   \exp(m+a)\prod_{h=1}^{\chi} \frac{\widehat{B}^{(h)}-a}{m},\: \chi \sim \mathrm{Pois}(m),\: m\in\mathbb{Z},\: a \in \mathbb{R},\label{eq:orig_Pois}
\end{align}
where $\widehat{B}^{(h)}$ are independent copies of $\widehat{B}$.

The block-Poisson estimator also estimates $\exp(B)$ unbiasedly and consists of $\lambda$ Poisson estimators similar to \eqref{eq:orig_Pois}, however, estimating $\exp(B/\lambda)$ unbiasedly. The idea behind using blocks of Poisson estimators, instead of a single one as in \eqref{eq:orig_Pois}, is to allow for correlation between successive iterates in the PM algorithm as described in Section \ref{sec: signed PMMH}. Similarly to the likelihood approximation approaches discussed above, the BP estimator is implemented in combination with an auxiliary variable $\nu$, and an estimator of the normalising function. Omitting details of the auxiliary variable method that are explained in Section \ref{sec: signed PMMH}, assume $B(\Btheta) = -\nu Z(\Btheta)$ where $\nu \sim \mathrm{Exp}(Z(\Btheta))$. Our procedure estimates the likelihood for given $\nu$, and thus $\nu$ is treated as fixed (non-random) in the results derived for our estimator that we use for the tuning guidelines. The BP estimator produces (for a fixed $\nu$) an unbiased estimator of $\exp(-\nu Z(\Btheta))$ using unbiased estimators of the normalising function $Z(\Btheta)$. One advantage of the BP estimator over the RR estimators is that its simple form, together with simplifying assumptions, e.g.\ that the estimator of the normalising function is normal, enables hyperparameter tuning based on analytical expressions. As a result, the BP estimator is more likely to produce positive estimates if tuned following our guidelines in Section \ref{sec: tuning para}. Controlling the signs of the estimates is desirable for efficient estimation based on MCMC output as discussed in Sections \ref{sec: signed PMMH} and \ref{sec: tuning para}.

 Definition \ref{def: BP} describes the BP estimator $\widehat{L}_B$ of the likelihood we use for doubly intractable problems.
 Lemma \ref{lemma: block poisson est} gives the expectation and variance of $\widehat{L}_B$. Lemmas \ref{lemma: positive prob} and \ref{lemma: log variance} establish useful results for tuning the hyperparameters of the estimator (see Section \ref{sec: tuning para}). The proofs are in 
 Section \ref{app: BPproof} in the supplement.

\begin{definition} \label{def: BP}
The block-Poisson estimator is defined as 
\begin{equation}\label{eq: block_pois_est}
\widehat{L}_B(\Btheta) = \prod_{l=1}^{\lambda} \xi_l(\Btheta), \,\, 
\xi_l(\Btheta) =\exp(a/ \lambda + m) \prod_{h=1}^{\chi_l} \dfrac{\widehat{B}^{(h,l)}(\Btheta)-a}{m\lambda}, 
\end{equation}
where $\lambda$ is the number of blocks, $\chi_l \sim \mathrm{Pois}(m)$, a Poisson distribution with mean $m$, $a$ is an arbitrary constant and $m$ is the expected number of estimators used within each block. The unbiased estimates of $B$, $\widehat{B}^{(h,l)}$ are independent with the indexes $h=1,\dots, \chi_l, l= 1,\dots, \lambda$. 

\end{definition}

\begin{remark}
If $m=1$ then \eqref{eq: block_pois_est} is the estimator in
\cite{quiroz2021block}. To ensure a positive estimator with probability 1, $a$ needs to be a lower bound for all $\widehat{B}^{(h,l)}$.
\end{remark}

\begin{lemma}\label{lemma: block poisson est}
Denote $\sigma^2_{\widehat{B}} =\mathrm{Var}(\widehat{B}(\Btheta))$, and assume $\sigma^2_{\widehat{B}} < \infty$ and $E(\widehat{B}(\Btheta)) = B(\Btheta)$. The following properties hold for $\widehat{L}_B(\Btheta)$ in \eqref{eq: block_pois_est}:
\begin{itemize}
    \item[](i) $E(\widehat{L}_B(\Btheta)) = \exp(B(\Btheta))$.

    \item[](ii) $
    \mathrm{Var}(\widehat{L}_B(\Btheta)) = \exp\bigg[ \dfrac{(B(\Btheta)-a)^2 + \sigma^2_{\widehat{B}}}{m\lambda } + 2a + m\lambda \bigg] - \exp(2B(\Btheta)) $.
    
    \item[](iii) $\mathrm{Var}(\widehat{L}_B(\Btheta))$ is minimised at $a = B(\Btheta) - m\lambda$, given fixed $m$ and $\lambda$.
\end{itemize}
\end{lemma}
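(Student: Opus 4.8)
The plan is to exploit the two structural features of the estimator: the $\lambda$ blocks $\exp(\xi_l(\Btheta))$ are mutually independent, and within each block the factors $\widehat{B}^{(h,l)}(\Btheta)$ are i.i.d.\ and independent of the Poisson count $\chi_l$. Consequently every moment of $\widehat{L}_B(\Btheta)$ factorises over $l$, and each per-block moment can be obtained by first conditioning on $\chi_l$, exploiting independence of the $\widehat{B}^{(h,l)}(\Btheta)$ to turn the expectation of a product into a product of identical expectations, and then averaging over $\chi_l \sim \mathrm{Pois}(m)$. The Poisson average is the only non-routine ingredient, and in each case it collapses to an exponential via the identity $\sum_{k} e^{-m}m^k/k!\,x^k = \exp(m(x-1))$.

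For part (i), I would first compute $E(\exp(\xi_l(\Btheta)))$. Conditioning on $\chi_l = k$ and using $E(\widehat{B}^{(h,l)}(\Btheta)) = B(\Btheta)$ gives $E\big[\prod_{h=1}^k (\widehat{B}^{(h,l)}(\Btheta)-a)/(m\lambda)\big] = \big((B(\Btheta)-a)/(m\lambda)\big)^k$. Averaging this over $\chi_l$ and multiplying by the deterministic factor $\exp(a/\lambda + m)$ produces $\exp(B(\Btheta)/\lambda)$, the factor $e^{-m}$ from the Poisson normalisation cancelling the $+m$ in the exponent. Taking the product over the $\lambda$ independent blocks then yields $E(\widehat{L}_B(\Btheta)) = \exp(B(\Btheta))$.

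For part (ii), I would use $\mathrm{Var}(\widehat{L}_B(\Btheta)) = E(\widehat{L}_B(\Btheta)^2) - \exp(2B(\Btheta))$, with the second term supplied by (i). Because the blocks are independent, $E(\widehat{L}_B(\Btheta)^2) = \prod_l E(\exp(2\xi_l(\Btheta)))$, so it suffices to compute a single per-block second moment by the same conditioning argument. The only new input is the second moment of one factor, $E\big[(\widehat{B}(\Btheta)-a)^2\big] = \sigma_B^2 + (B(\Btheta)-a)^2$, obtained from the bias--variance decomposition. Carrying this through the Poisson average and raising to the $\lambda$-th power gives the stated closed form; the bookkeeping of the constants $2a/\lambda$, $m$ and $(m\lambda)^2$ is where care is needed, though no real difficulty arises.

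Finally, for part (iii), since $\exp(2B(\Btheta))$ does not depend on $a$ and $\exp(\cdot)$ is increasing, minimising the variance is equivalent to minimising the exponent $g(a) = [(B(\Btheta)-a)^2 + \sigma_B^2]/(m\lambda) + 2a + m\lambda$. Setting $g'(a) = 2(a - B(\Btheta))/(m\lambda) + 2 = 0$ gives $a = B(\Btheta) - m\lambda$, and $g''(a) = 2/(m\lambda) > 0$ confirms a minimum. The whole argument is essentially a careful moment computation; the main (and only mild) obstacle is tracking the cancellations in the Poisson averaging so that the deterministic prefactors combine correctly.
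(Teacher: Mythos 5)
Your proposal is correct and follows essentially the same route as the paper: condition on the Poisson counts, use independence of the $\widehat{B}^{(h,l)}(\Btheta)$ within a block and of the blocks across $l$, and collapse the Poisson average via $E(A^{X})=\exp(m(A-1))$ for $X\sim\mathrm{Pois}(m)$. The only cosmetic difference is in part (ii), where you compute $E(\exp(2\xi_l(\Btheta)))$ directly while the paper decomposes $\mathrm{Var}(\exp(\xi_l(\Btheta)))$ via the law of total variance before recombining; the two bookkeepings are algebraically equivalent and yield the same closed form.
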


Part (i) of Lemma \ref{lemma: block poisson est} shows that given an unbiased estimator $\widehat{B}(\Btheta)$ of $B(\Btheta)$, the BP estimator is unbiased for  $\exp(B(\Btheta))$ for any values of the hyperparameters. Let $a_{\mathrm{opt}}$ denote the value of $a$ that minimises the variance of the estimator in Part (ii) (for a fixed $m$ and $\lambda$). Part (iii) of Lemma \ref{lemma: block poisson est} shows that $a_{\mathrm{opt}} = B(\Btheta) - m\lambda$.

Similarly to the RR estimator, the BP estimator is not necessarily positive unless $a$ is a lower bound of $\widehat{B}(\Btheta)$, i.e.\ $\widehat{B}(\Btheta)>a$. This implies that $a < 0$ if $\widehat{Z}(\Btheta) >0$ (recall that $\widehat{B}(\Btheta)=-\nu\widehat{Z}(\Btheta),\,\nu\geq 0$) which usually holds in doubly intractable problems. Selecting a large negative value of $a$ (to account for extreme outcomes of $\widehat{B}(\Btheta)$), Part (iii) suggests that $m\lambda$ should also be large to keep the variance small (typically, $|\widehat{B}(\Btheta)|\ll m\lambda$). This translates to a computationally costly estimator due to many products in the BP estimator. We follow \cite{quiroz2021block} and advocate the use of a soft lower bound, i.e.,\ one that may lead to negative estimates, but still gives a $\Pr(\widehat{L}_B(\Btheta) \geq 0)$ close to one. Lemma \ref{lemma: positive prob} shows that the probability $\Pr(\widehat{L}_B(\Btheta) \geq 0)$ has an analytical expression. It is crucial to have this probability close to one for the algorithm to be efficient.  
\begin{lemma}\label{lemma: positive prob} Suppose that $a=a_\mathrm{opt}=B(\Btheta) - m\lambda$. Then,
\begin{align}\label{eq:prob_pos_estimator}
    \Pr(\widehat{L}_B(\Btheta) \geq 0) & = \dfrac{1}{2}\bigg( 1 + (1-2 \mathsf{\Psi}(a,m,\lambda, M))^\lambda \bigg),
\end{align}
with $\Psi(a,m,\lambda, M) = \Pr (\xi < 0) = \dfrac{1}{2} \sum_{j=1}^\infty \left( 1 - (1-2 \Pr(A_m \leq 0))^j \right) \Pr(\chi_l = j)$, $\chi_l \sim \mathrm{Pois}(m)$ and $A_m = [\widehat{B}(\Btheta) - B(\Btheta)]/ (m\lambda) + 1$, and $M$ is the number of Monte Carlo samples to estimate a single $\widehat{B}(\Btheta)$.
\end{lemma}
\begin{remark}\label{rem:lemma_positive_prob}
To compute the probability $\Pr(A_m \leq 0)$ in practice, we assume that $\widehat{B}(\Btheta)$ is normal with variance $\sigma^2_{\widehat{B}}$ as in Lemma \ref{lemma: log variance}. When the Monte Carlo samples $M$ are independent, then $\sigma^2_{\widehat{B}}\propto 1/M$.
\end{remark}

Figure \ref{fig:prob_pos_estimator} illustrates some terms in Lemma \ref{lemma: positive prob} under the assumptions in Remark \ref{rem:lemma_positive_prob} and $m=1$. The left panel shows the probability of an individual term in the block-Poisson estimator, i.e.\ $\Pr(\xi <0)$ in \eqref{eq: block_pois_est}, being negative as a function of $\sigma^2_{\widehat{B}}$ ($\log_{10}$ scale) for various $\lambda$ values. The right panel shows the probability of the overall block-Poisson estimator being positive, i.e.\ $\Pr(\widehat{L}_B \geq 0)$ in \eqref{eq:prob_pos_estimator}, for the same $\sigma^2_{\widehat{B}}$ and $\lambda$. As $\lambda$ increases, $\Pr(\xi < 0)$ remains near zero over a wider span of $\sigma^2_{\widehat{B}}$ (left panel). Consequently, the pathological regime---detailed later---where $\Pr(\widehat{L}_B \geq 0) \approx 0.5$ is pushed to much larger $\sigma^2_{\widehat{B}}$. Thus, this regime can be avoided by increasing $\lambda$ or by reducing $\sigma^2_{\widehat{B}}$ (the latter via increasing $M$). Section \ref{sec: tuning para} develops a tuning strategy that maximises an objective function including $\Pr(\widehat{L}_B \geq 0)$.

\begin{figure}[t]
    \centering \includegraphics[width=1.0\linewidth]{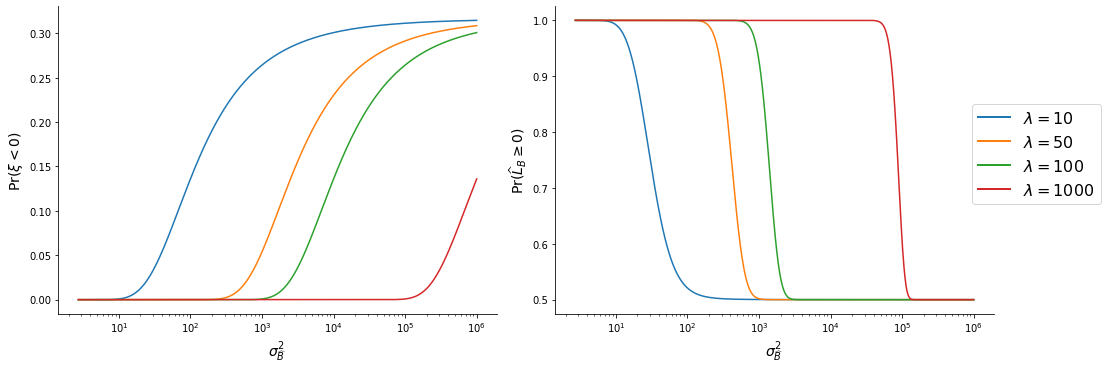}
    \caption{Plots of $\Pr(\xi<0)$ (left panel) and $\Pr(\widehat{L}\geq 0)$ (right panel) in Lemma \ref{lemma: positive prob} when $m=1$ as a function of   $\sigma^2_{\widehat{B}}$ ($\log_{10}$ scale) for various $\lambda$ values; see legend.}
\label{fig:prob_pos_estimator}
\end{figure}

Lemma \ref{lemma: log variance} derives the variance of the logarithm of the absolute value of the block-Poisson estimator by assuming that $\widehat{B}^{(h,l)}(\Btheta)$ is normal. 

\begin{lemma}\label{lemma: log variance}
If $\widehat{B}^{(h,l)}(\Btheta) \stackrel{\mathrm{iid}}{\sim} N(B(\Btheta), \sigma^2_{\widehat{B}})$ for all $h$ and $l$, when $a =a_\mathrm{opt}= B(\Btheta) - m\lambda$, then the variance of $\log |{\widehat{L}_B}|$ is
\[ \sigma^2_{\log \abs*{\widehat{L}_B}} =  m\lambda (\nu_B^2 + \eta_B^2),\]
where
     \[ \eta_B =  \log(\sigma_{\widehat{B}}/(m\lambda)) + 0.5 \left( \log 2 + E_J(\psi^{(0)} (0.5 + J)) \right)  \]
and
     \[\nu_B^2 = 0.25 \left(  E_J(\psi^{(1)} (0.5 +J)) + \mathrm{Var}_J(\psi^{(0)}(0.5+J)) \right), \]
with $J \sim \mathrm{Pois}( (m\lambda)^2/(2\sigma^2_{\widehat{B}}))$ and $\psi^{(q)}$ is the polygamma function of order $q$. 
\end{lemma}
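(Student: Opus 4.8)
The plan is to exploit the product structure of $\widehat{L}_B$ together with the independence of the blocks, thereby reducing the computation to a single compound-Poisson sum. Taking the logarithm of the absolute value in \eqref{eq: block_pois_est} gives $\log|\widehat{L}_B(\Btheta)| = \sum_{l=1}^\lambda \log|\exp(\xi_l(\Btheta))|$, and since the blocks $l=1,\dots,\lambda$ are independent and identically distributed, $\mathrm{Var}(\log|\widehat{L}_B|) = \lambda\,\mathrm{Var}(\log|\exp(\xi_1)|)$. Within a single block the deterministic factor $\exp(a/\lambda+m)$ does not affect the variance, so I would write $\log|\exp(\xi_1)| = (a/\lambda+m) + \sum_{h=1}^{\chi_1} W_h$, where $W_h := \log|(\widehat{B}^{(h,1)}(\Btheta)-a)/(m\lambda)|$ is i.i.d.\ across $h$ and $\chi_1\sim\mathrm{Pois}(m)$. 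This is a compound-Poisson sum, whose variance equals $m\,E(W^2) = m(\nu_B^2 + \eta_B^2)$ with $\eta_B := E(W)$ and $\nu_B^2 := \mathrm{Var}(W)$; multiplying by $\lambda$ already yields $\sigma^2_{\log|\widehat{L}_B|} = m\lambda(\nu_B^2+\eta_B^2)$, the claimed form. It then remains only to identify $\eta_B$ and $\nu_B^2$.

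Next I would evaluate these two moments of $W$. Substituting the variance-minimising choice $a = B(\Btheta)-m\lambda$ from Lemma \ref{lemma: block poisson est}(iii), the numerator becomes $\widehat{B}^{(h,1)}(\Btheta)-a = (\widehat{B}^{(h,1)}(\Btheta)-B(\Btheta)) + m\lambda$, which under the normality assumption is distributed as $Y\sim N(m\lambda,\sigma_B^2)$. Hence $W = \log|Y| - \log(m\lambda) = \tfrac12\log(Y^2) - \log(m\lambda)$, and the task reduces to computing $E(\log(Y^2))$ and $\mathrm{Var}(\log(Y^2))$. The key observation is that $Y^2/\sigma_B^2$ follows a non-central chi-squared distribution with one degree of freedom and non-centrality parameter $\delta = (m\lambda)^2/\sigma_B^2$.

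I would then invoke the Poisson-mixture representation of the non-central chi-squared law: conditional on $J\sim\mathrm{Pois}(\delta/2) = \mathrm{Pois}((m\lambda)^2/(2\sigma_B^2))$, the variable $Y^2/\sigma_B^2$ is a central $\chi^2_{1+2J}$, i.e.\ a $\mathrm{Gamma}(0.5+J, 2)$ variate. The log-moments of a gamma random variable are standard, giving $E(\log(Y^2)\mid J) = 2\log\sigma_B + \log 2 + \psi^{(0)}(0.5+J)$ and $\mathrm{Var}(\log(Y^2)\mid J) = \psi^{(1)}(0.5+J)$. The tower property then gives $E(\log(Y^2)) = 2\log\sigma_B + \log 2 + E_J(\psi^{(0)}(0.5+J))$, while the law of total variance gives $\mathrm{Var}(\log(Y^2)) = E_J(\psi^{(1)}(0.5+J)) + \mathrm{Var}_J(\psi^{(0)}(0.5+J))$, the additive constant $2\log\sigma_B+\log 2$ being irrelevant for the variance. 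Inserting these into $\eta_B = \tfrac12 E(\log(Y^2)) - \log(m\lambda)$ and $\nu_B^2 = \tfrac14\mathrm{Var}(\log(Y^2))$ reproduces exactly the stated expressions.

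The main obstacle lies in the passage through the non-central chi-squared distribution: one must correctly identify $Y^2/\sigma_B^2$ as non-central $\chi^2_1$ with the stated non-centrality, recall (or re-derive) the conditional gamma log-moments in terms of the digamma and trigamma functions, and apply the law of total variance carefully so that the between-$J$ dispersion term $\mathrm{Var}_J(\psi^{(0)}(0.5+J))$ is retained rather than dropped. By comparison, the compound-Poisson variance identity and the reduction across independent blocks are routine.
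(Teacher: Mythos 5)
Your proof is correct and follows essentially the same route as the paper's: both reduce $\mathrm{Var}(\log|\widehat{L}_B|)$ to a compound-Poisson variance $m\lambda(\nu_B^2+\eta_B^2)$ via the law of total variance, identify $\bigl((\widehat{B}^{(h,l)}-a)/(m\lambda)\bigr)^2$ under $a=B-m\lambda$ as a scaled non-central $\chi^2_1$ variate with non-centrality $(m\lambda/\sigma_B)^2$, and extract the log-moments through the Poisson mixture with $J\sim\mathrm{Pois}((m\lambda)^2/(2\sigma_B^2))$. The only difference is that you derive the log-moments explicitly from the conditional $\mathrm{Gamma}(0.5+J,2)$ representation, where the paper instead cites Lemma S12 of \cite{quiroz2021block}; your derivation also makes clear that the second term in the paper's final display should carry $\eta_B^2$ rather than $\eta_B$.
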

\begin{remark}
In our method, $\widehat{B}^{(h,l)}(\Btheta)=-\nu\widehat{Z}^{(h,l)}(\Btheta)$. The assumption holds if $\widehat{Z}^{(h,l)}(\Btheta)$ is normal (recall that $\nu$ is treated as non-random in the tuning procedure). 
\end{remark}

Section \ref{sec: tuning para} tunes the hyperparameters using the results above. However, we have not yet dealt with the fact that $a_\mathrm{opt}=B(\Btheta)-m\lambda$ is itself intractable as it includes the normalisation function. A sensible approach is to estimate the normalising function, i.e.\ $\widehat{a}_\mathrm{opt}=\widehat{B}(\Btheta)-m\lambda$. One can show that Part (i) still holds if $a$ is random,  however, the extra randomness in $\widehat{L}_B(\Btheta)$ may cause problems such as an infinite $\mathrm{Var}(\widehat{L}_B(\Btheta))$. We therefore consider the non-random value $a_{\mathrm{sub}} = -1 -m\lambda$, where the subscript highlights that it is a sub-optimal choice. The choice is motivated by $E(\widehat{B}(\Btheta))=-\nu Z(\Btheta)$,  with $\nu$ replaced by its expected value $1/Z(\Btheta)$. We note that this sub-optimal choice does not have implications for the exactness of the algorithm. Moreover, Section \ref{subsec:simplified_lb} in the supplement includes a simulation study showing that given our assumptions, the sub-optimal soft lower bound $a_\mathrm{sub}$ often results in similar quantities used by the tuning procedure as those of the optimal choice $a_\mathrm{opt}$ and its estimated version $\widehat{a}_\mathrm{opt}$.

 \subsection{Signed block PMMH with the BP estimator}\label{sec: signed PMMH}
 \cite{lyne2015russian} use an auxiliary variable $\nu$ to cancel the reciprocal of the normalising function in \eqref{eq:target_doubly_intractable} and end up with $\exp(-\nu Z(\Btheta))$ (instead of the reciprocal ). Specifically,
 assuming that $\nu \sim \text{Exp}(Z(\Btheta))$, the joint density of $\Btheta$ and the auxiliary variable $\nu$ is
\begin{align}
    \pi(\Btheta, \nu|\By) &= Z(\Btheta) \exp(-\nu Z(\Btheta)) \dfrac{f(\By| \Btheta)}{Z(\Btheta)} \pi(\Btheta) \dfrac{1}{p(\By)} \nonumber \\
    &\propto \exp(-\nu Z(\Btheta)) f(\By|\Btheta) \pi(\Btheta) \label{eq:augmented_posterior}.
\end{align}
    
We can use the BP estimator in Section \ref{sec: blockPois} to obtain an unbiased estimator of the augmented posterior in \eqref{eq:augmented_posterior}, up to a normalising constant for a fixed (non-random) $\nu$. Denote the unbiased estimator of $\exp(-\nu Z(\Btheta))$ by $\widehat{\exp}(-\nu Z(\Btheta))$. To emphasise the source of randomness in the estimator, let $\Bu$ be a set of random numbers with density $\pi(\Bu)$ (assumed independent of $\Btheta$) and write the estimator (with a small abuse of notation) as $\widehat{\exp}(-\nu Z(\Btheta)|\Bu)$ for a fixed $\nu$. In the block-Poisson estimator, $\Bu$ includes all the random numbers used to generate the estimates $\widehat{B}$ and the Poisson variables $\chi$. The unbiasedness of the estimator is with respect to the density $\pi(\Bu)$, i.e.\
\begin{align}\label{eq:unbiased_estimator}
    \exp(-\nu Z(\Btheta)) & = \int_{\Bu} \widehat{\exp}(-\nu Z(\Btheta)|\Bu) \pi(\Bu) d\Bu.
\end{align}
The augmented version of the posterior density in \eqref{eq:augmented_posterior} is
\begin{align}\label{eq:augmented_posterior_u}
\widehat{\pi}(\Btheta,\Bu, \nu| \By) & = \widehat{\exp}(-\nu Z(\Btheta)|\Bu) \pi(\Bu) f(\By|\Btheta) \pi(\Btheta)\frac{1}{p(\By)}.    
\end{align}
It is easy to show that, under the unbiasedness condition in \eqref{eq:unbiased_estimator}, integrating out $\Bu$ in \eqref{eq:augmented_posterior_u} gives the marginal density of interest in \eqref{eq:augmented_posterior} for $\Btheta, \nu$. However, we cannot sample from \eqref{eq:augmented_posterior_u} using a pseudo-marginal algorithm as the BP estimates may be negative and hence it is not a valid density. We follow \cite{lyne2015russian} and consider the target density
\begin{align}\label{eq:absolute_target}
\overline{\pi}(\Btheta,\Bu,\nu)=|\widehat{\pi}(\Btheta,\Bu, \nu|\By)|  = |\widehat{\exp}(-\nu Z(\Btheta)|\Bu)| \pi(\Bu) f(\By|\Btheta) \pi(\Btheta)\frac{1}{p(\By)}.    
\end{align}

Integrating out $\Bu$ in \eqref{eq:absolute_target} does not give the marginal density of interest in \eqref{eq:augmented_posterior} because $|\widehat{\exp}(-\nu Z(\Btheta)|\Bu)|$ is biased. \cite{lyne2015russian} propose reweighting the MCMC iterates using importance sampling to obtain a simulation-consistent estimate of the expectation of an arbitrary function $\psi(\Btheta)$ (assuming the expectation exists) with respect to the posterior density $\pi(\Btheta|\By)$, i.e.\
\begin{align}\label{eq:expectation_doubly_intractable}
    E_{\pi} (\psi(\Btheta)|\By) = \int_{\Btheta} \psi(\Btheta) \pi(\Btheta|\By) d\Btheta,
\end{align}
which we now outline in some detail. We can write
\begin{align*}
     E_{\pi} (\psi(\Btheta)|\By) 
     &= \int_{\Btheta}\psi(\Btheta)\int_{\nu} \pi(\Btheta,\nu|\By) d\nu d\Btheta\\
     &= \int_{\Btheta}\int_{\nu}\psi(\Btheta) \pi(\Btheta,\nu|\By) d\nu d\Btheta\\ 
     &= \frac{\int_{\Btheta}\int_{\nu}\psi(\Btheta) \exp(-\nu Z(\Btheta)) f(\By|\Btheta) \pi(\Btheta)d\Bu d\nu d\Btheta}{\int_{\Btheta}\int_{\nu}\exp(-\nu Z(\Btheta)) f(\By|\Btheta) \pi(\Btheta) d\nu d\Btheta}  \\
     &= \frac{\int_{\Btheta}\int_{\nu}\int_{\Bu}\psi(\Btheta) \mathrm{sign}(\widehat{\pi}( \Btheta, \Bu, \nu|\By)) \left|\widehat{\exp}(-\nu Z(\Btheta)|\Bu)\right| \pi(\Bu)f(\By|\Btheta) \pi(\Btheta)d\Bu d\nu d\Btheta}{\int_{\Btheta}\int_{\nu}\int_{\Bu}\mathrm{sign}(\widehat{\pi}(\Btheta, \Bu, \nu|\By))\left|\widehat{\exp}(-\nu Z(\Btheta)|\Bu)\right| \pi(\Bu) f(\By|\Btheta) \pi(\Btheta)d\Bu d\nu d\Btheta} \\
     & = \frac{\int_{\Btheta}\int_{\nu}\int_{\Bu}\psi(\Btheta) \mathrm{sign}(\widehat{\pi}( \Btheta, \Bu, \nu|\By))\left|\widehat{\pi}(\Btheta, \Bu, \nu|\By)\right|d\Bu d\nu d\Btheta}{\int_{\Btheta}\int_{\nu}\int_{\Bu}\mathrm{sign}(\widehat{\pi}(\Btheta, \Bu, \nu|\By))\left|\widehat{\pi}(\Btheta, \Bu, \nu|\By)\right| d\Bu d\nu d\Btheta},
\end{align*}
i.e.\ a ratio of expectations with respect to $\overline{\pi}=\left|\widehat{\pi}(\Btheta, \Bu, \nu|\By)\right|$ in \eqref{eq:absolute_target}, 
where $\mathrm{sign}(x) = 1$ if $x >0$, or $\mathrm{sign}(x) = -1$ if $x < 0$. Thus, we can sample from $\left|\widehat{\pi}(\Btheta, \Bu, \nu|\By)\right|$ by a pseudo-marginal MH algorithm to compute the expectations in the ratio. We denote the expectation with respect to the augmented target posterior $\left|\widehat{\pi}(\Btheta, \Bu, \nu|\By)\right|$ by $E_{\overline{\pi}}$, and we note that it differs from $E_{\pi}$ except when the likelihood estimator is almost surely positive \citep{quiroz2021block}. Since the function $\psi(\Btheta)$ is independent of $\nu$, we only store $\Btheta^{(i)}$ and the sign of the likelihood estimate evaluated at the accepted $\Btheta^{(i)}, \Bu^{(i)}, \nu^{(i)}$ at the $i$th iterate. The estimate of the expectation in \eqref{eq:expectation_doubly_intractable} with respect to the true doubly intractable posterior in \eqref{eq:target_doubly_intractable} is
\begin{equation}\label{eq: final_est}
    \widehat{E}_{\pi}(\psi(\Btheta)) = \frac{\widehat{E}_{\overline{\pi}}\left(\psi(\Btheta)S(\Btheta, \Bu, \nu)\right)}{\widehat{E}_{\overline{\pi}}\left(S(\Btheta, \Bu, \nu)\right)}= \dfrac{\sum_{i=1}^N \psi(\Btheta^{(i)}) s^{(i)}}{\sum_{i=1}^N s^{(i)}},
\end{equation}
where $S(\Btheta, \Bu, \nu)= \mathrm{sign}(\widehat{\pi}(\Btheta, \Bu, \nu|\By))$ and $s^{(i)} = S(\Btheta^{(i)}, \Bu^{(i)}, \nu^{(i)})$. We often use the shorthand $S= S(\Btheta, \Bu, \nu)$ for the sign of the posterior estimate (inherited from the likelihood estimate).

\cite{quiroz2021block} and \cite{lyne2015russian} prove a central limit theorem for the estimator in \eqref{eq: final_est}. The resulting asymptotic variance is finite if (i) $\mathrm{E}_{\overline{\pi}}(S) \neq 0$, corresponding to $\Pr_{\overline{\pi}}(S=+1)\neq0.5$, which is akin to $\Pr\left(\widehat{L}_B(\Btheta) \geq 0\right) \neq 0.5$ in Lemma \ref{lemma: positive prob},  and (ii) the variance and inefficiency factor (under $\overline{\pi}$) of $\psi S$ are finite; see \cite[Theorem S1]{quiroz2021block} for details. Although a finite asymptotic variance is reassuring, it does not reflect the performance of the estimator in \eqref{eq: final_est} for a finite number of samples, which may still be poor despite the favorable asymptotic behavior. This issue was not addressed in \cite{lyne2015russian,quiroz2021block}. In particular, the variance might be large if the denominator in \eqref{eq: final_est} is close to zero. The minimal condition for a finite variance of \eqref{eq: final_est} is that $\left|\sum_{i=1}^N s^{(i)}\right| > 0$; however, for reliable performance, we would prefer $\left|\sum_{i=1}^N s^{(i)}\right| \gg 0$, say 
\begin{align}\label{eq:sum away from zero}
\left|\sum_{i=1}^N s^{(i)}\right| > c N, \quad \text{ for some } 0 < c < |\mu|,    
\end{align}
where $\mu = E_{\overline{\pi}}(S)=2\tau-1$, with $\tau = \Pr_{\overline{\pi}}(S = +1)$. To ensure that the sum in \eqref{eq:sum away from zero} is bounded away from zero, it is necessary that $\tau \neq 0.5$, i.e. $\mu \neq 0$ (otherwise $c = 0$).

Theorem \ref{thm: sum far from zero} below shows that the probability of \eqref{eq:sum away from zero} can be made arbitrarily close to 1 by increasing $N$. The proof of the theorem uses the Bernstein-type concentration inequality for Markov chains in \cite{paulin2015concentration}. The following assumptions are made.

\begin{assumption}\label{ass: assumptions}
We assume that:
\begin{itemize}
    \item[(i)] $\{\Btheta^{(i)},\Bu^{(i)},\nu^{(i)}\}_{i=1}^N$ is a realisation of a stationary, reversible Markov chain on $\Omega= \mathbb{R}^{\dim(\Btheta)} \times \mathbb{R}^{\dim(\Bu)} \times \mathbb{R}_{>0}$ with stationary distribution $\overline{\pi}$ in \eqref{eq:absolute_target}.
    \item[(ii)] The Markov chain in (i) has spectral gap  $0 < \delta < 1$.
    \item[(iii)] Define $S: \Omega \rightarrow \Omega^\prime$, $S=\mathrm{sign(\widehat{\pi}(\Btheta, \Bu, \nu|\By))}$ with $\widehat{\pi}$ in \eqref{eq:augmented_posterior_u} and $\Omega^{\prime}= \{-1, +1\}$. Let $\eta$ be a measure on $\Omega^{\prime}$ with
    \begin{align}\label{eq:pibar_induces_eta}
    \eta\left( \{+1\}\right) = \int_{\Btheta}\int_{\nu}\int_{\Bu} \mathbbm{1}\left(S(\Btheta, \Bu, \nu) = +1\right) \overline{\pi}(\Btheta, \Bu, \nu)d\Bu d\nu d\Btheta,
\end{align}
where $\mathbbm{1}(\cdot)$ is the indicator function. We assume that $\eta\left( \{+1\}\right)\neq \eta\left( \{-1\}\right)$, i.e. $\eta\left( \{+1\}\right)=\Pr_{\overline{\pi}}(S=+1)\neq 0.5$, and hence
\begin{align}\label{eq:expected_sign}
\mu &= E_{\overline{\pi}}(S)=2\eta\left(\{+1\}\right)-1 \neq 0. 
\end{align} 
\end{itemize}
\end{assumption}
\begin{remark}
    Note that $\tau$ defined after \eqref{eq:sum away from zero} and $\eta\left( \{+1\}\right)$ are the same, $\tau = \eta\left( \{+1\}\right)$; \eqref{eq:pibar_induces_eta} shows how the measure $\eta$ is induced from $\overline{\pi}$.
\end{remark}
Assumption \ref{ass: assumptions} is justified as follows. Theorem S1(ii) in \cite{quiroz2021block} states that the signed block pseudo-marginal (block introduced below) algorithm converges in total variation norm to $\overline{\pi}$, which justifies the assumed stationary distribution in Assumption~\ref{ass: assumptions}(i). For Assumption ~\ref{ass: assumptions}(ii), loosely speaking, the spectral gap quantifies the rate at which a Markov chain mixes, that is, how rapidly it converges to its stationary distribution; see \cite{levin2009markov}[Ch. 12] for details. The assumption $0<\delta<1$ ensures that we exclude Markov chains that do not mix ($\delta =0$) or are independent $(\delta = 1)$. The justification for excluding the first case is that all chains in our experiments are empirically observed to mix well. The second case is excluded since this implies independent sampling, which is unrealistic.
Assumption~\ref{ass: assumptions}(iii), $\tau =\eta(\{+1\})= \Pr_{\overline{\pi}}(S=+1)\neq 0.5$, is reasonable because it can be imposed via the analytical expression in Lemma \ref{lemma: positive prob}; Section \ref{sec: tuning para} shows that the tuning procedure gives $\tau \gg 0.5$.

\begin{theorem}\label{thm: sum far from zero} Suppose that Assumption \ref{ass: assumptions} holds. Then, for any $\varepsilon > 0$ and any $0<c<|\mu|$, with $\mu$ in \eqref{eq:expected_sign}, there exists a constant $N_0$ such that,
\begin{align}\label{eq:pr_larger_than_cN}
    \textstyle \Pr_{\overline{\pi}}\left(\left|\sum_{i=1}^N s^{(i)}\right| > c N \right) \geq 1 - \varepsilon, \quad \text{for all } N > N_0.
\end{align}
Moreover, the convergence of the probability in \eqref{eq:pr_larger_than_cN} to 1 is exponentially fast in $N$.
\end{theorem}
\begin{remark}
    The constant $N_0$ is computable and depends on $\varepsilon$, $c$, $\tau$, and $\delta$; see the proof of the theorem in Section \ref{app: FiniteSampleBoundsIS} in the supplement for an expression. 
\end{remark}
The concentration result in Theorem \ref{thm: sum far from zero} ensures that, with high probability, the denominator of the estimator in \eqref{eq: final_est} remains bounded away from zero. Consequently, the finite-$N$ estimator $\widehat{E}_{\pi}(\psi(\Btheta))$ avoids divisions by near-zero values with high probability, and exhibits controlled variability in practice. The proof of the theorem is in Section \ref{app: FiniteSampleBoundsIS} in the supplement. Figure \ref{fig:N0_scenarios} illustrates how $N_0$ varies with $c$ for $\varepsilon=0.001$ (ensuring the probability in \eqref{eq:pr_larger_than_cN} is at least 0.999). The left panel corresponds to a well-tuned case ($\tau \gg 0.5$), while the right panel represents a case near the pathological scenario ($\tau \approx 0.5$). Each panel shows the results under three mixing scenarios; see the caption for details. In the well-tuned case, choosing $N\geq 1{,}000$ ($N\gg 1{,}000$ in our examples) comfortably avoids the pathological regime under moderate mixing $\delta=0.30$ (chains mix moderately well in our examples with a random walk proposal). In contrast, in the near-pathological case, the required $N$ to remain far from the pathological regime becomes prohibitively large (note the log-scale on the y-axis), even under strong mixing. This highlights the importance of our tuning strategy to avoid such cases (we ensure $\tau \approx 0.99$ in our examples).

\begin{figure}[ht]
    \centering
    \includegraphics[width=1\linewidth]{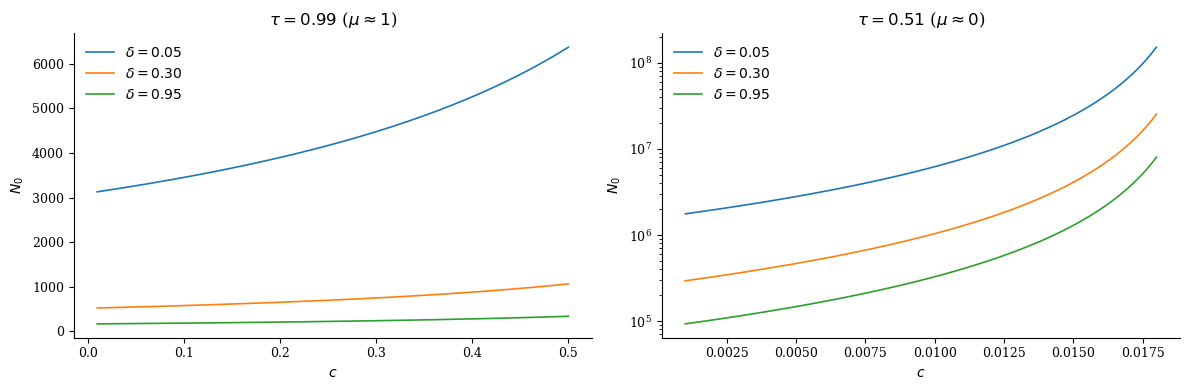}
    \caption{Values of $N_0$ from Theorem \ref{thm: sum far from zero} as a function of $c$ ($0<c<|\mu|$) when $\varepsilon=0.001$. The left and right panels correspond to $\tau=0.99$ and $\tau=0.51$, respectively. Three mixing scenarios for the Markov chain $s^{(i)}$ are shown: $\delta = 0.05$ (slow), $\delta = 0.3$ (moderate), and $\delta = 0.95$ (strong). The right panel uses a log-scale for the vertical axis.}
    \label{fig:N0_scenarios}
\end{figure}

Finally, to make the signed pseudo-marginal algorithm for sampling from \eqref{eq:absolute_target} more efficient, we correlate the estimators at the current and proposed draws to decrease the variability of the difference of the log of the likelihood estimators. This provides a substantial advantage over the standard pseudo-marginal method that proposes $\Bu$ independently in each iteration \citep{deligiannidis2018correlated, tran2016block}. We follow the approach in \cite{tran2016block}, where the correlation is induced by blocking the random numbers and only updating one of the blocks when evaluating the likelihood at the proposed value, while keeping the rest of the blocks fixed. The BP estimator uses the random numbers $\Bu_l$ to estimate $\xi_l$, $l=1,\dots,\lambda$, and group them as $\Bu = (\Bu_1,\dots,\Bu_\lambda) = \Bu_{1:\lambda}$. Note that each $\Bu_l$ includes random numbers of different sizes depending on the value of $\chi_l \sim \mathrm{Pois}(m)$. If the number of blocks $\lambda$ is sufficiently large, the correlation $\rho$ between the logarithms of the likelihood estimators evaluated at the current and proposed draws is approximately $1-1/\lambda$ \citep{quiroz2021block}.  We can adjust the number of blocks $\lambda$ to achieve a prespecified correlation between the log of the likelihood estimates.

\begin{algorithm}
\caption{One iteration of the signed block PMMH update with the BP estimator.}\label{algo: PMMH_DI}
\begin{algorithmic}[1]
\State \textbf{Input:}  Current values of $\nu,\Btheta, \Bu_{1:\lambda}$.
\State \textbf{Output:} Updated values of $\nu,\Btheta,\Bu_{1:\lambda}$ and  $\mathrm{sign}(\widehat{\pi}(\Btheta, \Bu_{1:\lambda}, \nu |\By))$.
\State Generate $\Bu'_{1:\lambda} \gets  \Bu_{1:\lambda}$ from $q(\Bu'_{1:\lambda}|  \Bu_{1:\lambda})$ by updating one block of random numbers. 

\State Generate $\Btheta'$ from $q(\Btheta'|\Btheta)$.

\State Compute the unbiased estimates, $\widehat{Z}(\Btheta')$, and use them to construct the BP estimator via \eqref{eq: block_pois_est}. The proposal distribution of the auxiliary variable $\nu'$ is an exponential distribution with mean $1/\widehat{Z}_P(\Btheta')$ : $$q(\nu'|\Btheta', \Bu') =  \widehat{Z}_P(\Btheta') \exp(-\nu' \widehat{Z}_P(\Btheta')),$$ where $\widehat{Z}_P(\Btheta')$ is the average of the $\widehat{Z}(\Btheta')$s used in the BP estimator.

\State Set $\Btheta \gets \Btheta'$, $\nu \gets \nu'$ and $\Bu_{1:\lambda} \gets \Bu'_{1:\lambda}$ with probability
\begin{equation} \label{eq: ar_PMMH}
    \min \left\{1, 
    \frac{\lvert \widehat{\pi}(\Btheta', \nu', \Bu'_{1:\lambda}|\By) \rvert}{\lvert \widehat{\pi}(\Btheta, \nu,\Bu_{1:\lambda}|\By) \rvert}
    \dfrac{q(\Btheta| \Btheta')q(\Bu_{1:\lambda}|\Bu'_{1:\lambda})}{q(\Btheta'|\Btheta)q(\Bu_{1:\lambda}'|\Bu_{1:\lambda})}
     \dfrac{\widehat{Z}_P(\Btheta)}{\widehat{Z}_P(\Btheta')}
    \dfrac{\exp(-\nu \widehat{Z}_P(\Btheta))}{\exp(-\nu' \widehat{Z}_P(\Btheta'))} \right\}, 
\end{equation}
 where $$\widehat{\pi}(\Btheta,\nu, \Bu_{1:\lambda}|\By) = \widehat{\exp}(-\nu Z(\Btheta)|\Bu_{1:\lambda}) f(\By|\Btheta) \pi(\Bu_{1:\lambda})\pi(\Btheta) p^{-1}(\By),$$ and $\widehat{\exp}(-\nu Z(\Btheta)|\Bu_{1:\lambda})$ is obtained by the BP estimator.  
 
\State Record  $s = \mathrm{sign}(\widehat{\pi}(\Btheta, \Bu_{1:\lambda}, \nu|\By))$  which is also the sign of $\widehat{\exp}(-\nu Z(\Btheta)|\Bu_{1:\lambda})$. 
\end{algorithmic}
\end{algorithm}

Algorithm \ref{algo: PMMH_DI} outlines one iteration of our method when using an exponential proposal for the auxiliary variable $\nu$\footnote{An anonymous reviewer suggested an alternative deterministic proposal for $\nu$, which is outlined in Section \ref{sec: discussion of algo} in the supplement.}. Rewriting
\eqref{eq: ar_PMMH} as\footnote{\citet{tran2016block} show that $\pi(\Bu'_{1:\lambda})q(\Bu_{1:\lambda}|\Bu'_{1:\lambda})=\pi(\Bu_{1:\lambda})q(\Bu'_{1:\lambda}|\Bu_{1:\lambda})$ for a block proposal.}
 \begin{align}
     \frac{\pi(\Btheta') f(\By|\Btheta')}{\pi(\Btheta) f(\By|\Btheta)} \times \dfrac{q(\Btheta|\Btheta')}{q(\Btheta'|\Btheta)} \times \frac{\widehat{Z}_P^{-1}(\Btheta')}{\widehat{Z}_P^{-1}(\Btheta)}  \times \dfrac{|\widehat{\exp}(-\nu'  Z(\Btheta')|\Bu'_{1:\lambda})| /  \exp(-\nu' \widehat{Z}_P(\Btheta'))}{|\widehat{\exp}(-\nu Z(\Btheta)|\Bu_{1:\lambda} )| / \exp(-\nu \widehat{Z}_P(\Btheta))} \notag,
 \end{align}
 we observe that $$\frac{|\widehat{\exp}(-\nu'  Z(\Btheta')|\Bu'_{1:\lambda})| /  \exp(-\nu' \widehat{Z}_P(\Btheta'))}{|\widehat{\exp}(-\nu Z(\Btheta)|\Bu_{1:\lambda} )| / \exp(-\nu \widehat{Z}_P(\Btheta))}$$
 acts as a bias-correction for the bias induced when estimating $Z^{-1}(\Btheta')/Z^{-1}(\Btheta)$ by $\widehat{Z}_P^{-1}(\Btheta')/\widehat{Z}_P^{-1}(\Btheta)$. When forming the $\widehat{Z}_P$ estimators, we recommend using the average of the corresponding $\widehat{Z}(\Btheta)$s used in the BP estimator. This does not affect the unbiasedness property of the BP estimator and is computationally efficient, as the $\widehat{Z}(\Btheta)$s are already computed and the extra cost in obtaining the average is negligible.

\subsection{Tuning the signed block PMMH with the BP estimator}\label{sec: tuning para}

 \cite{pitt2012some} provide guidelines to tune the number of particles, i.e.\ the number of samples used in the likelihood estimation procedure, in a pseudo-marginal algorithm with a positive unbiased estimator to achieve an optimal trade-off between computing time and MCMC efficiency as measured by the integrated autocorrelation time (IACT), also known as the inefficiency factor (IF). Suppose that $\theta^{(j)}$, $j=1,2,\dots$, are the iterates after convergence of the MCMC and let $\vartheta^{(j)} = \psi(\theta^{(j)})$ be a scalar function of the iterates. Let $r_\tau$ be the correlation between $\vartheta^{(j)}$ and $\vartheta^{(j+\tau)}$. In pseudo-marginal algorithms, $r_\tau$ depends on the variance of the log of the likelihood estimator $\widehat{L}$, which we denote by $\sigma^2_{\log \widehat{L}}$. The inefficiency factor is defined as
 $$\mathrm{IF}(\sigma^2_{\log \widehat{L}}) = 1 + 2\sum_{\tau=1}^\infty r_\tau(\sigma^2_{\log \widehat{L}}).$$
A larger $\sigma^2_{\log \widehat{L}}$ results in a stickier chain and thus $\mathrm{IF}(\sigma^2_{\log \widehat{L}})$ is an increasing function of $\sigma^2_{\log \widehat{L}}$; see \cite{pitt2012some} for details. To also take the computing time into account when determining the number of particles to use in the estimation of $\log \widehat{L}$, \cite{pitt2012some} show that the number of particles is inversely proportional to $\sigma^2_{\log \widehat{L}}$ and define the computational time $\mathrm{CT}(\sigma^2_{\log \widehat{L}})=\mathrm{IF}(\sigma^2_{\log \widehat{L}})/\sigma^2_{\log \widehat{L}}$. This measure takes into account both the mixing of the chain (through IF) and the cost of computing the estimator (through the number of particles, which is inversely proportional to $\sigma^2_{\log \widehat{L}}$). \cite{pitt2012some} show that, under certain simplifying assumptions, $\sigma^2_{\log \widehat{L}}\approx 1$ is optimal, and thus the guideline is to choose the number of particles to achieve this.

\cite{quiroz2021block} extend the guidelines in \cite{pitt2012some} to cases when the likelihood estimator is not necessarily positive. The derivation of our guidelines follows those in \cite{quiroz2021block}, with modifications that account for a different estimator. Following \cite{quiroz2021block}, we tune the hyperparameters by minimising the following computational time (CT) 
\begin{equation}\label{eq: CT}
    \mathrm{CT} = m\lambda M \frac{\mathrm{IF}_{\abs{\widehat{\pi}} ,\psi s}\left( \sigma^2_{\log |\widehat{L}_B|} (m, \lambda, M |\gamma)\right)}{(2\tau(m,\lambda,M) -1)^2},
\end{equation}
where the dependence on $\Btheta$ is omitted for $\gamma$ (defined in \eqref{eq: gamma_expression} below) and $\tau(\cdot)=\Pr(\widehat{L}_B(\Btheta) >0)$ in Lemma \ref{lemma: positive prob}. The first term $m\lambda M$ in \eqref{eq: CT} is proportional to the expected cost per iteration since there are $\lambda$ blocks in total and each block includes $m$ estimates on average with $M$ Monte Carlo samples in each. The latter refers to the number of samples used to produce a single $\widehat{B}^{(h, l)}$ in \eqref{eq: block_pois_est}. The denominator in \eqref{eq: CT} shows that it is important to have a large proportion of estimates of the same sign and that having close to half of the estimates is detrimental for the CT. The numerator in \eqref{eq: CT} is the inefficiency factor, which measures the MCMC sampling efficiency of drawing $\psi$'s from the target distribution $|\widehat{\pi}|$. 
\citet[Section S2]{quiroz2021block} derives the specific form of the IF. The IF is determined by the variance of the log of the absolute likelihood estimator $\sigma^2_{\log \abs{\widehat{L}_B}}$ (recall the discussion when tuning using a positive likelihood estimator above) in Lemma \ref{lemma: log variance}, which in turn depends on the hyperparameters $m,\lambda, M$, and in addition $\gamma$. We define $\gamma(\Btheta)$ as the variance of a single Monte Carlo sample, say $-\nu \widehat{Z}_i(\Btheta)$, i.e. such that $\sigma^2_{\widehat{B}}=\gamma(\Btheta)/M$ 
for the Monte Carlo estimate $\widehat{B}=\sum_{i=1}^M-\nu \widehat{Z}_i(\Btheta)/M$ based on $M$ independent Monte Carlo samples. Note that $\gamma(\Btheta)$ does not depend on $M$; however, it depends on $\nu$ (recall $\nu$ is treated as non-random) as
\begin{align}\label{eq: gamma_expression}
    \gamma(\Btheta) = \mathrm{Var}(-\nu \widehat{Z_i}(\Btheta)) &  =  \nu^2 \mathrm{Var} (\widehat{Z_i}(\Btheta)) \nonumber \\ 
    & = \frac{2}{Z(\Btheta)^2}\mathrm{Var}(\widehat{Z_i}(\Btheta)).
\end{align}
Similarly to when we replaced $\nu$ by its expected value when determining $a_{\mathrm{sub}}$ in Section \ref{sec: blockPois}, $\nu^2$ in \eqref{eq: gamma_expression} is replaced by its second moment when tuning the hyperparameters. 

\begin{figure}
    \centering
    \subfloat[\centering $\rho =0$]{\includegraphics[width=0.8\linewidth]{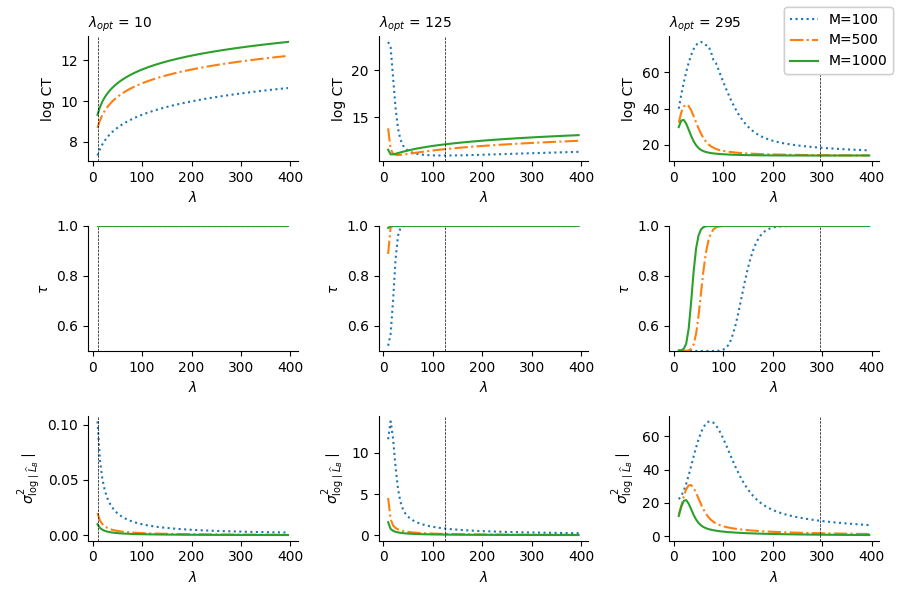} }%
    \qquad
    \subfloat[\centering $\rho = 0.99$]{\includegraphics[width=0.8\linewidth]{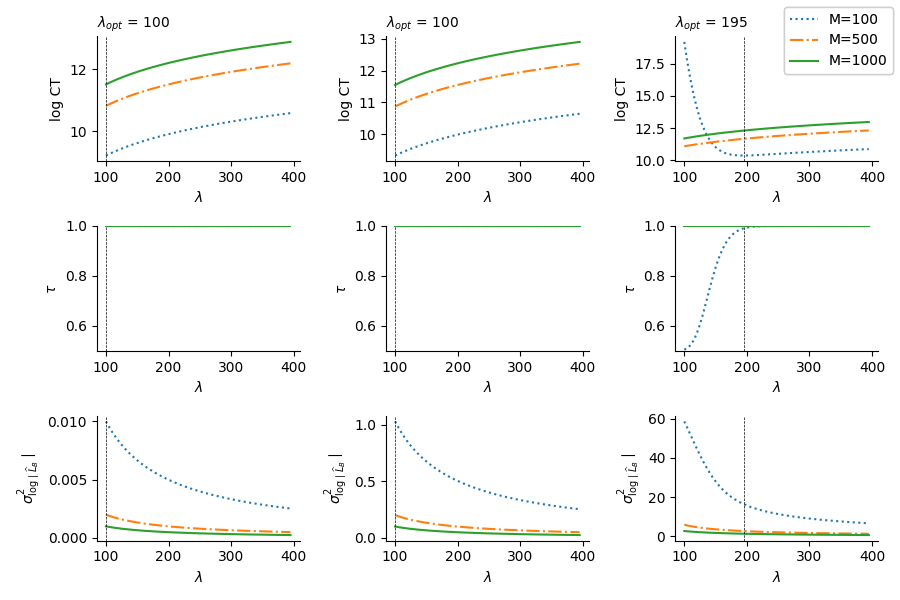}}%
    \caption{The effect of the number of blocks $\lambda$ on the logarithm of CT,  $\tau$ and $\sigma^2_{\log\left|\widehat{L}_B\right|}$. The Poisson parameter $m$ is fixed at 1 for each set of panels (a,b). The correlation term is set to $\rho = 0$ (upper panel), 0.99 (bottom panel).  Columns from left to right correspond to three different settings of $\gamma = 10^2$, $100^2$, and $500^2$. The top, middle and last rows of each panel show the log of the CT in \eqref{eq: CT}, the probability of obtaining a positive estimator $\tau(m,\lambda,M)$ (see Lemma \ref{lemma: positive prob}) and the variance of log of the absolute value of the likelihood estimate (see Lemma \ref{lemma: log variance}). The vertical line on each plot represents $\lambda_{\mathrm{opt}}$, the optimal $\lambda$, which minimises the log of the CT within each of the settings.}\label{fig:CT_dependence}
\end{figure}

Figure \ref{fig:CT_dependence} shows the effects of the number of blocks ($\lambda$) and Monte Carlo samples ($M$) on the logarithm of CT, $\tau$ and  $\sigma^2_{\log \abs{\widehat{L}_B}}$.  We consider the three cases $\gamma = 10^2, 100^2, 500^2$  (left to right columns respectively) which  show that  
the optimal $\lambda$ (corresponding to the minimal CT) varies with different values of $M$ and increases with $\gamma$ (top row).
The minimum CT is associated with a high probability of a positive estimator ($\tau$) (middle row). The last row indicates that $\sigma^2_{\log \abs{\widehat{L}_B}}$ decreases as a function of $\lambda$ for large $\lambda$. Comparing the top nine panels with the bottom nine, a high correlation $\rho = 0.99$, reduces $\lambda_{\mathrm{opt}}$ from 295 (no correlation, $\rho=0$) to 195 for $\gamma = 500^2$. Conversely, $\rho = 0.99$ requires at least 100 blocks. So when the variance $\gamma$ is small, introducing a high correlation increases the CT as more blocks are required compared to the uncorrelated case. Our implementation follows the approach in \cite{tran2016block} which sets the correlation $\rho$ to a value close to 1. Comparing the first row of the top panel (a) in Figure \ref{fig:CT_dependence} with that of the bottom panel (b), shows that a high correlation significantly reduces the CT ($y$-axis is in log-scale) per iteration for large $\gamma$.  

We conclude that the tuning depends on $\gamma$ in \eqref{eq: gamma_expression}. For conservative tuning, we set $\gamma$ to a large value $\gamma_{\mathrm{max}}$ by using a grid search over possible $\Btheta$, meaning the process is calibrated to guard against worst‑case scenarios. Note that grid search scales poorly beyond two dimensions; in such cases, Bayesian optimisation \citep{Shahriari2016bayesopt} offers a practical alternative for the parameter spaces typical of doubly intractable problems. The tuning process starts with fixed values of $\lambda$ and $m$ to find the optimal value for $M$ by minimising \eqref{eq: CT}. In Figure \ref{fig:optimal_M}, we fix the values of $\lambda$ and $m$, with $\lambda = 50, 100$ (the corresponding $\rho$ are 0.98 and 0.99 respectively), and $m = 1$. A standard optimiser is used to find the optimal value $M_{\mathrm{opt}}$ for each of the $\gamma$. The dots in the left panel of Figure \ref{fig:optimal_M} plot $M_{\mathrm{opt}}$ for various values of $\sqrt{\gamma}$\footnote{Taking the square root puts the quantity on the standard deviation scale.}. The figure shows that $M_{\mathrm{opt}}$ increases as a function of  $\sqrt{\gamma}$. The right panel shows how the minimised log CT increases as a function of $\sqrt{\gamma}$. To estimate the relationship between $M_{\mathrm{opt}}$ and $\sqrt{\gamma}$, a quadratic polynomial is fitted to the points in the left panel.

\begin{figure}[ht]
    \centering
    \includegraphics[width=0.8\linewidth]{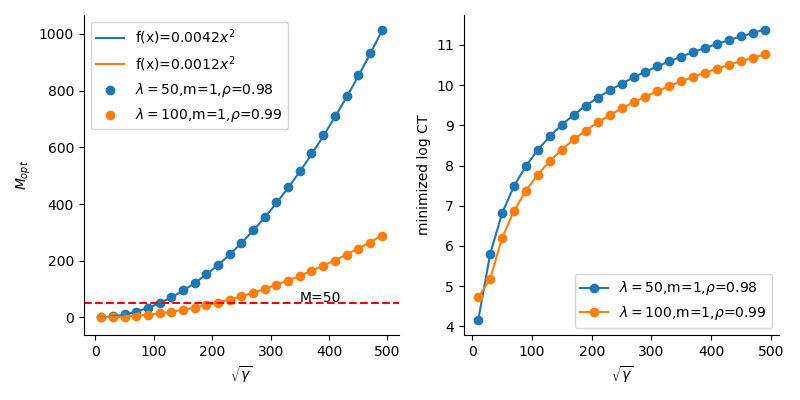}
    \caption{Left panel: The optimal value $M_{\mathrm{opt}}$ as a function of $\sqrt{\gamma}$. The lines are quadratic polynomials fitted to the scattered dots. The horizontal dashed line represents the threshold $M=50$, which is the minimal number of blocks required in the algorithm.  Right panel: The minimised log CT as a function of $\sqrt{\gamma}$. }
    \label{fig:optimal_M}
\end{figure}

The tuning below is based on $\gamma_{\mathrm{max}}$, leading to a conservative tuning of $M_{\mathrm{opt}}$. 
\begin{enumerate}
    \item [1.] Obtain a rough understanding of the support of the posterior distribution of $\Btheta$. Section \ref{subsec:bias_corrected} in the supplement proposes an approximate method that relies on a normality assumption of $\widehat{Z}(\Btheta)$, which can be used in the tuning phase. Alternatively, we can apply another approximate method tailored for the specific problem (e.g.\ the variational mean-field approximation of the Ising model in \citealp{Jain2018meanfield}). It is also possible to optimise the posterior distribution by plugging the biased estimator ($1/\widehat{Z}(\Btheta)$). 
    \item [2.] Estimate $\gamma(\Btheta)$ in \eqref{eq: gamma_expression} for the $\Btheta$ candidates from Step 1. The estimator $\widehat{Z_i}(\Btheta)$ replaces the unknown $Z(\Btheta)$.  
    \item [3.] Obtain the maximum value $\gamma_{\mathrm{max}}(\Btheta)$ of $\gamma(\Btheta)$ from Step 2. A sensible first tuning attempt sets $\lambda = 100, m= 1, \rho = 0.99$ and $M_{\mathrm{opt}} = \max \{50, 0.0012 \times \gamma_{\max}(\Btheta) \}$. 

    If  $\gamma_{\max}(\Btheta)$ is small to moderately large, e.g.\ $\gamma_{\max}(\Btheta) < 100^2$, having many blocks increases CT (e.g.\ top left and top middle panels in Figure \ref{fig:CT_dependence}). In this case, a weaker correlation also produces an efficient algorithm with smaller CT. A suitable setting in this scenario is $\lambda = 50, m = 1, \rho = 0.98$ and $M_{\mathrm{opt}}=\max\{50, 0.0042 \times \gamma_{\max}(\Btheta)\}$.
    
    For an even smaller $\gamma(\Btheta)$, the correlation can be relaxed further. In the Ising model example, $\lambda = 10$ is sufficient when the variability is low; see Section \ref{sec: ising model} and Section \ref{app: ising model} in the supplement.
    
\end{enumerate}

The tuning is theoretically sub-optimal as it is based on simplifying assumptions in its derivation and practical implementation. Section \ref{subsec:simplified_lb} in the supplement presents a simulation study demonstrating that tuning based on the simplified lower bound closely approximates the intractable one, which is reassuring. In our experience, the main objective of the tuning is met by our guidelines, namely: avoiding the pseudo-marginal chain getting stuck. Finally, we note that the
algorithm is still exact under the simplified tuning.

\section{Simulation study} \label{sec: demo}

This section empirically compares the performance of our method on two simulated examples with some competing methods. All our comparisons are on an empirical level rather than a theoretical level. Theoretical comparisons are either impossible because the competing methods are analytically intractable or outside the scope of this paper. Examples of the former are that expressions of the variance of the log of the absolute value or the probability of a positive estimator are not readily available for Russian roulette estimators. An example of the latter includes extending the ordering of pseudo-marginal MCMC chains \citep{andrieu2016establishing} to a signed pseudo-marginal setting.

The first example is the Ising model, which is the usual benchmark example for doubly intractable problems, as perfect sampling is efficient for this model on small grids, and thus the exchange algorithm \citep{murray2006mcmc} can provide a ground truth. We use the same settings as the survey paper \cite{park2018bayesian}, in which the Ising model serves as the main benchmark example. The second example considers the Kent distribution, where the intractable normalising function is an infinite sum. Unlike the Ising model, efficient perfect sampling is hard for this model, and thus the exchange algorithm fails. Section \ref{app: ex and rr intro} in the supplement provides the implementation details for the exchange algorithm. To the best of our knowledge, exact Bayesian inference has not been considered for the Kent distribution due to its intractability. 

Our method, abbreviated BP, is a correlated signed pseudo-marginal method that utilises the block-Poisson (BP) estimator and is compared to the signed pseudo-marginal methods introduced in \citet{lyne2015russian}. Two implementations of the latter are considered, which differ by their use of auxiliary variables and what form of the normalising function they estimate. RR-aux, uses a Russian roulette (RR) estimator of the exponent of the normalising function and requires auxiliary variables (-aux) to turn the reciprocal into an exponent as in \eqref{eq:augmented_posterior}. RR, on the other hand, estimates the reciprocal directly and does not require auxiliary variables. Section \ref{app: ex and rr intro} in the supplement provides the implementation details for both Russian roulette methods.

Table \ref{tab:methods summary} lists the requirements and features of the exact methods considered in our paper. The multiple observations case is relevant for the Kent distribution example, and Section \ref{eq:Kent_posterior} in the supplement discusses its scalability for the different estimators. Section \ref{app:computational_aspects} in the supplement provides a detailed computational analysis, including the structural differences between the BP, RR, and RR-aux estimators, and the derivation of their arithmetic complexities. In summary, the BP estimator offers computational advantages due to its independent product structure that enables efficient parallelisation and vectorisation via single instruction, multiple data (SIMD) instructions \citep[Ch. 4]{warne2022BAtutorial,hennessy2011computer}. The Russian roulette estimators RR and RR-aux, in contrast, involve a sum of dependent, nested products, and thus cannot exploit parallelisation or vectorisation as effectively, due to their sequential computation pattern. As shown in Section \ref{app:computational_aspects}, achieving a similar parallelism as BP by recomputing each nested product---thereby creating an independent sum---incurs a quadratic computational cost in the number of sum terms, compared to the linear cost of the BP product. 

Finally, we emphasise again that the signed pseudo-marginal methods are exact in providing simulation-consistent estimates of expectations under the exact doubly intractable posterior. By contrast, the exchange algorithm provides the usual exact inference MCMC methods do, i.e.\ samples from the invariant distribution (the posterior density) after burn-in. However, for a finite number of MCMC iterations, the exchange algorithm might not have properly converged due to mixing problems, which can arise if sampling from the likelihood is inefficient. In the Ising example, this occurs for large grids, and in such cases, signed pseudo-marginal algorithms are tractable alternatives \citep{lyne2015russian}. For the Kent distribution example, we find that the accept/reject method in \cite{kent2013new} does not give efficient sampling from the likelihood in any of our settings.

\begin{table}[ht]
\centering
\begin{tabular}{p{6cm}cccc}
\toprule
Requirements and features  / Method & BP  & RR &  RR-aux &  Exchange \\
         \midrule
Auxiliary variable(s) required &  Yes  & No  & Yes  &Yes  \\
\makecell[{{l}}]{Sampling from the likelihood without \\ knowing the normalising function} & No  & No & No & Yes\\
\makecell[{{l}}]{Correlated pseudo-marginal (PM) \\
\,\,\,\,\,\,\,\,\,\,\,(only PM methods)} & Yes  &  No& No & NA \\
\makecell[{{l}}]{Estimator scales with multiple\\ observations} & Yes  & No & Yes & NA \\
\makecell[{{l}}]{Estimator utilises vectorisation \\ and parallelisation} & Yes & No & No & NA\\
\bottomrule \\
\end{tabular}
\caption{Requirements (first 2) and features (bottom 3) for the methods. BP= block-Poisson, RR = Russian roulette, RR-aux = Russian roulette with an auxiliary variable. NA stands for not available.}
    \label{tab:methods summary}
\end{table}

\subsection{The Ising model} \label{sec: ising model}
The Ising model \citep{lenz1920beitrvsge, ising1925beitrag} has widespread applications, such as understanding phase transitions in thermodynamic systems \citep{fredrickson1984kinetic}, interactive image segmentation in vision problems \citep{kolmogorov2004energy} and modelling small-world networks \citep{herrero2002ising}. It is the typical benchmark example in the literature to evaluate methods for tackling the doubly intractable problem; see e.g.\  \cite{moller2006efficient, lyne2015russian, atchade2013bayesian, park2018bayesian}. However, most of the existing methods use auxiliary variable approaches, as it is feasible to draw observations from the likelihood function perfectly, so-called perfect sampling, for moderately small grids. The signed pseudo-marginal methods, such as RR and our approach, do not require perfect sampling, which makes them applicable to more general problems. \cite{lyne2015russian} report that their pseudo-marginal approach handles larger grids than the exchange algorithm.

Recall Example \ref{Ex: Ising} in Section \ref{sec: doubly intract prob}: consider an $L \times L$ lattice with binary observations $y_{ij}$ of row $i$ and column $j$  ($y_{ij} \in \{ -1,1\})$. The model is
\begin{align}
\begin{aligned}
p(\By|\theta) &= \frac{1}{Z(\theta)}\exp ( \theta S(\By)),\\
 \text {with } S(\By) &= \sum_{i=1}^{L} \sum_{j=1}^{L-1} y_{i,j} y_{i,j+1} + \sum_{i=1}^{L-1} \sum_{j=1}^L y_{i,j} y_{i+1,j}\\
\text{and } Z(\theta)  &= \sum_{\By} \exp(\theta S(\By));\notag
\end{aligned}
\end{align}
$\theta$ is a scalar parameter and $S(\By)$  imposes spatial dependence; a stronger interaction between observations is associated with a larger $\theta$. Obtaining $Z(\theta)$ is computationally expensive with a sum over $2^{L^2}$ possible configurations. 
The data simulations are conducted using perfect sampling \citep{propp1996exact}, which samples exactly without evaluating the normalising function. Perfect sampling uses coupling to guarantee that the samples are generated from a Markov chain which has already converged to its equilibrium distribution. Following the settings in \cite{park2018bayesian}, two scenarios are considered on a $10 \times 10$ grid, with $\theta = 0.2$ and $ 0.43$; see Figure \ref{fig:ising model demo} for an illustration.

\begin{figure}[ht]
\centering
\includegraphics[width=0.8\linewidth]{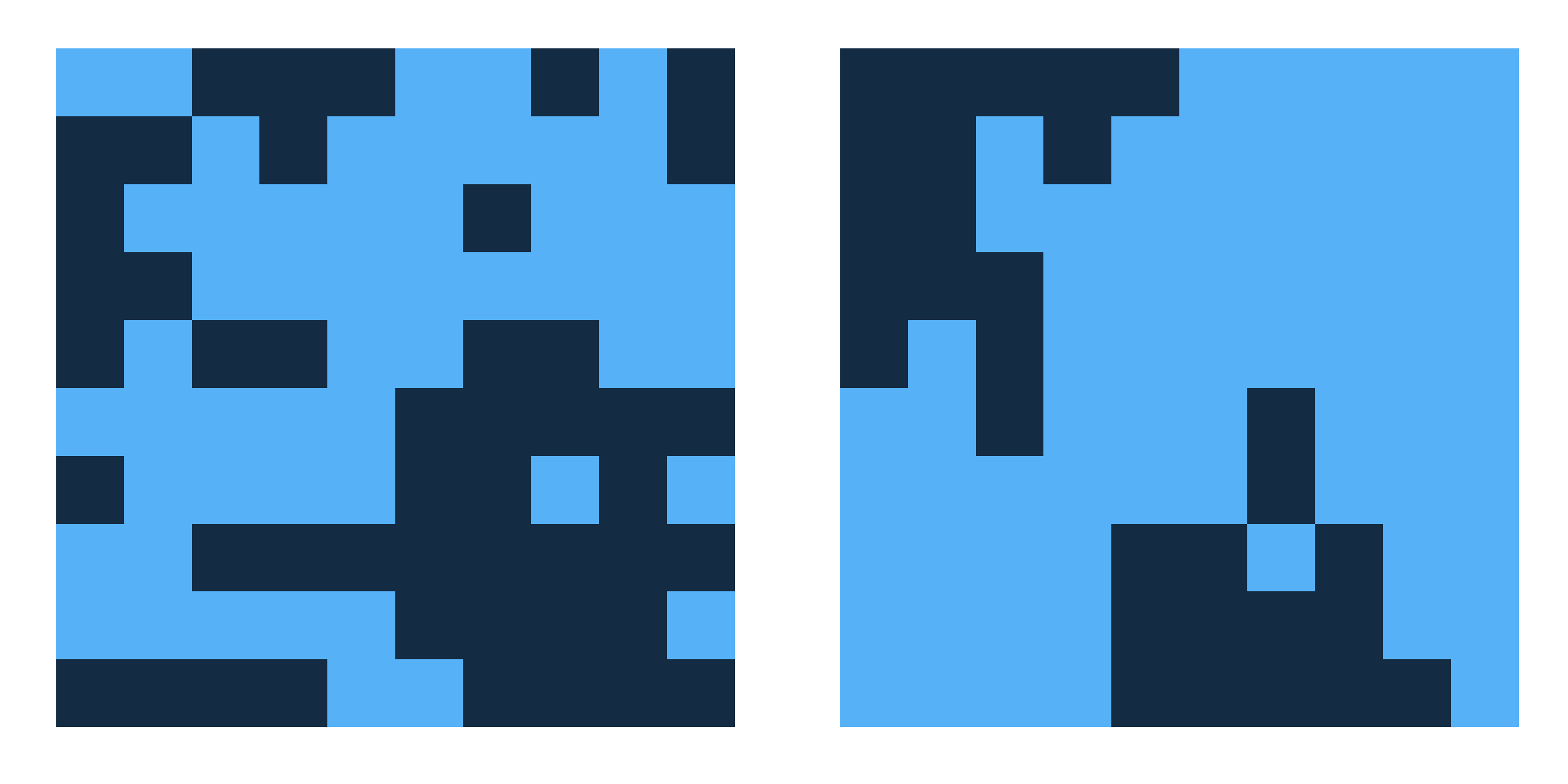}
\caption{Illustrating an Ising model on a $10\times 10$ grid. The samples are drawn using perfect sampling with $\theta=0.2$ (left) and $\theta=0.43$ (right). The light and dark blue squares correspond to the values 1 and $-1$. }
\label{fig:ising model demo}
\end{figure}

For all the algorithms considered, a uniform distribution on $[0,1]$ is selected as the prior for $\theta$. We adopt a random walk proposal centred at the current  $\theta$ with a step size $0.07$. The signed pseudo-marginal methods (RR, RR-aux, BP) require an unbiased estimator for $Z(\theta)$. We use annealed importance sampling (AIS) \citep{neal2001annealed} to obtain the estimate of $Z(\theta)$. The method starts by sampling from a tractable distribution  (the prior) and ends at the intractable target (the posterior) via a sequence of intermediate distributions.  The transitions between the distributions are completed via Gibbs updates, and the weights associated with the transitions finally constitute the normalising function of interest; see \cite{neal2001annealed} for details of AIS in general and Section \ref{app: ising model} in the supplement for its implementation for the Ising model.

To obtain the ``gold'' standard to evaluate the accuracy of the results,  we follow \cite{park2018bayesian}, where an exchange algorithm with $1{,}010{,}000$ iterations is performed. The first $10{,}000$ iterations are discarded for burn-in and the remaining iterates are thinned so that $10{,}000$ posterior samples remain. We use the same set of hyperparameters in RR and RR-aux; see details in Section \ref{app: ising model} in the supplement. For the tuning of the BP method, the two scenarios $\theta=0.2$ and $\theta=0.43$ result in different $\gamma_{\max}$ values, with the former yielding a smaller and the latter a larger value for the guidelines. We find that when $\theta = 0.2$, the AIS method gives a sufficiently low value for $\gamma_{\mathrm{max}}$ so that $\lambda = 10$ is appropriate.  When $\theta = 0.43$, on the other hand, the strong dependence leads to higher variability in $\widehat{Z}(\theta)$ (see Section \ref{app: ising model} in the supplement). We therefore increased the number of blocks to 50 for the BP estimator as per the tuning guidelines. To ensure a fair comparison, we also increased the number of particles (number of samples) in the importance samplers of AIS from 100 to 500 for RR and RR-aux to decrease the variance.

\begin{table}
\centering
\small{
\begin{tabular}{rccccccc}  
\toprule
\multicolumn{7}{c}{$\theta_{\mathrm{true}}$ = 0.2}\\
\midrule
Method & Mean & 95\%HPD & IACT & Time(s) & ESS/s &$\lambda$ & $\#$ particles \\ 
\midrule
\makecell[{{r}}]{Gold \\ (Exchange)} & 0.205 & (0.075, 0.337) & 1 & -& - & - & - \\ 
BP & 0.203 & (0.075, 0.334) & 8.39 & 606 & 3.9 & 10 & 100 \\  
RR & 0.199 & (0.072, 0.324) & 7.59 & 9,825 & 0.27 & - & 100 \\ 
RR-aux & 0.199 & (0.069, 0.332) & 9.16 & 7,447 & 0.32 & - & 100\\
\midrule
\multicolumn{7}{c}{$\theta_{\mathrm{true}}$ = 0.43}\\
\midrule
Method & mean & 95\%HPD & IACT & time(s) & ESS/s &$\lambda$ & $\#$ particles \\ 
\midrule
\makecell[{{r}}]{Gold \\ (Exchange)} & 0.433 & (0.330, 0.533) & 1.04 & - & - & - &-\\ 
BP & 0.435 & (0.324, 0.540) & 7.30 & 4,072 & 0.67 &50 & 100 \\ 
RR & 0.435 & (0.330, 0.548) & 7.16 &  10,016 & 0.29 &- & 500 \\ 
RR-aux & 0.633 & (0.633, 0.633) & NA &  87,525 & NA &- & 500 \\ 
\bottomrule
\\
\end{tabular}}
\caption{Results for the Ising model. All the chains, except for the ``gold standard'', ran for 20{,}000 iterations using the algorithms (Gold=exchange algorithm, BP= block-Poisson, RR = Russian roulette, RR-aux = Russian roulette with an auxiliary variable). For BP, RR and RR-aux, the mean estimates are corrected for the negative estimates using \eqref{eq: final_est}. The highest posterior density (HPD) intervals are calculated by the \texttt{coda} package in \texttt{R}. The IACT calculation is based on all the samples as the chains start at the true value. For BP, RR and RR-aux, the calculation of the IACT accounts for the negative estimates via  \eqref{eq: CT}. Time denotes the running time in seconds. ESS/s is the effective sample size per second.  For BP, $\lambda$ refers to the number of blocks; $\#$ particles is the number of particles (number of samples) used in the AIS. NA indicates not available due to sampler nonconvergence.} \label{table: ising inference}
\end{table}

Table \ref{table: ising inference} summarises the simulation results. When $\theta=0.2$, all the estimates are close to those of the gold standard, which is expected since the methods are simulation-consistent and no major mixing problems are encountered. The BP method has the smallest computing time and the second smallest IACT, and obtains a factor of roughly 14$\times$ improvement in terms of effective sample size per unit of computing time (in sec) compared to the competing Russian roulette approaches. When $\theta=0.43$, the results of the BP and RR match well with those of the gold standard; however, not for RR-aux due to severe mixing issues; the chain gets stuck after around 1,000 iterations as reported in Figure \ref{fig:Ising_traceplots} in the supplement. We were unable to find settings for RR-aux to work when $\theta= 0.43$, which emphasizes the importance of having tuning guidelines. In this example, BP is a factor of roughly 2$\times$ more efficient than RR in terms of effective sample size per computing time.

\subsection{The Kent distribution}\label{subsec: kent_dist}

Directional statistics involves the study of density functions defined on unit vectors in the plane or sphere. The Kent distribution, also known as the 5-parameter Fisher-Bingham distribution ($\mathrm{FB}_5$),  is an analogue to the bivariate normal distribution to model asymmetrically distributed data on a spherical surface \citep{kent1982fisher}. It has 5 parameters: $\boldsymbol{\gamma}_1, \boldsymbol{\gamma}_2,\boldsymbol{\gamma}_3,\beta$, and $\kappa$, where $\boldsymbol{\gamma}_1,\boldsymbol{\gamma}_2,\boldsymbol{\gamma}_3$ form a 3-dimensional orthonormal basis, representing the mean, major and minor axes; $\kappa$ is the concentration parameter,  and $\beta$ is a measure of its eccentricity, with the constraint $0 \leq \beta < \kappa/2$ to ensure that the distribution is unimodal. 

Recall Example \ref{Ex: Kent} in Section \ref{sec: doubly intract prob}: the density of the Kent distribution is 
\begin{equation*}
     f(\By|\boldsymbol{\gamma_1},\boldsymbol{\gamma_2},\boldsymbol{\gamma_3},\beta,\kappa) = \frac{1}{c(\kappa,\beta)} \exp\left\{\kappa \boldsymbol{\gamma_1}^\top \cdot \By + \beta \left[(\boldsymbol{\gamma_2}^\top \cdot \By)^2 - (\boldsymbol{\gamma_3}^\top \cdot\By)^2\right]\right\},
\end{equation*}
where $\By \in \mathbb{R}^3, \| \By\|  = 1$. The normalising function is
\begin{equation*}
c(\kappa, \beta) = 2\pi \sum_{j=0}^{\infty} \dfrac{\Gamma(j+0.5)}{\Gamma(j+1)} \beta^{2j} (0.5\kappa)^{-2j-0.5} I_{2j+0.5}(\kappa),  
\end{equation*}
where $I_\nu(\cdot)$ is the modified Bessel function.

The normalising function is an intractable infinite sum. Due to the complex form of the density function,  \cite{kent1982fisher} proposes a consistent moment estimator of the parameters. The moment estimation of $\boldsymbol{\gamma}_i, i =1,2,3 $ is independent of $\beta$ and $\kappa$. Estimating $\beta$ and $\kappa$ requires an approximation that utilises the limiting case when $2\beta/\kappa$ is small or $\kappa$ is large, provided that the moment estimates of the $\boldsymbol{\gamma}_i$ are available.  Alternatively, $\kappa$ and $\beta$ can be obtained numerically. \cite{kume2005saddlepoint} adopt saddle point techniques to obtain the approximation for the normalising function. \cite{kasarapu2015modelling} uses the Bayesian framework to model a mixture of $\mathrm{FB}_5$ distributions. The infinite sum in $c(\kappa,\beta)$ is truncated in the sense that the successive term to be added is less than a prefixed threshold. However, this approach induces a bias in the computed distribution relative to the true posterior, i.e.\ this approach is not simulation-consistent. In contrast, the signed block PMMH with the BP, RR and RR-aux estimators provide exact Bayesian inference for the parameters.

We use the approach by \cite{papaspiliopoulos2011monte} to obtain an unbiased estimator for $c(\kappa,\beta)$.  Rewrite $c(\kappa,\beta) $ as $ \sum_{j=0}^{\infty} \phi_j (\kappa,\beta)$; then the estimator  $\widehat{c}(\kappa,\beta) = \phi_k/q_k$ is unbiased, where $k$ is a non-negative discrete random variable with probability mass function $q_k$. Either a Poisson or a geometric distribution is suitable, as $k$ is a non-negative integer.  It is straightforward to verify that $E(\widehat{c}(\kappa,\beta)) = \sum_k \phi_k/q_k \times q_k = c(\kappa,\beta)$. As $\phi_j(\kappa,\beta)$ is a decreasing function in $j$, to reduce the variability, we compute the first $K$ terms exactly and perform a truncation of the remaining terms. Specifically,  $c(\kappa,\beta)$ is decomposed as $$\sum_{j=0}^{K-1} \phi_j (\kappa,\beta) + \sum_{j=K}^{\infty} \phi_j (\kappa,\beta). $$ 
The first sum is evaluated, and the second is estimated via the truncation procedure described above.

We apply the parameterisation in \cite{kasarapu2015modelling}, where the orthonormal basis $\boldsymbol{\gamma}_1,\boldsymbol{\gamma}_2,\boldsymbol{\gamma}_3$ is reparameterised as $\psi \in [0,\pi], \alpha \in [0,2\pi], \eta \in [0,\pi]$. An adaptive Gaussian random walk proposal is used for all the parameters with the algorithm proposed in \cite{garthwaite2016adaptive}. To achieve this, we further transform $\psi,\alpha,\eta$ into $\psi^*, \alpha^*, \eta^*$ which take unconstrained values using the transformations
$$\psi^* = \log\left( \frac{\psi}{\pi - \psi} \right);  \alpha^* = \log \left( \frac{\alpha}{2\pi - \alpha} \right) \quad {\rm and} \quad \eta^* = \log \left( \frac{\eta}{\pi - \eta }\right).$$ 
We also work with the logarithms of $\beta$ and $\kappa$ as they are unconstrained. We follow \cite{dowe1996mml} and set the prior for $\kappa$ as $4\kappa^2/\pi(1+\kappa^2)^2$. For a given $\kappa$, the prior for $\beta $ is uniform on $[0,\kappa/2)$. The priors for $\psi$, $\alpha$ and $\eta$ follow \cite{kasarapu2015modelling}. The joint prior on all the parameters, $\psi,\alpha,\eta,\beta$ and $\kappa$ is
\begin{equation*}
\pi(\psi,\alpha,\eta,\beta,\kappa) = \dfrac{2\kappa \sin \alpha}{\pi^3 (1+\kappa^2)^2} \mathbbm{1}(0\leq 2\beta/\kappa < 1).
\end{equation*}

In the simulation, we generate $n$ observations $\By$ from $\mathrm{FB}_5,$  i.e.\ $\By_1, \dots \By_n$, for a variety of settings for $\beta$ and $\kappa$. This is different from the Ising model, in which only a single data observation (matrix) $\By$ is generated. Section \ref{eq:Kent_posterior} in the supplement outlines the resulting expression of the doubly intractable posterior. The data generation is performed by the R package \texttt{Directional}, which implements the acceptance-rejection method in \cite{kent2013new}. We set $n = $ 10, 100, 1{,}000 in combination with $\beta/\kappa  = $ 0.01, 0.25, 0.49, with $\kappa$ fixed as 5. The lower and upper bounds for $\beta/\kappa$ are 0 and 0.5 to ensure the unimodality of the data \citep{kent1982fisher}. In addition to the Bayesian methods (BP, RR, RR-aux and Exchange) and the moment estimation method, we also consider the maximum likelihood estimation (MLE) method, which uses the saddle point technique \citep{kume2005saddlepoint} to approximate the likelihood. We refer the reader to Section \ref{app: kent distr} in the supplement for the implementation details.  

\begin{table}[]
    \centering
    \small{\begin{tabular}{r l l r|l l r|l l r}
    \toprule
               & \multicolumn{9}{c}{$\beta/\kappa$ = 0.01} \\
           \midrule
               & \multicolumn{3}{c}{$n=10$} & \multicolumn{3}{c}{$n=100$} &\multicolumn{3}{c}
 {$n=1{,}000$}\\
  & $\mathrm{RMSE}_\kappa$ & $\mathrm{ESS}_{\kappa} $ & $\mathrm{ESS}_{\kappa}$/s & $\mathrm{RMSE}_\kappa$ & $\mathrm{ESS}_{\kappa} $ & $\mathrm{ESS}_{\kappa}$/s   & $\mathrm{RMSE}_\kappa$ & $\mathrm{ESS}_{\kappa} $ & $\mathrm{ESS}_{\kappa}$/s \\
    \cmidrule(lr){2-4}\cmidrule(lr){5-7}\cmidrule(lr){8-10} 
BP& 2.46 & 130.9 &14.4 & 0.51 & 104.1 &11.3 & 0.17 & 137.0 &14.0\\
RR& 2.89 & 13.0 &0.3 & 0.41 & 4.1 & $<$0.1 & 0.27 & 3.8 & $<$0.1\\
RR-aux& 2.44 & 156.2 &13.7 & 0.51 & 138.2 &4.2 & 0.17 & 44.0 &0.8\\
Moment& 4.17 & - &- & 0.53 & - &-& 0.17 & - &-\\
MLE& 4.40 & - &- & 0.54 & - &-& 0.18 & -&-\\
Exch.& 4.60 & 88.4 &7.2 & 0.59 & 96.6 &6.8 & 0.21 & 87.1 &3.7\\
\midrule
           & \multicolumn{9}{c}{$\beta/\kappa$ = 0.25} \\
           \midrule
               & \multicolumn{3}{c}{$n=10$} & \multicolumn{3}{c}{$n=100$} &\multicolumn{3}{c}
 {$n=1{,}000$}\\
  & $\mathrm{RMSE}_\kappa$& $\mathrm{ESS}_{\kappa}$ &$\mathrm{ESS}_{\kappa}$/s  & $\mathrm{RMSE}_\kappa$& $\mathrm{ESS}_{\kappa}$ &$\mathrm{ESS}_{\kappa}$/s  & $\mathrm{RMSE}_\kappa$& $\mathrm{ESS}_{\kappa}$ &$\mathrm{ESS}_{\kappa}$/s \\
    \cmidrule(lr){2-4}\cmidrule(lr){5-7}\cmidrule(lr){8-10} 
BP & 2.04 & 118.5 &13.1 & 0.54 & 120.2 &13.2 & 0.18 & 133.4 &13.6\\
RR & 2.19 & 15.5 &0.4 & 0.66 & 4.1 & $<0.1$ & 0.45 & 2.1 &$<0.1$\\
RR-aux& 2.02 & 150.7 &13.4 & 0.54 & 162.4 &5.0 & 0.17 & 30.2 &0.5\\
Moment& 3.45 & - &- & 0.53 &-&- & 0.18 & - &-\\
MLE & 4.04 & - &- & 0.59 & - &- & 0.26 & - &-\\
Exch.& 4.94 & 89.0 &7.1 & 2.35 & 82.8 &5.6 & 1.94 & 70.5 &2.9\\
\midrule
           & \multicolumn{9}{c}{$\beta/\kappa$ = 0.49} \\
           \midrule
               & \multicolumn{3}{c}{$n=10$} & \multicolumn{3}{c}{$n=100$} &\multicolumn{3}{c}
 {$n=1{,}000$}\\
  & $ \mathrm{RMSE}_\kappa$& $\mathrm{ESS}_{\kappa}$ &$\mathrm{ESS}_{\kappa}$/s   & $\mathrm{RMSE}_\kappa$& $\mathrm{ESS}_{\kappa}$ &$\mathrm{ESS}_{\kappa}$/s   & $\mathrm{RMSE}_\kappa$& $\mathrm{ESS}_{\kappa}$ &$\mathrm{ESS}_{\kappa}$/s  \\
    \cmidrule(lr){2-4}\cmidrule(lr){5-7}\cmidrule(lr){8-10} 
BP& 1.67 & 99.6 &11.0 & 0.57 & 107.6 &11.8 & 0.18 & 69.6 &7.1\\
RR& 2.07 & 10.2 &0.2 & 0.64 & 3.7 & $<$0.1 & 0.59 & 3.9 & $<$0.1\\
RR-aux& 1.68 & 111.4 &10.0 & 0.56 & 150.7 &4.6 & 0.21 & 12.9 &0.2\\
Moment& 2.82 & - &- & 0.56 & - & - & 0.48 & - &-\\
MLE& 3.75 & - &- & 0.68 & - & - & 0.21 & - &-\\
Exch.& 5.61 & 75.3 &6.0 & 5.03 & 109.0 &7.1 & 4.36 & 103.4 &4.0\\
\bottomrule
\\
    \end{tabular}}
\caption{Simulation results when estimating $\kappa$ using 100 independent replications from an $\mathrm{FB}_5$ distribution. The RMSE numbers are with respect to the true value ($\kappa = 5$). The $\mathrm{ESS_\kappa}$/s is the effective sample size (altered with sign correction) per second. }
\label{tab:FB5_simulation}
\end{table}

Table \ref{tab:FB5_simulation} shows the root mean squared error (RMSE), effective sample size (ESS), and ESS per second (ESS/s), for all methods based on 100 independent data replicates when estimating $\kappa$. The corresponding tables for $\beta$ and $\beta/\kappa$ are provided in Section \ref{app: results of FB5_sim} in the supplement. For BP, RR and RR-aux, the RMSE is calculated using the posterior mean with the sign correction in \eqref{eq: final_est}. We now summarise the results for each metric.
 
For the $\mathrm{RMSE}$ for $\kappa$, the BP and RR-aux perform equally well in all cases of $\beta/\kappa$ with minor differences. The RR has a higher RMSE compared to BP and RR-aux for all but one case ($n=100$ and $\beta/\kappa=0.01$), ranging from 5\% to as large as 250\%, attributable to its poor mixing (low $\mathrm{ESS}_\kappa$) reported in the next paragraph. Note that the exchange method, which serves as the gold standard in the previous example, has the highest RMSE in all cases but one ($n=1{,}000$ and $\beta/\kappa=0.01$). This is likely attributed to the difficulty of efficiently simulating from the likelihood in the Kent distribution. We apply the \texttt{rkent} function of the \texttt{Directional} package, which uses the acceptance-rejection method \citep{kent2013new}. Both the moment estimation method and the MLE estimation method behave similarly: when $n=10$, they produce a large $\mathrm{RMSE}_\kappa$, but as $n$ increases to 100 or 1,000, their RMSE decreases and approaches those of Bayesian methods. The exception is the boundary case $\beta/\kappa = 0.49$, where the moment estimation method performs poorly; \cite{kent1982fisher} note that the moment estimator should avoid the limiting case $\beta/\kappa =0.5$. Figure \ref{fig:kent_uni_and_bimodal} in the supplement shows examples of the Kent distribution near the limiting case on both sides, i.e.\ $\beta/\kappa \lessgtr 0.5$.

For the $\mathrm{ESS}_{\kappa}$,  RR-aux and BP have the largest values. RR-aux has larger values than BP for a small number of observations ($n=10, 100$), ranging from approximately 10-20\%. However, when $n=1{,}000$, the $\mathrm{ESS}_{\kappa}$ of RR-aux shrinks drastically, and now BP has larger values than RR-aux, ranging from approximately 300-550\%. RR has the lowest ESS across all settings. The variance of the log of the absolute value of the likelihood estimates is a key factor determining the ESS, and a decline in ESS is attributed to an increase in this variance (assuming the same proposal and a similar ratio of negative signs). We note that the tuning guidelines help the BP method, especially when $n = 1{,}000$ and the variance of the log of the absolute value of the estimator variance of RR and RR-aux is large (evidenced by low ESS); the method maintains a relatively high ESS. The exchange method exhibits a lower effective sample size than the signed pseudo-marginal methods (except RR) in most cases.

Finally, the $\mathrm{ESS}_{\kappa}/s$ incorporates the time taken to produce the corresponding $\mathrm{ESS}_{\kappa}$ results. The BP method is clearly better in this metric: its $\mathrm{ESS}_\kappa/s$ is substantially higher than that of both Russian‐roulette variants because BP is suitable for parallelisation and vectorisation, which markedly reduces runtime. In contrast, the structure of the RR and RR-aux estimators cannot readily employ these features, see Section \ref{app:computational_aspects} in the supplement for details. In addition, RR performs worse than RR-aux with respect to computing time because it does not scale to multiple observations (recall Table \ref{tab:methods summary}). Note that as the sample size increases, the $\mathrm{ESS}_\kappa/s$ gap between BP and the RR methods becomes even more pronounced, with factors of roughly $17\times$--$35\times$ improvement compared to RR-aux, and more than $70\times$--$140\times$ compared to RR when $n=1{,}000$.

To conclude, among the Bayesian methods, the top-performing method is BP. RR-aux is an alternative to BP, albeit at a significantly higher computational cost, hence leading to a much lower ESS/s, especially for larger $n$. Both RR and Exchange fail to deliver competitive results. RR is unsuitable for an $\mathrm{FB_5}$ distribution with multiple observations, as it requires substantially more computing resources to compute $c^{-1}(\kappa,\beta)$ multiple times; see Section \ref{eq:Kent_posterior} for details. Finally, for the frequentist methods, moment estimation fails when $\beta/\kappa$ approaches the limiting value 0.5. While the MLE method gives better results, it remains inferior to the Bayesian methods in terms of RMSE when $n$ is small.

\section{An empirical study on spherical data} \label{sec: empirical study}
We now analyse four real spherical datasets using the Kent distribution and our method. Each data set contains samples from two groups that are formed naturally from the sample collection process. Figure \ref{fig:FB5_data} plots the spherical datasets. 
\begin{enumerate}

    \item \textbf{Palaeomagnetic} (Palaeo)  \citep{wood1982bimodal}: Thirty-three estimates of previous magnetic pole positions were obtained using palaeomagnetic techniques.  Each estimate is associated with a different site in  Tasmania. The data is originally from \cite{schmidt1976non} and the author points out that the data is likely to fall mainly into two groups of distinct geographical regions. Following \cite{figueiredo2009discriminant}, the first group contains 9 observations with the indices 9, 10, 11, 12, 14, 16, 23, 24, 30. The second group has 24 observations. 
        
    \item \textbf{Magnetic} \cite[Table B8]{fisher1993statistical}:  Measurements of magnetic remanence from a set of 62 specimens is obtained. The specimens are from Mesozoic Dolerite from Prospect, New South Wales, after successive partial demagnetisation stages ($200^{\circ}$ and $350^{\circ}$). 
    An experiment was conducted to determine the blocking temperature spectrum of the magnetisation components. 
    
    \item \textbf{Sandstone}  \cite[Table B23]{fisher1993statistical}: Measurements of natural remanent magnetisation in Old Red Sandstone rocks in Pembrokeshire, Wales. The measurements consist of specimens from two sites with the number of observations 35 and 13, respectively. 
    
    \item  \textbf{Stone} \cite[Table B25]{fisher1993statistical}:  Measurements of the longest axis and shortest axis (101 observations) orientations of tabular stones on a slope at Windy Hills, Scotland.
    
\end{enumerate}

\begin{figure}[ht]
    \centering\includegraphics[width=1.0\linewidth]{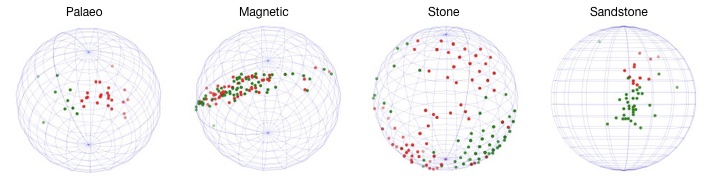}
    \caption{Illustration of the datasets. Green points and red points refer to the observations from groups 1 and 2, respectively.}
    \label{fig:FB5_data}
\end{figure}

The two groups are modelled separately by assuming a non-hierarchical structure on the prior for all the parameters. The data is modelled in the same way as in Section \ref{subsec: kent_dist} using the density function in \eqref{eq: Kent}. 

\begin{table}{}
\centering
\small{\begin{tabular}{lccc|ccc}
\toprule
Paleo & \multicolumn{3}{c}{Group 1 (n=9)} &\multicolumn{3}{c}{Group 2 (n=24)} \\
    & $\beta/\kappa$ & $\mathrm{ESS}_{\beta/\kappa}$ & $\mathrm{ESS}_{\beta/\kappa}$/s & $\beta/\kappa$ & $\mathrm{ESS}_{\beta/\kappa}$ & $\mathrm{ESS}_{\beta/\kappa}$/s  \\
\cmidrule(lr){2-4} \cmidrule(lr){5-7}
BP &   0.15 &   235.49 &       9.33 &   0.14 &   201.70 &       21.12 \\
       RR &   0.15 &   49.47 &       1.37 &   0.12 &   201.60 &       2.36 \\
   RR-aux &   0.19 &   158.03 &       11.11 &   0.10 &   165.48 &       9.14 \\
 Exchange &   0.27 &   118.14 &        8.80 &   0.19 &   75.02 &       5.76 \\
   Moment &   0.15 &       - &           - &   0.15 &       - &           - \\
      MLE &    0.17 &       - &           - &   0.16 &       - &           - \\
  \midrule
Magnetic & \multicolumn{3}{c}{Group 1 (n=62)} &\multicolumn{3}{c}{Group 2 (n=62)} \\
    & $\beta/\kappa$ & $\mathrm{ESS}_{\beta/\kappa}$ & $\mathrm{ESS}_{\beta/\kappa}$/s & $\beta/\kappa$ & $\mathrm{ESS}_{\beta/\kappa}$ & $\mathrm{ESS}_{\beta/\kappa}$/s \\
   \cmidrule(lr){2-4} \cmidrule(lr){5-7}
       BP &   0.48 &   184.78 &       18.50 &   0.49 &   140.27 &       14.43 \\
       RR &     0.48 &   44.38 &      0.26 &   0.50 &    1.90 &    $<0.01$ \\
   RR-aux &   0.48 & 146.95 &       5.37 &   0.49 &   153.40 &   5.26 \\
 Exchange & 0.49 &155.64 &       10.11 &0.49 &117.07 &7.63 \\
   Moment & 0.35 &  - &  - & 0.38 &       - & - \\
      MLE & 0.50 & - &- &   0.50 & - & -\\
      \midrule
Sandstone & \multicolumn{3}{c}{Group 1 (n=35)} &\multicolumn{3}{c}{Group 2 (n=13)} \\
    & $\beta/\kappa$ & $\mathrm{ESS}_{\beta/\kappa}$ & $\mathrm{ESS}_{\beta/\kappa}$/s & $\beta/\kappa$ & $\mathrm{ESS}_{\beta/\kappa}$ & $\mathrm{ESS}_{\beta/\kappa}$/s \\
   \cmidrule(lr){2-4} \cmidrule(lr){5-7}
             BP &  0.08 &   147.92 &       5.87 &   0.20 &   97.11 &        10.28 \\
       RR &  0.06 &   98.09 &      0.88 &   0.17 &    149.16 &        2.16 \\
   RR-aux &  0.08 &   218.60 &        10.43 &   0.17 &   111.36 &       7.38 \\
 Exchange &   0.17 &   95.56 &       6.65 &   0.28 &   7.24 &      0.56 \\
   Moment &  0.09 &       - &           - &   0.27 &       - &           - \\
      MLE &   0.11 &       - &           - &    0.29 &       - &           - \\
      \midrule
Stone & \multicolumn{3}{c}{Group 1 (n=101)} &\multicolumn{3}{c}{Group 2 (n=101)} \\
    & $\beta/\kappa$ & $\mathrm{ESS}_{\beta/\kappa}$ & $\mathrm{ESS}_{\beta/\kappa}$/s & $\beta/\kappa$ & $\mathrm{ESS}_{\beta/\kappa}$ & $\mathrm{ESS}_{\beta/\kappa}$/s \\
   \cmidrule(lr){2-4} \cmidrule(lr){5-7}
             BP &   0.13 &   82.67 &       3.31 &   0.49 &   197.48 &        20.54 \\
       RR &   0.11 &   10.81 &     0.03 &   0.49 &   11.79 &     0.04 \\
   RR-aux &   0.14 &   75.69 &        2.05 &   0.49 &   138.70 &       3.78 \\
 Exchange &   0.43 &   71.26 &       5.01 &   0.49 &   239.20 &       18.49 \\
   Moment &  0.06 &       - &           - &   0.21 &       - &           - \\
      MLE &   0.14 &       - &           - &        0.5 &       - &           - \\
\bottomrule
\end{tabular}}
\caption{Results for the Kent model for the four datasets when estimating $\beta/\kappa$. All the chains ran for 10,000 iterations. The term $\beta/\kappa$ refers to the posterior mean for the Bayesian methods and estimates for the frequentist methods. The sign correction is applied for BP, RR and RR-aux. }\label{tab:FB5_empirical}
\end{table}

\begin{figure}[ht]
    \centering
    \includegraphics[width=0.8\linewidth]{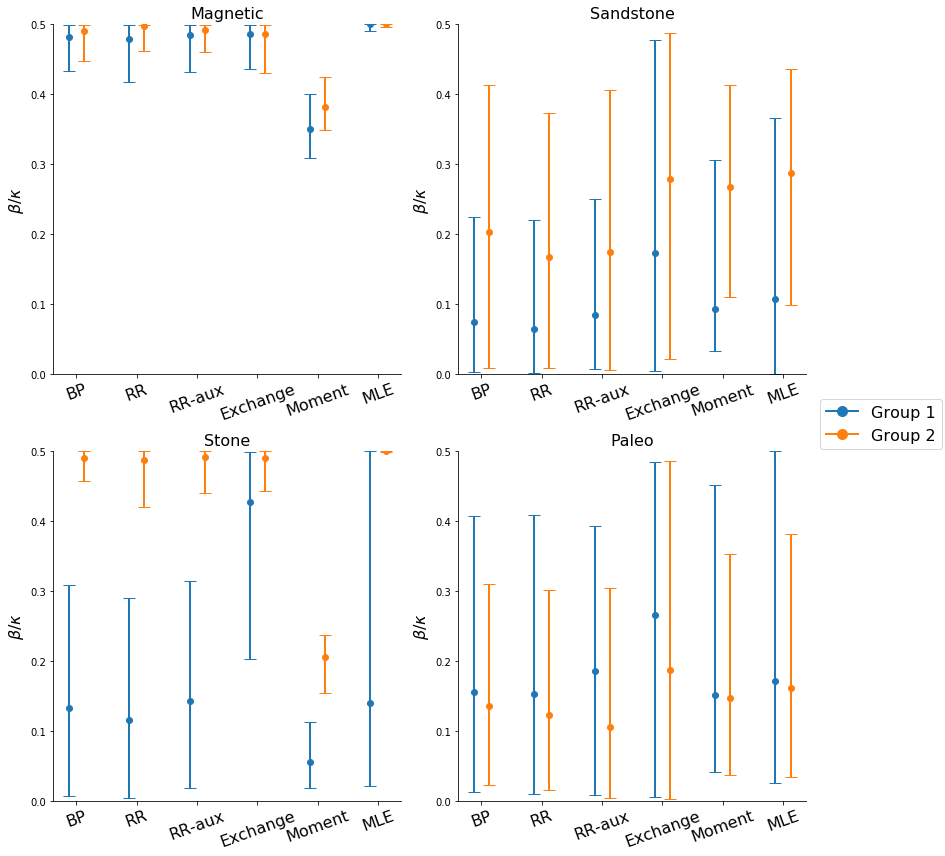}
    \caption{Posterior mean/estimate and $95\%$ credible/confidence intervals for each method by group.}
    \label{fig:FB5_beta}
\end{figure}

Table \ref{tab:FB5_empirical} shows the posterior mean/estimate, ESS and ESS per second for $\beta/\kappa$.  The results for $\beta$ and $\kappa$ are in Section \ref{app: results of FB5} in the supplement.  The table shows that BP, RR and RR-aux have similar estimates and credible intervals for all four datasets. Exchange has a significantly larger posterior mean than those of the other Bayesian methods. It also has wider credible intervals for all datasets. The abnormal performance of Exchange is consistent with the results in Section \ref{subsec: kent_dist}, where we attribute this issue to the sampling inefficiency from a $\mathrm{FB_5}$ distribution. In addition, while BP and RR-aux occasionally trade places when ranked by ESS, the ESS/s metric yields a more decisive ordering: BP outperforms RR-aux in all cases but two (Group 1 for Sandstone and Paleo). The factors of improvement span a wide range of roughly 1.6$\times$ to 7$\times$. In contrast, RR yields the lowest ESS as well as ESS/s in every scenario, demonstrating its poor performance on the Kent distribution and its lack of scalability to multiple observations. The moment estimation demonstrates significant deviations from both the Bayesian results and MLE estimates in the limiting case where $\beta/\kappa \approx 0.5$ occurs (Magnetic and Stone, Group 2).

Figure \ref{fig:FB5_beta} shows the posterior mean/estimate of $\beta/\kappa$ with 95\% credible intervals (confidence interval for Moment and MLE). The signed-pseudo marginal methods deliver similar results, while the results of the exchange method differ substantially (recall the poor mixing due to inefficient sampling from the likelihood).

\section{Conclusions and future research} \label{sec: discussion}

We propose the signed block PMMH with the block-Poisson estimator to carry out exact inference in general doubly intractable problems. Our method requires only an unbiased estimator of the normalising function, which makes it applicable to a wider range of problems than its competitors, such as the exchange algorithm, which requires perfect sampling from the model. Moreover, we derive a finite-sample result that ensures, with high probability, that the denominator of the importance sampling estimator does not approach zero, eliminating a critical breakdown scenario in signed PMMH algorithms overlooked in the literature.

Compared with the Russian roulette method in \cite{lyne2015russian}, the block-Poisson estimator achieves a smaller variance of the 
logarithmic difference in the likelihood estimates in the MH acceptance ratio by its use of correlated pseudo-marginal updates. Moreover, the Russian roulette method lacks guidelines on how to tune its hyperparameters. We derive heuristic guidelines based on analytically derived statistical properties of our estimator.  The Ising model example in Section \ref{sec: ising model} suggests that our approach is 2 to 14 times as efficient (time normalised) as the Russian roulette methods. For the Kent distribution in Section \ref{subsec: kent_dist} with the largest number of observations case ($n=1{,}000$), the improvement factors range between 17 to 35 times against RR-aux and more than 70-140 against RR.

 Despite its broad applicability, the signed PMMH algorithm---whether using block-Poisson or Russian roulette estimators---can be computationally costly, particularly when unbiased estimation of the normalising function is time-consuming. The Ising model exemplifies this, as each iteration requires multiple annealed importance sampling estimates, which dominate the computational cost. In such cases, the difference between BP and Russian roulette methods is less pronounced, since parallelisation and vectorisation contribute little to the overall runtime. In contrast, when unbiased normalising function estimates are computationally cheap, as in the Kent distribution model, the advantages of BP become more pronunced, as arithmetic operations account for a substantial proportion of the total execution time.
 
 Future research will explore other doubly intractable problems and develop tuning strategies for cases when the estimator of the normalising function is not normal. Another area of future research is to obtain tighter bounds for the concentration inequality in Theorem \ref{thm: sum far from zero}. To this end, the improved Bernstein-type bounds via iterated Poincaré inequalities proposed in \cite{huang2024bernstein} may prove useful.

\section*{Acknowledgements}
Yu Yang was financially supported by a University International Postgraduate Award from UNSW Sydney. Robert Kohn was partially supported by the Australian Research Council (IC190100031, DP210103873).
Scott Sisson is supported by the Australian Research Council (FT170100079). We thank Chris Sherlock for helpful comments on an earlier version of this manuscript. We thank the Associate Editor and two referees for helpful comments that significantly improved the manuscript.

\bibliographystyle{apalike}
\addcontentsline{toc}{section}{\refname}
\bibliography{references}

\clearpage
\appendix

\setcounter{section}{0}
\setcounter{figure}{0}
\setcounter{table}{0}
\setcounter{equation}{0}
\setcounter{lemma}{0}
\setcounter{remark}{0}


\renewcommand{\thesection}{S\arabic{section}}

\renewcommand{\thesubsection}{\thesection.\arabic{subsection}}

\renewcommand{\theequation}{S\arabic{section}.\arabic{equation}}

\renewcommand{\thefigure}{S\arabic{figure}}
\renewcommand{\thetable}{S\arabic{table}}
\renewcommand{\thelemma}{S\arabic{lemma}}
\renewcommand{\theremark}{S\arabic{remark}}

\section*{Supplementary material}
\section{Properties of the block-Poisson estimator} \label{app: BPproof}

\begin{proof}[Proof of Lemma \ref{lemma: block poisson est}]

The block-Poisson estimator is expressed as   $$\widehat{L}_B(\Btheta) = \prod_{l=1}^{\lambda} \xi_l(\Btheta) $$ with 
\begin{equation*}
\xi_l(\Btheta) =\exp(a/ \lambda + m) \prod_{h=1}^{\chi_l} \dfrac{\widehat{B}^{(h,l)}(\Btheta)-a}{m\lambda},
\end{equation*}
where $\lambda$ is the number of blocks with $\chi_l \sim \mathrm{Pois}(m)$ and $a$ is an arbitrary constant. For notational convenience, dependence on $\Btheta$ is omitted for $\widehat{L}_B$, $\widehat{B}$ and $\xi$. 

The following proofs closely follow the proofs in \citet[Section S8]{quiroz2021block} who assume $m = 1$, whereas here $m$ can be any non-negative integer. The two properties below are useful for the proof. Suppose that $X \sim \mathrm{Pois}(m) $ and $A < \infty$. Then,
\begin{enumerate}
    \item[(i)] $E_X(A^X) = \exp((A-1)m)$.
    \item[(ii)] $\mathrm{Var}_X (A^X) = \exp(-m) [\exp(A^2m) - \exp(2Am -m)]$.
\end{enumerate}

\textbf{Proof of unbiasedness}
\begin{align*}
    E(\xi_l) &= \exp(a/\lambda + m) E\Bigg[ \prod_{h=1}^{\chi_l} \dfrac{\widehat{B}^{(h,l)}-a}{m\lambda} \Bigg] \\
    &=  \exp(a/\lambda + m) E_\chi E_{\widehat{B}| \chi} \Bigg[ \prod_{h=1}^{\chi_l} \dfrac{\widehat{B}^{(h,l)}-a}{m\lambda} \Bigg] \\
    &= \exp(a/\lambda + m) E_\chi \Bigg[ \dfrac{B-a}{m\lambda} \Bigg]^\chi\\
    &= \exp(a/\lambda + m) \exp( (B-a)/\lambda -m)\\
    &= \exp(B/ \lambda).
\end{align*}
 Hence, $E(\widehat{L}_B) = \exp(B)$, 
as $\xi_1,\dots,\xi_\lambda$ are independent.

In the implementation, we use $a = \widehat{B} - m \lambda $, where $\widehat{B}$ is an estimate of $B$ which is independent of $\widehat{B}^{(h,l)}$. Such a choice of $a$ preserves the unbiasedness of the block-Poisson estimator.  

Treating $a$ as a random variable, the expectation of the  block-Poisson estimator can be expressed as
\[
    E(\widehat{L}_B(\Btheta)) = E_a E_{\xi_1,\dots,\xi_\lambda|a} \prod_{l=1}^{\lambda} \xi_l(\Btheta) = E_a \left( \prod_{l=1}^{\lambda} E_{\xi_l|a} \xi_l(\Btheta) \right).
    \]
    The conditional expectation of $E_{\xi_l|a} \xi_l$ is (omitting the dependence on $\Btheta$)
    \[E_{\xi_l|a} \xi_l = E_{\xi_1|a} \exp(a/\lambda+m) \prod_{h=1}^{\chi_l} \dfrac{\widehat{B}^{(h,l)}-a}{m\lambda}. \] 
The conditional expectation is the same as the derived $E(\xi_l)$, which is independent of $a$ as $a$ cancels out in the process. Hence, treating $a$ as a random variable still guarantees the unbiasedness of the block-Poisson estimator.

\textbf{Derivation of the variance}

From the definition of $\widehat{L}_B$, 
\begin{align*}
    \mathrm{Var}(\widehat{L}_B) &= \mathrm{Var}\left(\prod_{l=1}^{\lambda} \xi_l\right).
\end{align*}

For a collection of independent random variables $\xi_1,\hdots,\xi_\lambda$,
\begin{align*}
     \mathrm{Var}\left(\prod_{l=1}^{\lambda} \xi_l\right) = \prod_{l=1}^{\lambda}\left( \mathrm{Var}(\xi_l) + E(\xi_l)^2\right) - \prod_{l=1}^{\lambda} E(\xi_l)^2
\end{align*}
with 
\begin{align*}
    \mathrm{Var}(\xi_l) &= \exp(a/\lambda +m) \left[ E_\chi \mathrm{Var}_{\widehat{B}| \chi} \left( \prod_{h=1}^{\chi}\dfrac{\widehat{B}^{(h,l)} - a }{m\lambda }\right) + \mathrm{Var}_\chi E_{\widehat{B}| \chi} \left(
     \prod_{h=1}^{\chi}\dfrac{\widehat{B}^{(h,l)} - a }{m\lambda }\right) \right].
\end{align*}

For the first term in the brackets, making the use of independence of $\widehat{B}^{(h,l)},h=1,\dots,\chi_l$, $$ \mathrm{Var}_{\widehat{B}| \chi} \left( \prod_{h=1}^{\chi}\dfrac{\widehat{B}^{(h,l)} - a }{m\lambda }\right)$$ is simplified as 
\begin{align*}
    \mathrm{Var}_{\widehat{B}| \chi} \left( \prod_{h=1}^{\chi}\dfrac{\widehat{B}^{(h,l)} - a }{m\lambda }\right) &= \prod_{h=1}^{\chi} \left[  \mathrm{Var} \left(\dfrac{\widehat{B}^{(h,l)} -a }{m \lambda} \right)   +  E\left( \dfrac{\widehat{B}^{(h,l)} -a }{m \lambda}\right)^2\right] \\
    & - \prod_{h=1}^{\chi} E\left( \dfrac{\widehat{B}^{(h,l)} -a }{m \lambda}\right)^2\\
    &= \prod_{h=1}^{\chi} \left( \frac{\sigma_{\widehat{B}}^2 + (B-a)^2}{(m\lambda)^2} \right) - \left( \frac{B-a}{m\lambda} \right)^{2\chi}.
\end{align*}

Taking the expectation with respect to $\chi$ and using property (i) twice for the two terms, we have
\begin{align*}
     E_\chi \mathrm{Var}_{\widehat{B}| \chi} \left( \prod_{h=1}^{\chi}\dfrac{\widehat{B}^{(h,l)} - a }{m\lambda }\right) &= \exp\left[ \left(\frac{\sigma_{\widehat{B}}^2 + (B-a)^2}{(m\lambda)^2 }   -1 \right)m \right] - \exp\left[ \left( \frac{(B-a)^2}{(m\lambda)^2 } -1 \right)m \right]\\
     &=  \exp\left[ \left( \frac{(B-a)^2}{(m\lambda)^2} -1 \right)m \right] \left[ \exp 
     \left(\frac{\sigma_{\widehat{B}}^2}{m\lambda^2} \right)-1 \right].
\end{align*}

The second term is derived similarly,
\begin{align*}
   \mathrm{Var}_\chi E_{\widehat{B}| \chi} \left( \prod_{h=1}^{\chi}\dfrac{\widehat{B}^{(h,l)} - a }{m\lambda }\right)  &= \mathrm{Var}_\chi \left( \frac{B-a}{m\lambda}\right)^\chi \\
   &= \exp\left(-m + \frac{(B-a)^2}{m\lambda^2} \right) - \exp\left( 2(B-a)/\lambda -2m \right).
\end{align*}

Combining the two terms, we have 
\begin{align*}
    \mathrm{Var}(\xi_l) = \exp\bigg[ \frac{(B-a)^2 + \sigma_B^2}{m\lambda^2} -m \bigg] - \exp(2(B-a)/\lambda -2m ).
\end{align*}

Deriving $E(\xi)^2$ is straight forward,
\begin{align*}
    E(\xi_l)^2  &=  \left[ E_\chi E_{B| \chi} \prod_{h=1}^{\chi} \left(\frac{\widehat{B}^{(h,l)}-a}{m\lambda}\right) \right]^2\\
    &= \exp( 2(B-a)/\lambda - 2m).
\end{align*}
Combining all the terms, after some algebra, the variance of the block-Poisson estimator is 
\begin{align*}
    \mathrm{Var}(\widehat{L}_B) = \exp\left[  \frac{(B-a)^2 + \sigma_B^2}{m\lambda } + 2a + m\lambda \right] - \exp(2B).
\end{align*}

\textbf{Choice of the constant $a$}

The optimal value minimising $\mathrm{Var}(\widehat{L}_B)$ is $a = B - m \lambda$,  which is obtained by solving the equation  $\partial \mathrm{Var}(\widehat{L}_B)/ \partial a =0$. 
\end{proof}

\begin{proof}[Proof of Lemma \ref{lemma: positive prob}]

The proof of Lemma \ref{lemma: positive prob} 
is the same as Lemma 3 in \cite{quiroz2021block}. We sketch the key points here. 

Note that 
\[ \Pr(\widehat{L}_B \geq 0) = \Pr \left( \prod_{l=1}^\lambda \xi_l \geq 0 \right) \]

and $\widehat{L}_B$ only occurs in the case of an odd number of negative terms $\xi_l, l= 1,\dots,\lambda$. Based on results in \cite[p.~277]{feller-vol-2}, we obtain 
\[ \Pr(\widehat{L}_B > 0) = \frac{1}{2} \left( 1 + (1- 2\Pr(\xi_l < 0))^\lambda. \right)\]

Notice that 
\[ \Pr(\xi_l <0) = \sum_{j=1}^{\infty} \Pr \left((\prod_{l=1}^{j} (A_{m} < 0))^j \right) \Pr(\chi_l = j), \:\chi_l \sim \mathrm{Pois}(m). \]

Apply the result in \cite{feller-vol-2} again, we have

\[\Pr \left((\prod_{l=1}^{j} (A_{m}\leq 0))^j \right) = \frac{1}{2} \left( 1 -(1-2\Pr(\xi_l < 0))^j \right), \]
where $A_m = (\widehat{B(\Btheta)} - B(\Btheta))/(m\lambda)+1$, concluding the proof.

\end{proof}

\begin{proof}[Proof of Lemma \ref{lemma: log variance}]

The variance of the log of the likelihood estimator is
\begin{align*}
    \mathrm{Var}( \log \abs*{\widehat{L}_B}) &= \mathrm{Var}\bigg(\sum_{l=1}^\lambda \sum_{h=1}^{\chi_l} \log \abs*{\frac{\widehat{B}^{(h,l)}-a}{m\lambda}} \bigg) \\
    &= E_{\chi_{1,\hdots,\lambda}} V_{B| \chi_{1,\hdots \lambda}} \log \abs*{\frac{\widehat{B}^{(h,l)}-a}{m\lambda}} + \mathrm{Var}_{\chi_{1,\hdots,\lambda}} E_{B| \chi_{1,\hdots \lambda}} \log \abs*{\frac{\widehat{B}^{(h,l)}-a}{m\lambda}}.
\end{align*}
Suppose $\widehat{B}^{(h,l)} \sim N(B, \sigma^2_{\widehat{B}})$ and $a = B -m\lambda$; then 
\begin{align*}
    \log \abs*{\frac{\widehat{B}^{(h,l)}-a}{m\lambda}} &= \log \abs*{\frac{\sigma_{\widehat{B}}Z}{m\lambda} + 1}\\
    & =  \log(\sigma_{\widehat{B}}/(m\lambda)) + \log\abs*{ Z + m \lambda/\sigma_{\widehat{B}}} \\
    &=  \log(\sigma_B/(m\lambda)) + \frac{1}{2} \log(( Z + m \lambda/\sigma_{\widehat{B}})^2)\\
    &= \log(\sigma_{\widehat{B}}/(m\lambda)) + \frac{1}{2} \log W^{(h,l)}, \quad  W^{(h,l)} \sim \chi^2 (1, (m\lambda/\sigma_{\widehat{B}})^2),
\end{align*}
where $\chi^2(k,\lambda)$ denotes the non-central $\chi^2$ distribution with $k$ degrees of freedom and non-centrality parameter $\lambda$. Lemma S12 in \cite{quiroz2021block} provides the moments of $\log W$.

Let $\eta_B$ and $\nu_B^2$ be the expectation and the variance of $\log \abs{(\widehat{B}^{(h,l)}-a)/{m\lambda}}$ respectively. We have
\begin{align*}
     \eta_B = E \bigg( \log \abs*{\frac{\widehat{B}^{(h,l)}-a}{m\lambda}} \bigg)  &=   \log(\sigma_B/(m\lambda)) + \frac{1}{2} \log(2 + E_J[\psi^{(0)} (0.5 + J)], \\
     \nu_B^2 = \mathrm{Var} \bigg( \log \abs*{\frac{\widehat{B}^{(h,l)}-a}{m\lambda}} \bigg) &= \frac{1}{4} \left[  E_J[\psi^{(1)} (0.5 +J)] + \mathrm{Var}_J[\psi^{(0)}(0.5+J)] \right],
\end{align*}
where $J \sim \mathrm{Pois}( (m\lambda)^2/2\sigma_{\widehat{B}}^2)$ and $\psi^{(q)}$ is the polygamma function of order $q$. 

Then,
\begin{align*}
     \mathrm{Var}( \log \abs*{\widehat{L}_B}) &= E_{\chi_{1,\hdots,\lambda}} \bigg(\sum_{l=1}^{\lambda} \chi_l \bigg)\nu^2_B + \mathrm{Var}_{\chi_{1,\hdots,\lambda}} \bigg(\sum_{l=1}^{\lambda} \chi_l \bigg) \eta_B\\
     &= m\lambda (\nu_B^2 + \eta_B^2).
\end{align*}
Furthermore, $\mathrm{Var}( \log \abs*{\widehat{L}_B}) < \infty$.   Lemma 7 in \cite{quiroz2021block} derives the result. 
\end{proof}

\section{Finite sample properties of the signed importance sampling estimator}\label{app: FiniteSampleBoundsIS}

Before proving Theorem \ref{thm: sum far from zero}, we state some results.

\begin{lemma}\label{lem: properties s chain} Suppose that Assumption \ref{ass: assumptions} holds. Then
\begin{enumerate}
    \item[(i)] $\mathrm{Var}_{\overline{\pi}}\left(S \right) = 1 - \mu^2,$ where $\mu = E_{\overline{\pi}}(S)$.
    \item[(ii)] $\underset{S \in \{-1, +1\}}{\max} \left|S - \mu\right| = 1 + |\mu| $.
    \item[(iii)] Suppose that $\left|\sum_{i=1}^N s^{(i)}\right| \leq cN$, for $0 < c < \left| \mu\right|$, with $\mu = E_{\overline{\pi}}(S)$. Then,
    \begin{align} \label{eq:concentration_inequality}
        \left|\sum_{i=1}^N s^{(i)} - \mu N \right| \geq (|\mu| - c)N.
    \end{align}
\end{enumerate}
\end{lemma}
\begin{proof}
For (i), we use $\mathrm{Var}_{\overline{\pi}}\left(S \right)= E_{\overline{\pi}}\left(S^2\right) - E_{\overline{\pi}}\left(S\right)^2$, with
$$E_{\overline{\pi}}\left(S^2\right) = 1^2\eta(\{+1\}) +  (-1)^2\eta(\{-1\}) = 1.$$

For (ii), verifying the two cases $S=-1$ and $S=+1$, together with $\mu \in [-1,1]$ gives the result.

For (iii), consider the case $\mu > 0$, such that $\mu=|\mu|$. Since $\sum_{i=1}^N s^{(i)} \in [-cN, cN]$, $$\mu N - \sum_{i=1}^N s^{(i)} \geq \mu N - cN,$$
it follows that $\left| \sum_{i=1}^N s^{(i)} - \mu N \right| \geq (|\mu| - c)N$. When $\mu <0$, 
$$\sum_{i=1}^N s^{(i)} - \mu N \geq -cN - \mu N,$$
since $\sum_{i=1}^N s^{(i)}\geq -cN$. Since $|\mu| = -\mu$ for $\mu <0$, it follows that
$$\left|\sum_{i=1}^N s^{(i)}  - \mu N \right| \geq (|\mu| - c)N.$$
\end{proof}

The following lemma is a concentration result for \eqref{eq:concentration_inequality}.
\begin{lemma}\label{lem:concentration_inquality} Suppose that Assumption \ref{ass: assumptions} holds. Then, for $0< c <|\mu|$,
\begin{align*}  
 \textstyle \Pr_{\overline{\pi}}\left(\left|\sum_{i=1}^N s^{(i)} - \mu N \right| \geq (|\mu| - c)N\right) & \leq 2\exp\left(-\frac{(|\mu| - c)^2 \delta N}{4(1-\mu^2) + 10(|\mu| - c)(1 + |\mu|)}\right), 
\end{align*}
where $\delta$ is the spectral gap of the Markov chain.
\end{lemma}
\begin{remark}
    Note that, with $v=|\mu| -c >0$,
    \begin{align*}  
 \textstyle \Pr_{\overline{\pi}}\left(\left|\frac{1}{N}\sum_{i=1}^N s^{(i)} - \mu \right| \geq v\right) & \leq 2\exp\left(-\frac{v^2 \delta N}{4(1-\mu^2) + 10v(1 + |\mu|)}\right), 
\end{align*}
which restates the concentration inequality in terms of the deviation of the sample mean from the population mean.
\end{remark}
\begin{proof}
Assumption~\ref{ass: assumptions} allows us to use Theorem 3.3 in \cite{paulin2015concentration} with $f=S(\Btheta,\Bu,\nu)$. Theorem 3.3 assumes a stationary reversible Markov chain on a general state space $\Omega$ with a stationary distribution, which we denote as $\overline{\pi}$ in Assumption~\ref{ass: assumptions}. With our notation, Theorem 3.3 assumes that $f=S(\Btheta,\Bu,\nu) \in L^2(\overline{\pi})$ with $|S(\Btheta, \Bu,\nu) - E_{\overline{\pi}}(S)|\leq C$ for every $\Btheta, \Bu,\nu \in \Omega$. It follows from Lemma \ref{lem: properties s chain}(ii) that $C=1+|\mu|$. The sign function is clearly in $L^2(\overline{\pi})$, since
$$\int_{\Btheta} \int_{\nu} \int_{\Bu} |S(\Btheta, \Bu,\nu)|^2\overline{\pi}(\Btheta,\Bu, \nu )d\Btheta d\nu d\Bu=\int_{\Btheta} \int_{\nu} \int_{\Bu} \overline{\pi}(\Btheta, \Bu, \nu )d\Btheta d\nu d\Bu=1<\infty.$$
We can now apply Equation (3.21) in \cite{paulin2015concentration} with $$t=(|\mu| -c)N,$$ 
and $V_f=\mathrm{Var}_{\overline{\pi}}(S)=1-\mu^2$ from Lemma \ref{lem: properties s chain}(i).
\end{proof}

We are now ready to prove Theorem \ref{thm: sum far from zero}.

\begin{proof}[Proof of Theorem \ref{thm: sum far from zero}]
To show that the probability in Theorem \ref{thm: sum far from zero} is close to 1 is equivalent to showing that the complementary event, i.e.\ $\left|\sum_{i=1}^N s^{(i)}\right| \leq cN$, has probability close to zero. By Part Lemma \ref{lem: properties s chain}(iii),  $$\left\{\left|\sum_{i=1}^N s^{(i)}\right| \leq cN \right\} \subseteq \left\{ \left|\sum_{i=1}^N s^{(i)} - \mu N \right| \geq (|\mu| - c)N\right\},$$
and thus 
\begin{align}\label{eq:Upper_bound_probability}
  \textstyle \Pr_{\overline{\pi}}\left(\left|\sum_{i=1}^N s^{(i)}\right| \leq cN \right) & \leq \textstyle \Pr_{\overline{\pi}}\left(\left|\sum_{i=1}^N s^{(i)} - \mu N \right| \geq (|\mu| - c)N\right)  \nonumber \\ 
  & \leq 2\exp\left(-\frac{(|\mu| - c)^2 \delta N}{4(1-\mu^2) + 10(|\mu| - c)(1 + |\mu|)}\right), 
\end{align}
where the last inequality follows from Lemma \ref{lem:concentration_inquality}. Note the exponential decay in terms of $N$ in  \eqref{eq:Upper_bound_probability} since $c < |\mu| \leq 1$. 
Now, to find $N_0$, 
\begin{align*}
    2\exp\left(-\frac{(|\mu| - c)^2 \delta N}{4(1-\mu^2) + 10(|\mu| - c)(1 + |\mu|)}\right) & \leq \varepsilon \\
    -\frac{(|\mu| - c)^2 \delta N}{4(1-\mu^2) + 10(|\mu| - c)(1 + |\mu|)}  & \leq \log \left (\frac{\varepsilon}{2}\right)  \\
    \frac{(|\mu| - c)^2 \delta N}{4(1-\mu^2) + 10(|\mu| - c)(1 + |\mu|)}  & \geq \log \left (\frac{2}{\varepsilon}\right), 
\end{align*}
which gives 
\begin{align}\label{eq:N_0}
    N_0 = \frac{4(1-\mu^2) + 10(|\mu| - c)(1 + |\mu|)}{(|\mu| - c)^2 \delta}  \log \left (\frac{2}{\varepsilon}\right).
\end{align}
\end{proof}

\section{Studying the effects of the simplified soft lower bound}\label{subsec:simplified_lb}

All the results in Lemmas \ref{lemma: block poisson est}-\ref{lemma: log variance},  except the unbiasedness property (Part (i) Lemma \ref{lemma: block poisson est}), are derived under the assumption $a=B(\Btheta)-m\lambda$. In addition, Lemma \ref{lemma: log variance} assumes that $\widehat{B}(\Btheta)$ is normal, while Lemma \ref{lemma: positive prob} in practice is used together with the normality assumption to compute the probability of a positive estimator when tuning the algorithm.

This section explores the quantities in the three lemmas using the following different choices of $a$:
\begin{itemize}
    \item[(a.)] $a_\mathrm{opt} = B(\Btheta) - m\lambda$,
    \item[(b.)] $\widehat{a}_\mathrm{opt} = \widehat{B}(\Btheta) - m\lambda$,
    \item[(c.)] $a_\mathrm{sub} = -1 - m\lambda$.
\end{itemize}
The aim is to show that, in particular (a.) and (c.), result in quantities used in the tuning that do not differ too much. Note that for case (a.) the analytical expressions in the lemmas are used to compute the quantities of interest ($B(\Btheta)$ is known in the simulation setup, see below), however, for (b.) and (c.), simulation is required (even if $B(\Btheta)$ is known). The simulation computes $1{,}000$ realisations of \eqref{eq: block_pois_est} and estimates the quantities by their sample versions. For example, to compute Lemma 3 for the soft lower bound $a_\mathrm{sub}$, $\xi_\ell \sim \mathrm{Pois}$ and $\widehat{B}^{(h,l)}\sim N$ are simulated from a Poisson and normal distribution (with the settings below), respectively, for $l=1,\dots,\lambda$. Then the log of the absolute value of \eqref{eq: block_pois_est} is computed with $a$ replaced by $a_\mathrm{sub}=-1-m\lambda$. Repeating this gives $1{,}000$ realisations of $\log|\widehat{L}_B|$ (with the sub-optimal soft lower bound $a_\mathrm{sub}$) and the sample variance estimates the corresponding $\sigma^2_{\log|\widehat{L}_B|}$. 

The simulation setup is as follows. We first note that $B(\Btheta)=-\nu Z(\Btheta)<0$ since $Z(\Btheta
)>0$ in doubly intractable problems and $\nu > 0$. The simulation uses an equally spaced grid $\exp(B) \in [0.01, 0.99]$. Note that, by construction, $B$ and $\exp(B)$ do not depend on $\Btheta$. Next, $\widehat{B} \sim N(B, \sigma_{\widehat{B}}^2)$ and we set $\sigma_{\widehat{B}}= \sqrt{2}$ by assuming that $\frac{\mathrm{Var}(\widehat{Z}_i(\Btheta))}{M Z^2(\Btheta)} = 1$ based on \eqref{eq: gamma_expression}, which implies the variance of the Monte Carlo estimate $M^{-1} \sum_{i=1}^M \widehat{Z}_i(\Btheta)$ equals to the squared value of the true value. The simulation uses the values of the hyperparameters $\lambda$ ($\lambda$ = 10) and $m$ ($m=1$) unless otherwise stated.

Plot (a) in Figure \ref{fig:BP-properties} compares the variance of the block-Poisson estimator ($\mathrm{Var}(\widehat{L}_B)$) against $\exp(B)$. As $\exp(B)$ gets closer to 1, the discrepancy of $\widehat{a}_{\mathrm{opt}}$, $a_{\mathrm{sub}}$ with $a_{\mathrm{opt}}$ becomes larger. However, for the more important quantity for the tuning, the variance of the log of the absolute value of the estimator, there is a small discrepancy between $a_{\mathrm{opt}}$ and $a_{\mathrm{sub}}$, except for small values of $\exp(B)$. From Plots (a) and (b), we can conclude that, compared with $\widehat{a}_{\mathrm{opt}}$, the usage of $a_{\mathrm{sub}}$ yields closer results for the variance of the block Poisson estimator and its logarithm in most cases.  
Plot (c) shows the probability of the BP estimator being positive against various $\lambda$ values (number of blocks). From the plot, we can see that even with a relatively small $\lambda$ (e.g., $\lambda >10$), the probability of the BP estimator being positive is almost close to 1, regardless of the choice of soft lower bound. Plot (d) represents $\mathrm{Var}(\log |\widehat{L}_B|)$, the variance of the logarithm of the absolute value of the BP estimator. The discrepancies among $\widehat{a}_{\mathrm{opt}}$, $a_\mathrm{sub}$ and $a_\mathrm{opt}$ get narrower in the case of a larger $\lambda$. To conclude, for a wide range of settings, the usage of $a_{\mathrm{sub}}$ yields a BP estimator with similar $\mathrm{Var}(\log |\widehat{L}_B|)$ and high probability of being positive to that of using $a_{\mathrm{opt}}$.

\begin{figure}[htbp]
    \centering
    \includegraphics[width = 0.8\linewidth]{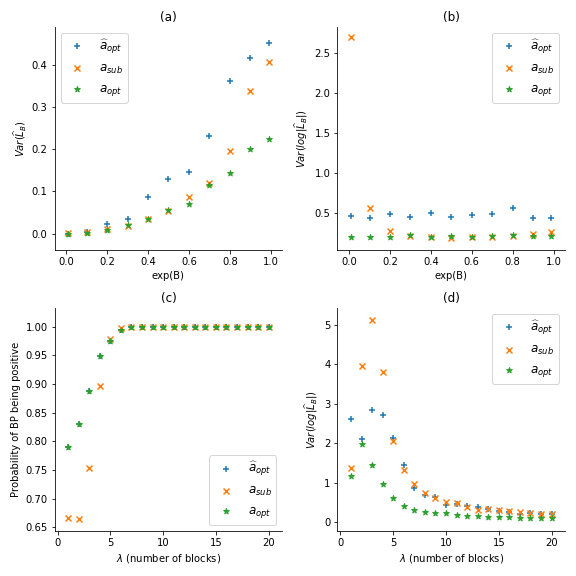}
    \caption{Illustrating some properties of the BP estimator for different choices of soft lower bound $a$.}
    \label{fig:BP-properties}
\end{figure}

\section{An approximate approach under strong assumptions}\label{subsec:bias_corrected}

This section proposes an alternative ``pseudo-marginal approach'' that is approximate and a fast alternative to our signed pseudo-marginal approach. Unlike our signed pseudo-marginal method, it is not simulation-consistent; however, it may be used in the tuning phase of the algorithm to obtain a rough approximation of the posterior. 

If the estimator $\widehat{Z}(\Btheta)$ is normally distributed with a known variance, then an unbiased almost surely positive estimator may be derived using the penalty method in \cite{ceperley1999penalty}. 

Suppose that $\widehat{Z}(\Btheta) \sim N(Z(\Btheta),\sigma^2_{\widehat{Z}(\Btheta)})$, where $\sigma^2_{\widehat{Z}(\Btheta)}$ is the variance of $\widehat{Z}(\Btheta)$. Then $\exp(-\nu \widehat{Z}(\Btheta))$ is log-normally distributed with expected value
$$E(\exp(-\nu \widehat{Z}(\Btheta)))=\exp\left(-\nu Z(\Btheta)+\frac{1}{2}\nu^2 \sigma^2_{\widehat{Z}(\Btheta)}\right).$$
Hence
\begin{equation}\label{eq:bias-corrected estimator}
    \exp\left(-\nu \widehat{Z}(\Btheta)-\frac{1}{2}\nu^2 \sigma^2_{\widehat{Z}(\Btheta)}\right)
\end{equation}
is a positive unbiased estimator of $\exp(-\nu Z(\Btheta))$. Thus, under the idealised assumptions that $\widehat{Z}(\Btheta)$ is normal with a known variance, we can use a pseudo-marginal algorithm to obtain samples from \eqref{eq:target_doubly_intractable}. However, in practice, $\widehat{Z}(\Btheta)$ is rarely normal and the variance must be estimated, so this method provides approximate samples in practice. It is outside the scope of this paper to study the resulting perturbation error. 

The advantage of this method is that it is much faster than the BP estimator, because it only requires a single estimate of the normalising function. However, it needs to be applied carefully due to the strong normality assumption. Section \ref{app:variability_Z_hat} shows that the normality assumption may be unrealistic for certain values of $\theta$ in the Ising model. 

\section{Implementation details of the signed block PMMH with the BP estimator}\label{sec: discussion of algo}

This section gives the implementation details of Algorithm \ref{algo: PMMH_DI}; it covers the construction of the BP estimator and how to choose its lower bound. Moreover, an alternative proposal suggested by an anonymous reviewer is discussed. 

\noindent \textbf{\textit{Construction of the BP estimator}}

To implement the BP estimator, we first fix the hyperparameters $\lambda, m$ and $a$. For each of the blocks $h$,  $h = 1,\dots,\lambda$, we sample $\chi_h \sim \text{Pois}(m)$. Depending on the value of $\chi_h$, we need to have the same number of $-\nu \widehat{Z}$ estimates. The computation can be done in parallel. We can draw $\chi_h$ for all the possible $h$ values at one time, and the total replications of $-\nu\widehat{Z}$ required are $\sum_{h=1}^{\lambda} \chi_h$. Parallel computation can also be implemented within the individual estimation process for $\widehat{Z}$ locally, where the calculation of $M$ particles is executed simultaneously. Particles here refer to samples used to estimate the normalising function.

\noindent \textbf{\textit{Choosing the lower bound in the BP estimator}}

The lower bound $a_{\mathrm{opt}} = -\nu Z - m\lambda$ for the BP estimator minimises the variance of the likelihood estimator. In the implementation,  we use $a_{\mathrm{sub}} =  -1 -m \lambda $, which comes from fixing $\nu$ at its expected value. For multiple observations (such as the Kent example), we set $a_{\mathrm{sub}} =  -n -m \lambda $, where $n$ is the number of observations.

\noindent \textbf{\textit{A deterministic proposal for the auxiliary variable $\nu$}}

An anonymous referee suggested an interesting deterministic proposal for $\nu$ based on its current value. The idea is to minimise the variability of the ratio in \eqref{eq: ar_PMMH}, similar to the idea of only updating a block of the $\Bu_{1:\lambda}$ at a time. The referee suggested the following deterministic proposal
$$\nu' = \nu\frac{\widehat{Z}_P(\Btheta,\Bu)}{\widehat{Z}_P(\Btheta',\Bu')}.$$
To account for the change in volume induced by a deterministic proposal, the acceptance probability must be adjusted by the determinant of the Jacobian of the inverse transformation (see e.g\ \citealp{green1995reversible}),
$$\left|\frac{d\nu}{d\nu'}\right| =  \frac{\widehat{Z}_P(\Btheta',\Bu')}{\widehat{Z}_P(\Btheta,\Bu)}.$$
The alternative version of the acceptance ratio in \eqref{eq: ar_PMMH} under the new proposal is
\begin{equation}\label{eq:acc_prob_referee} \frac{\lvert \widehat{\pi}(\Btheta', \nu', \Bu'_{1:\lambda}|\By,\lambda) \rvert}{\lvert \widehat{\pi}(\Btheta, \nu,\Bu_{1:\lambda}|\By,\lambda) \rvert}
    \dfrac{q(\Btheta| \Btheta')q(\Bu_{1:\lambda}|\Bu'_{1:\lambda})}{q(\Btheta'|\Btheta)q(\Bu'_{1:\lambda}|\Bu_{1:\lambda})}\frac{\widehat{Z}_P(\Btheta,\Bu)}{\widehat{Z}_P(\Btheta',\Bu')}.
\end{equation}
We experimented with this proposal in the Ising model and did not observe results that differed significantly from those with our proposal (independent exponential). Nevertheless, this deterministic proposal may prove useful in other doubly intractable models.

\section{The exchange algorithm and the Russian roulette method}\label{app: ex and rr intro}
\subsection{The exchange algorithm}
The exchange algorithm \citep{murray2006mcmc} constructs an augmented target distribution and updates the distribution with the Metropolis-Hastings algorithm. 

Consider the auxiliary variables $\By'$, with the likelihood $p(\By' | \Btheta') = \dfrac{f(\By'|\Btheta')}{Z(\Btheta')}$. The augmented target distribution is 
\[\pi(\Btheta,\Btheta',\By' | \By) \propto  \pi(\Btheta) q(\Btheta' | \Btheta) \dfrac{f(\By|\Btheta)}{Z(\Btheta)} \dfrac{f(\By' | \Btheta')}{Z(\Btheta')}.\]
For this augmented density, $\Btheta'$ is first generated from the proposal $ q(\Btheta' | \Btheta) $. Then the auxiliary variable $\By'$ is generated from  $ \dfrac{f(\By' | \Btheta')}{Z(\Btheta')}$. In the MH algorithm, consider the proposal that swaps $(\Btheta, \Btheta') \to (\Btheta', \Btheta)$. Then  $\By'$ is evaluated under $\Btheta$ and $\By$ is evaluated under  $\Btheta'$. The unknown normalising functions $Z(\Btheta)$ and $Z(\Btheta')$ in the denominator are cancelled out. The MH acceptance ratio of accepting $\Btheta'$ is then
\begin{equation*}
    \alpha = \min \left\{ 1 ,  \dfrac{\pi(\Btheta')}{\pi(\Btheta)}\dfrac{f(\By |\Btheta')}{f(\By' | \Btheta')} \dfrac{f(\By'| \Btheta)}{f(\By | \Btheta)} \dfrac{q(\Btheta | \Btheta')}{q(\Btheta' | \Btheta)}\right\}.
\end{equation*}

The exchange algorithm requires sampling $\By'$ from the likelihood without knowing the normalising function $Z(\Btheta)$. For the Ising model in Section \ref{sec: ising model}, the perfect sampling method is used \citep{propp1996exact}. For the Kent distribution, the sampling method uses the BACG acceptance-rejection method \citep{kent2013new}.

\subsection{The Russian roulette method}

\cite{lyne2015russian} use the Russian roulette (RR) method to solve the doubly intractable problem. The general idea is to express a known function containing the unknown constant by a geometric series. The series with infinite terms is truncated by the Russian roulette method to obtain an unbiased estimator. The sections below describe two variants of the RR method. The first one uses auxiliary variable $\nu$ in \eqref{eq:augmented_posterior}, the other modelling $1/Z(\Btheta)$ directly. 

\subsubsection{RR with the auxiliary variable (RR-aux)}

Recall the augmented posterior in \eqref{eq:augmented_posterior}, 
\begin{align}\label{eq:augmented_posterior_in_supp}
\pi(\Btheta,\nu | \By) \propto \pi(\Btheta) \dfrac{f(\By | \Btheta) }{Z(\Btheta)}  Z(\Btheta) \exp(-\nu Z(\Btheta)),    
\end{align}
where $\nu \sim \mathrm{Expon}(Z(\Btheta))$, i.e.,\ $p(\nu |
\Btheta) = Z(\Btheta) \exp(-\nu Z(\Btheta))$. The following part outlines the key points and algorithm of the RR method in \cite{lyne2015russian}.

Suppose $\widetilde{Z}(\Btheta)$ is an upper bound of $\widehat{Z}(\Btheta)$ with $E(\widehat{Z}(\Btheta)) = Z(\Btheta)$. The term $\exp(-\nu Z(\Btheta))$ can be expressed as:

\begin{align*}
    \exp(-\nu Z(\Btheta)) &= \exp( -\nu \widetilde{Z}(\Btheta) ) \exp(\nu \widetilde{Z}(\Btheta) - \nu Z(\Btheta))\\
    &=  \exp( -\nu \widetilde{Z}(\Btheta) ) \left(1+ \sum_{n=1}^{\infty}  \dfrac{\nu^n}{n!}(\widetilde{Z}(\Btheta) - Z(\Btheta))^n\right) \\
    &= \exp(-\nu Z(\Btheta)) E_{Z(\Btheta)|\nu} \left( 1 + \sum_{n=1}^{\infty} \dfrac{\nu^n}{n!} \prod_{i=1}^{n} (\widetilde{Z}(\Btheta) - \widehat{Z_i}(\Btheta))\right).
\end{align*}

An unbiased estimator of $\exp(-\nu Z(\Btheta))$ is 
\begin{align}\label{eq:Exp_est_RRaux}
    \widehat{\exp}(-\nu  Z(\Btheta)) & = \exp(-\nu \widetilde{Z}(\Btheta))  \left( 1 + \sum_{n=1}^{\infty} \dfrac{\nu^n}{n!} \prod_{i=1}^{n} (\widetilde{Z}(\Btheta) - \widehat{Z_i}(\Btheta))\right),
\end{align}
\[ \] which contains the summation over an infinite number of terms. The RR method truncates the summation into a finite number of terms, and the truncation is determined by a random stopping time $\tau \geq 1$: $ p_n = P(\tau \geq n)$ for all $n\geq 1$ and $p_1 = 1$. The stopping time could be also defined as $\tau = \inf \{ U_j > \mathcal{Q}_j\}$, where $U_j \sim \mathrm{Uniform}(0,1)$ and $\mathcal{Q}_j \in (0,1]$, which makes $p_n = \prod_{j=1}^{n-1} \mathcal{Q}_j$. \cite{lyne2015russian} select $\mathcal{Q}_j = \dfrac{\nu^j}{j!} \prod_{i=1}^{j} (\widetilde{Z}(\Btheta) - \widehat{Z_i}(\Btheta))$. It is not necessarily $\mathcal{Q}_j \in (0,1]$ for all $j$. Algorithm \ref{alg:RR} uses a threshold $r < 1$ to ensure $\mathcal{Q}_j$ is strictly less than 1. Using Russian roulette, the term $S$,
\[S = \left( 1 + \sum_{n=1}^{\infty} \dfrac{\nu^n}{n!} \prod_{i=1}^{n} (\widetilde{Z}(\Btheta) - \widehat{Z_i}(\Btheta))\right), \] 
is now replaced by the RR estimate $S_\tau$,
\begin{align}\label{eq:Russian_roulette_truncation}  
S_\tau & = \left( 1 + \sum_{n=1}^{\tau-1} \dfrac{\nu^n}{n!} \frac{\prod_{i=1}^{n} (\widetilde{Z}(\Btheta) - \widehat{Z_i}(\Btheta))}{\prod_{j=1}^{n-1} \mathcal{Q}_j}\right)
\end{align}
with $E(S_\tau) = S$.

\begin{algorithm}
\caption{The Russian roulette algorithm for RR-aux}\label{alg:RR}
\begin{algorithmic}
\State Require: $c_{\max}$, $r$, $\widetilde{Z}(\Btheta)$
\State Output: $S_\tau$
\State Let $s_n = \dfrac{\nu^n}{n!} \prod_{i=1}^{n} (\widetilde{Z}(\Btheta) - \widehat{Z_i}(\Btheta))$, then $S_n = 1 + \sum_{i=1}^{n} s_i$.

\State $S_\tau \gets 1$
\State $k \gets 1$
\State $w \gets 1$

\While{$k \leq c_{\max} $}
\If{$s_k < r$} 
    \State $\mathcal{Q}_k \gets s_k /r$
    \State $w \gets w \times \mathcal{Q}_k $
    \State $U\sim \mathrm{Uniform}(0,1)$
    \If{$\mathcal{Q}_k > U$} \Comment{accept, continue the algorithm}
    \State $S_\tau \gets S_\tau + s_k / w$
    \Else  \Comment{terminate}
    \State break;
    \EndIf
\Else
    \State $S_\tau  \gets S_\tau + s_k$ \Comment{$s_k$ is added to the series directly without RR}
\EndIf
\State $k \gets k+1$
\EndWhile
\end{algorithmic}
\end{algorithm}

A signed pseudo-marginal method can be used to target the absolute value of the estimated version of \eqref{eq:augmented_posterior_in_supp}, where the estimate is governed by random variables $\Bu$ that produce \eqref{eq:Russian_roulette_truncation}, including all normalising function estimates and the truncation $\tau$. Unlike our BP estimator, for Russian roulette estimators (including RR below), it is unclear how to induce a correlation to carry out correlated pseudo-marginal following the approach in \cite{tran2016block}, which assumes the likelihood estimator is a product of a fixed number of terms. Thus $\Bu'\sim p(\Bu')$ is proposed independently from $\Bu$ (and cancels the prior for $\Bu$ in the acceptance probability below). Consider the proposal $q(\nu', \Btheta' | \nu, \Btheta) = q(\Btheta' | \Btheta) p(\nu'| \Btheta') = q(\Btheta' | \Btheta) \widetilde{Z}(\Btheta') \exp(-\nu' \widetilde{Z}(\Btheta'))$, the MH acceptance ratio of $\nu', \Btheta'$ is
\begin{equation*}
    \alpha = \min\left\{1, \dfrac{\pi(\Btheta')}{\pi(\Btheta)} \dfrac{f(\By|\Btheta')}{f(\By|
\Btheta)} \dfrac{\widetilde{Z}(\Btheta)}{\widetilde{Z}(\Btheta')} \dfrac{\left( 1 + \sum_{n=1}^{\tau'-1} \dfrac{\nu'^n}{n!} \prod_{i=1}^{n} (\widetilde{Z}(\Btheta') - \widehat{Z_i}(\Btheta'))/\prod_{j=1}^{n-1}\mathcal{Q}'_j\right)}{\left( 1 + \sum_{n=1}^{\tau-1} \dfrac{\nu^n}{n!} \prod_{i=1}^{n} (\widetilde{Z}(\Btheta) - \widehat{Z_i}(\Btheta))/\prod_{j=1}^{n-1}\mathcal{Q}_j\right)} \dfrac{q(\Btheta|\Btheta')}{q(\Btheta'|\Btheta)} \right\}. \end{equation*}

\subsubsection{RR without the auxiliary variable (RR)}
This approach estimates $1/Z(\Btheta)$ unbiasedly without using the auxiliary variable $\nu$. The likelihood function $p(\By|\Btheta)$ is rewritten as

\[ p(\By|\Btheta) = \dfrac{f(\By|\Btheta)}{\widetilde{Z}(\Btheta)} \mathcal{C}(\Btheta) \left( 1+\sum_{n=1}^{\infty} \kappa(\Btheta)^n \right), \] with \[ \kappa(\Btheta)  =  1 - \mathcal{C}(\Btheta) \frac{Z(\Btheta)}{\widetilde{Z}(\Btheta)},\]
where $\mathcal{C}(\Btheta)$ ensures that $|\kappa(\Btheta)|<1$ so that the geometric series converges (recall that $\widetilde{Z}(\Btheta)$ is an upper bound of $\widehat{Z}(\Btheta)$).

An unbiased estimator of the likelihood function is then
\begin{align}\label{eq:RR_reciprocal}
    \widehat{p}(\By |\Btheta, \Bu)  & =  \dfrac{f(\By|\Btheta)}{\widetilde{Z}(\Btheta)} \mathcal{C}(\Btheta) \left( 1+\sum_{n=1}^{\infty} \prod_{i=1}^{n} (1+ \widehat{\kappa}_i(\Btheta))\right), 
\end{align}
where \[ \widehat{\kappa}_i(\Btheta)  =  1 - \mathcal{C}(\Btheta) \frac{\widehat{Z}_i(\Btheta)}{\widetilde{Z}(\Btheta)}.\]
The RR algorithm is similar to Algorithm \ref{alg:RR}, except that $\mathcal{C}(\Btheta)$ is selected to ensure $|\widehat{\kappa}_i(\Btheta)| < 1$. The MH acceptance ratio of $\Btheta'$ is
\begin{equation}\label{eq:acc_prob_RR_reciprocal}
    \alpha = \min\left\{1, \dfrac{\pi(\Btheta')}{\pi(\Btheta)} \dfrac{f(\By|\Btheta')}{f(\By|
\Btheta)} \dfrac{\widetilde{Z}(\Btheta)}{\widetilde{Z}(\Btheta')} \dfrac{\mathcal{C}(\Btheta)}{\mathcal{C}(\Btheta')} \dfrac{\left( 1+\sum_{n=1}^{\tau'-1} \prod_{i=1}^{n} (1 +  \widehat{\kappa}_i(\Btheta'))/\prod_{j=1}^{n-1} \mathcal{Q}'_j\right)}{\left( 1 + \sum_{n=1}^{\tau-1} \prod_{i=1}^{n} (1+ \widehat{\kappa}_i(\Btheta))/\prod_{j=1}^{n-1} \mathcal{Q}_j\right)}\dfrac{q(\Btheta|\Btheta')}{q(\Btheta'|\Btheta)} \right\}, \end{equation}
where $\mathcal{Q}_j = \prod_{i}^j \widehat{\kappa}_i(\Btheta)$. 

\section{Computational aspects of the estimators}\label{app:computational_aspects}
\subsection{Structure of estimators}
The three estimators used in signed-pseudo marginal, i.e.\ BP, RR-aux, and RR, all provide unbiased estimates of the likelihood function (and the posterior distribution, up to a proportionality constant). The first two do so by providing an unbiased estimator of $\exp(-\nu Z(\Btheta)$ in \eqref{eq:augmented_posterior}, while the latter unbiasedly estimates $1/Z(\Btheta)$ in \eqref{eq:target_doubly_intractable}.

The BP estimator of the exponential function (with $m=1$) has the structure
\begin{align}\label{eq:BP_struct}
\widehat{\exp}\left(-\nu Z(\Btheta)\right)_{\rm BP} & = A\prod_{l=1}^{\lambda} \eta_l(\Btheta), \,\, 
\eta_l(\Btheta) =\prod_{h=1}^{\chi_l} \dfrac{-\nu\widehat{Z}^{(h,l)}(\Btheta)-a}{\lambda}, \,\, A = \exp(a + \lambda),
\end{align}
see \eqref{eq: block_pois_est} for details. The structure of the RR-aux estimator of the exponential function is,
\begin{align}\label{eq:RR_aux_struct}
\widehat{\exp}\left(-\nu Z(\Btheta)\right)_{\rm RR-aux} & =   B\left( 1 + \sum_{n=1}^{\tau-1} g(n) \prod_{i=1}^{n} \overline{\eta}_i\right), \,\, \overline{\eta}_i = \widetilde{Z}(\Btheta) - \widehat{Z_i}(\Btheta),
\end{align}
where $g(n)=\dfrac{\nu^n}{n!\prod_{j=1}^{n-1}\mathcal{Q}_j}$ and $ B = \exp(-\nu \widetilde{Z}(\Btheta))$, see \eqref{eq:Exp_est_RRaux} and \eqref{eq:Russian_roulette_truncation} for details. Finally, the structure of the RR estimator of the reciprocal function is 
\begin{align}\label{eq:RR_struct}
\widehat{\frac{1}{Z(\Btheta)}}_{\rm RR} & =   C\left( 1 + \sum_{n=1}^{\tau-1} h(n) \prod_{i=1}^{n}\overline{\overline{\eta}}_i\right), \,\, \overline{\overline{\eta}}_i = 1 + \widehat{\kappa}_i(\Btheta), \,\, C = \frac{\mathcal{C}(\Btheta)}{\widetilde{Z}(\Btheta)},
\end{align}
where $h(n)=\dfrac{1}{\prod_{j=1}^{n-1}\mathcal{Q}_j}$, see \eqref{eq:RR_reciprocal} and \eqref{eq:acc_prob_RR_reciprocal} for details.

\subsection{Arithmetic scaling}\label{subsec:arithmetic_scaling}

We first note that for the estimators in \eqref{eq:BP_struct}, \eqref{eq:RR_aux_struct}, \eqref{eq:RR_struct}, computing $A$, $B$ and $C$ is fast as it depends on chosen hyperparameters (e.g.\ bounds), and the computation is trivial to execute. The more costly component is computing $\eta_l$, $\overline{\eta_i}$, and $\overline{\overline{\eta_i}}$, whose cost can be approximated with that of estimating the normalising function. For the BP,  $\eta_l$ involves on average $1$ estimate of the normalising function (recall $\chi_l\sim\mathrm{Pois}(1)$), and for the Russian roulette approaches, both $\overline{\eta_i}$ and $\overline{\overline{\eta_i}}$ involve 1 estimate of the normalising function.   

Besides $A$, $B$, and $C$, the computation involves, for each estimator, 
$$\prod_{l=1}^{\lambda} \eta_l(\Btheta), \,\, 1 + \sum_{n=1}^{\tau-1} h(n) \prod_{i=1}^{n}\overline{\overline{\eta}}_i, \,\, \text{ and } 1 +\sum_{n=1}^{\tau-1} h(n) \prod_{i=1}^{n}\overline{\overline{\eta}}_i.$$
We now study the order of the arithmetic operations for each estimator. Note that the block-Poisson estimator (first term) has a simple product structure, whereas RR and RR-aux have a nested product structure that is summed. Thus, the complexity of the block-Poisson is $\mathcal{O}(\lambda)$. For RR and RR-aux, the summation is over terms that are $O(n)$, which means there exists a constant $k$ such that $O(n) \leq kn$. Thus, 
\begin{align*}
    \sum_{n=1}^{\tau-1}\mathcal{O}\left(n\right) & \leq \sum_{n=1}^{\tau - 1}kn \\
    & = k\frac{(\tau - 1)\tau}{2},
\end{align*}
and hence the complexity of the RR and RR-aux estimators are $1+\mathcal{O}(\tau^2) = \mathcal{O}(\tau^2)$. 

It may be argued that $\mathcal{O}(\tau^2)$ is the cost of a naive implementation of RR-aux and RR. To see this, note that the products within the sum are nested. Thus, if we at each $n$ of the sum, reuse the already computed $n-1$ terms, the complexity of the term within the sum is $\mathcal{O}(1)$, and therefore the complexity of this more efficient implementation of the RR-aux and RR estimators is also linear, i.e.\  $\mathcal{O}(\tau)$. 

\subsection{Parallelisation and vectorisation}
The product in the BP estimator consists of purely independent terms and can be implemented using vectorised multiplication chains, leveraging modern processor pipelines for parallelisation via single instruction, multiple data (SIMD) \citep[Ch. 4]{warne2022BAtutorial,hennessy2011computer} at an $\mathcal{O}(\lambda)$ cost. The sum in both RR and RR-aux can also be computed via purely independent terms (if the product is fully recomputed for each $n$) and thus leverage SIMD, however, at an $\mathcal{O}(\tau^2)$ cost, corresponding to the sub-optimal implementation.

A practical remark regarding the parallelisation of these estimators is that, although each $\eta_l$ term in the BP estimator involves a random number of sub-terms, the total number of $\eta_l$ terms is fixed at $\lambda$ (non-random). This ensures a predictable global workload, facilitating efficient allocation of parallel computational resources. In contrast, while each $\overline{\eta}_i$ or $\overline{\overline{\eta}}_i$ in the RR and RR-aux estimators involves a single computation, the total number of such terms is random and governed by the truncation level $\tau$. As a result, parallel implementation of RR and RR-aux requires precomputing a conservative upper bound $n_{\max}$ of estimates of the normalisation function, which leads to waste when $n_{\max} > \tau - 1$.

We note that the more efficient linear arithmetic implementation of RR-aux and RR in Section \eqref{subsec:arithmetic_scaling}, on the other hand, imposes a sequential computation pattern and prevents vectorisation: each step depends on the completion of the previous product extension (and sum update). This introduces strict dependencies and limits opportunities for hardware parallelism.

\subsection{Practical execution-time considerations}\label{subsec:practical_execution_considerations}
From a practical execution time perspective, BP estimators may execute substantially faster, not due to fewer arithmetic operations, but due to computational parallelism and vectorised operations. This performance advantage becomes more pronounced when the estimates of the normalising function can be obtained cheaply relative to the arithmetic operations, such as in the Kent distribution example. In contrast, for the Ising example, where obtaining the normalising functions is the primary computational bottleneck, the arithmetic operations and their parallel execution constitute a small part of the overall computing time, rendering less pronounced computational gains for the block-Poisson estimator.

\section{Details of the Ising model}\label{app: ising model}

\subsection{An unbiased estimator for the normalising function}
This section supplements the material on the annead importance sampling (AIS) in Section \ref{sec: ising model}.  The likelihood function is $p(\By|\theta) = f(\By|\theta)/Z(\theta)$, with $f(\By|\theta) = \exp(\theta S(\By)) $.

Consider the following intermediate kernel of the likelihood function 
\begin{equation}
    f_i(\By|\theta) = f(\By|\theta)^{\beta_i} p(\By)^{1-\beta_i}, i = 0,\dots,N,\notag
\end{equation}
where $0 = \beta_0 < \beta_1 < \dots <\beta_{N-1} <\beta_N = 1$ and $p(\By) = 0.5^{L\times L}$. In the case of $\beta_0$, sampling from the prior density $f_N(\By|\theta)$ of $\beta_0$ is straightforward. By gradually increasing $\beta_i$, the generated data samples are drawn from the desired likelihood function $f_N(\By|\theta) = f(\By|\theta)$ after $N$ steps without knowing the normalising function. The algorithm starts by sampling $M$ particles (samples) from $f_0(\By|\theta)$, and proceeds with a certain transition probability to a new configuration for $\beta_i$ ($i=1,\dots,N-1)$ and terminates when $\beta_N=1$ is reached.

The transition to a new configuration $\By_{i+1,m}$ ($m=1,\dots, M$) from the current configuration $\By_{i,m}$ is completed by the following Gibbs update. 
\begin{enumerate}
    \item [1] Select one random location $i,j$ out of a $L\times L$ grid.
    \item [2] Change the corresponding value of $y_{i,j} $ with probability
    \begin{equation*}
        p(y_{i,j}=1) = \dfrac{1}{1 + \exp(- \beta_{i+1} \theta \sum y_{\rm neighbour}) 
        },
    \end{equation*}
    where $y_{\rm neighbour}$ refers to the adjacent points in all four cardinal directions: left, right, up, and down of $y_{i,j}$.
    \item [3] Set the new configuration as $y_{i+1,m}$.
\end{enumerate}

The final weight associated with particle $m$ (omitting $\theta$) is
\begin{equation*}
    w^{(m)} = \dfrac{f_1(\By_{0,m})}{f_0(\By_{1,m})} \dfrac{f_2(\By_{1,m})}{f_1(\By_{1,m})} \dots  \dfrac{f_{n}(\By_{n-1,m})}{f_{n-1}(\By_{n-1,m})}.
\end{equation*}
The average of the importance weights $\sum_{m=1}^M w^{(m)} /N $ converges to the ratio of $Z_1(\theta)/Z_0(\theta)$, where $Z_i(\theta)$ corresponds the normalising function of $f_i(y|\theta) $.

We work with  $\log w^{(m)}$ to avoid overflow problems. Parallel computation is possible because the particles (the samples to estimate the normalising functions) are independent. Our description follows the supplementary code in \cite{park2018bayesian}, which uses OpenMP to implement the parallel computation. The re-evaluation of $S(\By)$ from scratch can be computationally costly as it involves $O(L^2)$ operations for each combination of $\beta_i$ and particle $m$. We modify the evaluation process by adding or subtracting the local updates of the selected location only. Such changes reduce the complexity to $O(1)$ and consequently decrease the computational time substantially.

\subsection{The variability of the normalising function}\label{app:variability_Z_hat}

The estimate $\widehat{Z}$ and its variance are crucial in hyperparameter tuning for the proposed algorithm. As the Ising model involves one parameter, it is feasible to study the variability of $\widehat{Z}$ by simulation.  Figure \ref{fig:histogram_Ztheta}  shows the estimates of the scaled $\widehat{Z}$ under different $\theta$ values, where $\widehat{Z}$ is rescaled by dividing by the sample mean of the replications. Each histogram is generated by 1{,}000 independent replications, each of which uses 100 particles in AIS (samples in AIS) with 4{,}000 intermediate transitions equally spaced between 0 and 1. The horizontal axis refers to the scaled $\widehat{Z}(\theta)$. As $\theta$ increases, the distribution of the scaled $\widehat{Z}$ is heavily skewed and the normality assumption appears to be invalid for $\theta > 0.4$. Such a violation explains the overestimation by the bias-corrected estimator for $\theta = 0.43$ in the example in Section \ref{sec: ising model}. Examining the range of the horizontal axis, the magnitude also
increases sharply with $\theta$, which implies that a larger $\theta$ is associated with more variability in $\widehat{Z}$. Hence, more particles (samples) are required to estimate $\widehat{Z}$ as  $\theta$ increases.

\begin{figure}[htbp]
    \centering
    \includegraphics[width = 0.8\linewidth]{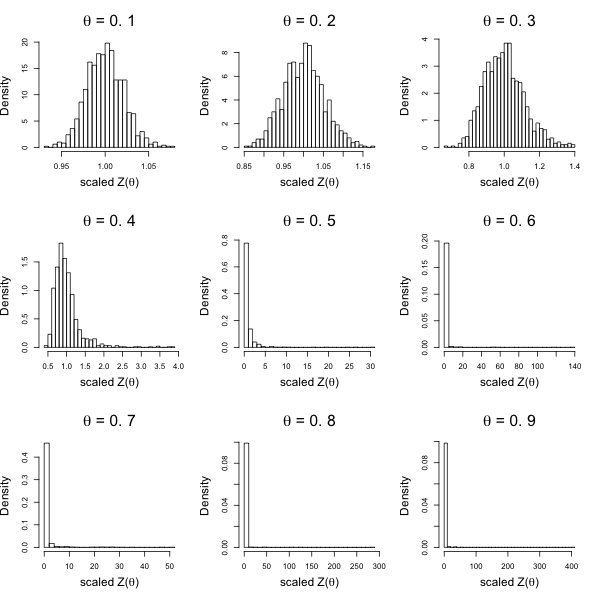}
    \caption{Histograms of scaled $Z(\theta)$ estimates on a $10 \times 10$ 2D Ising model. }
    \label{fig:histogram_Ztheta}
\end{figure}

\subsection{Hyperparameter values of RR and RR-aux}
Hyperparameters for RR and RR-aux in Algorithm \ref{alg:RR} include $c_{\max}$, $r$, and $\mathcal{C}(\theta)$ (RR only). Another term to determine is $\widetilde{Z}(\theta)$, which is the upper bound of $\widehat{Z}(\theta)$.

In RR and RR-aux, $c_{\max} =50$, which is the maximum number of terms for summation in RR and RR-aux. When the number of iterations is greater than 50, the algorithm terminates and returns the series, i.e.,\ a summation over 50 terms.  The threshold $r=0.6$ is used to compare with each single term $s_k$. A smaller $r$ value is associated with longer computing time, but it provides an estimator with smaller variance.  RR uses $\mathcal{C}(\theta) = 0.4$, which ensures $\widehat{\kappa}_i(\theta)$ is strictly less than 1. A smaller $\mathcal{C}(\theta)$ leads to a larger $\widehat{\kappa}_i(\theta)$, which results in the estimator with smaller variance but with a longer computing time. We double the number of importance samples in AIS to calculate $\widetilde{Z}(\theta)$ for a large $\theta$; 200 for $\theta = 0.2$, 1,000 for $\theta = 0.43$.

 The hyperparameters are crucial for the efficiency of RR and RR-aux methods. However, the tuning rules are not provided in \cite{lyne2015russian}. We could select a much smaller $\mathcal{C}(\Btheta)$ and $r$ to ensure the convergence of RR-aux when $\theta = 0.43$, but it leads to a substantially longer computing time. 

 \subsection{Additional results for the simulation study}\label{subsec_appendix:additional_Ising}

Figure \ref{fig:Ising_traceplots} shows trace plots of the MCMC draws of $\theta$ for the Ising model. The figure shows good mixing for all methods, except for RR-aux which gets stuck around iteration $1{,}000$. 

\begin{figure}[h!]
    \centering
    \includegraphics[width=0.8\linewidth]{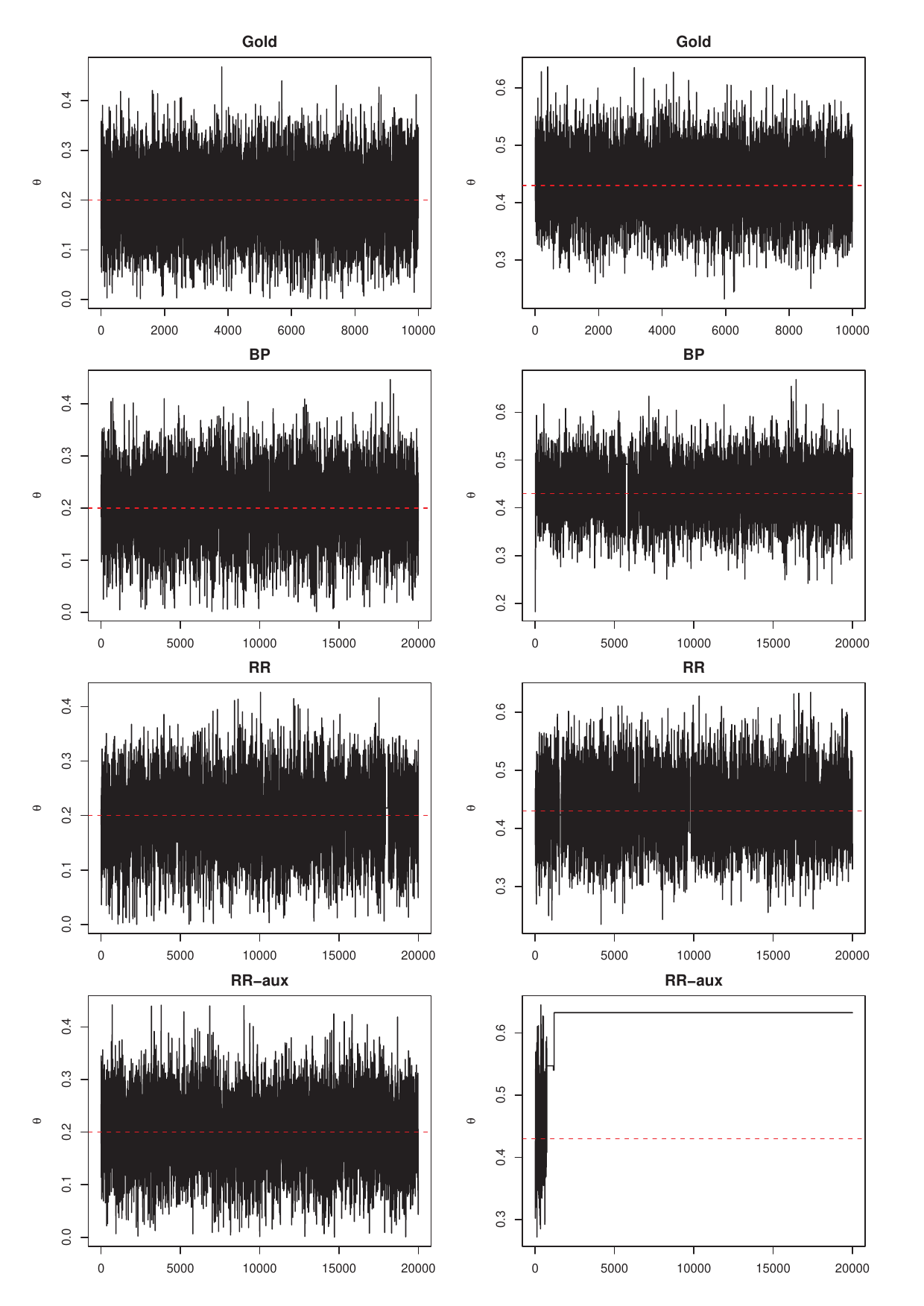}
    \caption{Trace plots of MCMC draws of $\theta$. The horizontal axis refers to the iteration index. The red dotted line represents true $\theta$ value (left panel: $\theta = 0.2$, right panel: $\theta = 0.43$). Note that for the Gold result, only 10,000 posterior samples are plotted after thinning from 1,000,000 iterations.}
    \label{fig:Ising_traceplots}
\end{figure}

\section{Details of the Kent distribution}\label{app: kent distr}

\begin{figure}[t]
    \centering
    \begin{minipage}[b]{0.48\textwidth}
        \includegraphics[width=\textwidth]{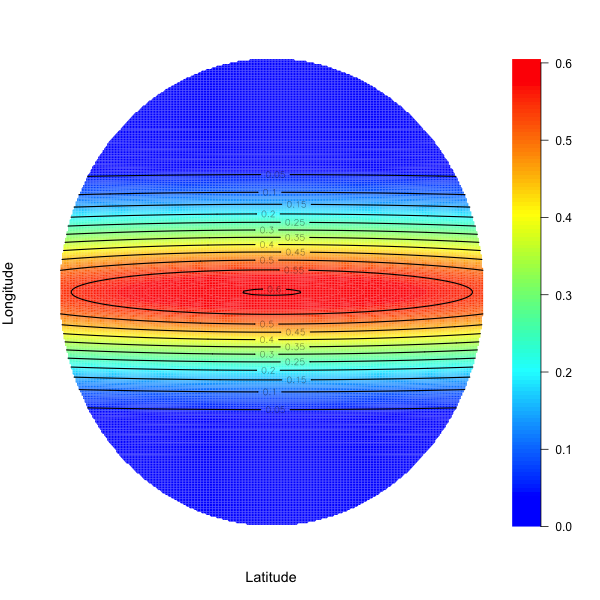}
        \label{fig:}
    \end{minipage}
    \hfill
    \begin{minipage}[b]{0.48\textwidth}
\includegraphics[width=\textwidth]{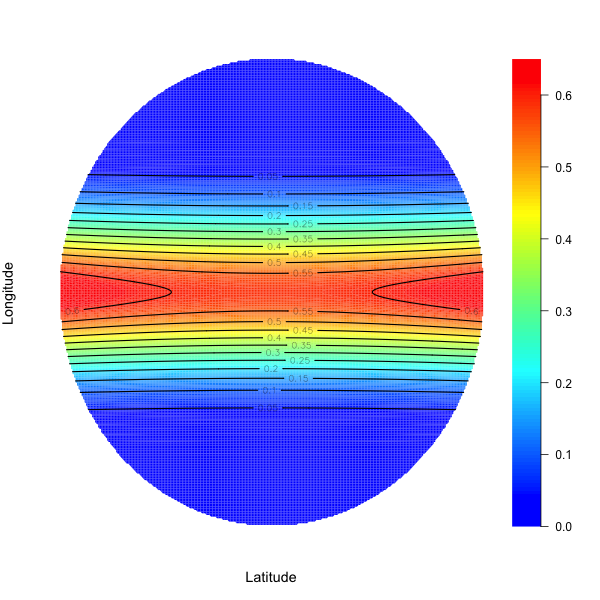}
        \label{fig:sub2}
    \end{minipage}
\caption{Contour plot of a Kent distribution with $\kappa =5$, the corresponding $\beta$ values are 2.45 (left) and 2.55 (right), corresponding to $\beta/\kappa$  = 0.49 and 0.51, respectively.}
    \label{fig:kent_uni_and_bimodal}
\end{figure}

\subsection{Posterior distribution}\label{eq:Kent_posterior}
 In Example \ref{Ex: Kent} of Section \ref{sec: doubly intract prob}, the density function of the Kent distribution is 
 \begin{align*}
    p(\By|\Btheta) = \frac{1}{c(\kappa,\beta)} \exp\left\{\kappa \boldsymbol{\gamma}_1^\top \cdot \By + \beta \left[(\boldsymbol{\gamma}_2^\top \cdot \By)^2 - (\boldsymbol{\gamma}_3^\top \cdot \By)^2\right] \right\} = \frac{f(\By |\Btheta)}{c(\kappa,\beta)},
\end{align*}
with $\By = \{ y_1,y_2,y_3\}$, $\sum_{i=1}^3 y_i^2 = 1$ and $\Btheta = \{ \boldsymbol{\gamma}_1,\boldsymbol{\gamma}_2,\boldsymbol{\gamma}_3,\beta,\kappa \}$; $\kappa > 0, 0\leq \beta < \kappa/2$. The restriction on $\beta$ ensures that the data distribution is unimodal, see Figure \ref{fig:kent_uni_and_bimodal} for an example when $\beta/\kappa$ is close to the boundary (left panel) and when it falls outside (right panel). The parameters $\{ \boldsymbol{\gamma}_1, \boldsymbol{\gamma}_2, \boldsymbol{\gamma}_3\}$ form a 3-dimensional orthonormal matrix with $\boldsymbol{\gamma}_i, i=1,2,3$ is a $3\times 1$ vector. For a 3-dimensional $\mathrm{FB}_5$ distribution, the normalising function is 
$$c(\kappa,\beta) = 2\pi \sum_{j=0}^{\infty} \dfrac{\Gamma(j+0.5)}{\Gamma(j+1)} \beta^{2j} (0.5\kappa)^{-2j-0.5} I_{2j+0.5}(\kappa).$$

We now extend the joint posterior distribution of $\Btheta$  and the auxiliary variables $\nu$ in \eqref{eq:augmented_posterior} to the case of multiple observations. This is the augmented posterior that BP and RR-aux use in the signed pseudo-marginal approach. Assume $n$ independent observations $\By_{1:n}:= \{\By_1, \ldots, \By_n\}$ from an $\mathrm{FB}_5$ distribution, together with the auxiliary variables $\nu_i \sim \text{Exp}(c(\kappa,\beta)), i= 1,\dots,n$. The posterior distribution is 
\begin{align}\label{eq:augmented_auxiliary_multiple_n}
    \pi(\Btheta, \nu_{1:n}|\By_{1:n}) & \propto \pi(\Btheta) \prod_{i = 1}^n f(\By_i |\Btheta)  \exp\left(-\nu_i c(\kappa,\beta)\right) \notag \\ 
    &=\pi(\Btheta) \exp\left(-\sum_{i=1}^n \nu_i c(\kappa,\beta)\right) \prod_{i = 1}^n f(\By_i |\Btheta).
\end{align}
Note that we only need to estimate the normalising function $c(\kappa,\beta)$ once, regardless of the number of observations. BP and RR-aux (and the approximate method in Section \ref{subsec:bias_corrected} if used in the tuning procedure) use this posterior distribution, and the only difference lies in the estimation method of $\exp(\cdot)$. 

Without the auxiliary variable $\nu$, the normalising function appears as a reciprocal instead of an exponent. 
This is the posterior that RR uses in the signed pseudo-marginal approach. For multiple observations,
\begin{align}\label{eq:augmented_recip_multiple_n}
    \pi(\Btheta|\By_{1:n}) & \propto \pi(\Btheta) \prod_{i = 1}^n \left( 
    \frac{f(\By_i |\Btheta)}{c(\kappa,\beta)} \right) 
 =\pi(\Btheta) c^{-n}(\kappa,\beta) \prod_{i = 1}^n f(\By_i |\Btheta).
\end{align}
Note that, unlike \eqref{eq:augmented_auxiliary_multiple_n}, where the argument of the exponent can be estimated unbiasedly when $E\left( \widehat{c}(\kappa, \beta)\right)=c(\kappa, \beta)$, the argument of the reciprocal in \eqref{eq:augmented_recip_multiple_n} cannot, since $E\left( \widehat{c}(\kappa, \beta)^n\right) \neq c(\kappa, \beta)^n$ in this case. Thus, for RR, each term $c^{-1}(\kappa, \beta)$ needs to be estimated individually and independently for each observation, which dramatically increases the computation. This is the reason RR is classified as not scalable to multiple observations in Table \ref{tab:methods summary}. Finally, we note that \eqref{eq:augmented_recip_multiple_n} is also the target of the exchange algorithm.

\subsection{Implementation details}

For all Bayesian methods except for Exchange (where $c(\kappa,\beta)$ cancels), we compute the first 3 terms exactly in $c(\kappa,\beta)$, i.e.\ $K=3$,  and perform a truncation of the remaining terms. 

In the simulation study in Section \ref{subsec: kent_dist}, BP selects $m = 1$ and chooses the number of blocks $\lambda = 50, 50, 100$ for $n = 10, 100$ and $1{,}000$ respectively.  If used for the tuning procedure, we fix the number of samples as 50 to obtain the variance of $c(\kappa,\beta)$ in the implementation of the approximate method in Section \eqref{subsec:bias_corrected}. For RR-aux, $r = 0.6$, $\mathcal{C}(\Btheta) = 0.4$ (RR only) and $c_{\max} = 50$ (Algorithm \ref{alg:RR}). The upper bound for the estimator of $c(\kappa,\beta)$ is set to 110\% of the summation of the first 10 truncated terms in  $c(\kappa,\beta)$.

The MLE method is based on our modification of the function \texttt{kent.mle} of the R package \texttt{Directional}, where the original version uses the moment estimates of $\gamma$'s. We use an optimiser on the transformed parameters to obtain the MLE of all the parameters in the modified version.

In the empirical study, as the number of observations is small to moderate (ranging from 9 to 101), we select $m = 1$, $\lambda = 20$ for BP, $r=0.2$, $\mathcal{C}(\theta)$ = 0.8 (RR-only) for RR and RR-aux.

\subsection{Additional results for the simulation study} \label{app: results of FB5_sim}
Tables \ref{tab:FB5_simulation_beta} and \ref{tab:FB5_simulation_beta_over_kappa} show additional results for the estimation of $\beta$ and $\beta/\kappa$ in the simulation study, respectively. Similar conclusions to those in Section \ref{subsec: kent_dist} of the main paper are obtained, where our method BP outperforms the Bayesian alternatives (RR, RR-aux, and Exchange), particularly for larger $n$.
\begin{table}[]
    \centering
    \small{
    \begin{tabular}{r l l r | l l r|l l r}
    \toprule
               & \multicolumn{9}{c}{$\beta/\kappa$ = 0.01} \\
           \midrule
               & \multicolumn{3}{c}{$n=10$} & \multicolumn{3}{c}{$n=100$} &\multicolumn{3}{c}
 {$n=1{,}000$}\\
  & $\mathrm{RMSE}_\beta$ & $\mathrm{ESS}_{\beta} $ &  $\mathrm{ESS}_{\beta} $/s& $\mathrm{RMSE}_\beta$ & $\mathrm{ESS}_{\beta} $ &  $\mathrm{ESS}_{\beta} $/s  & $\mathrm{RMSE}_\beta$ & $\mathrm{ESS}_{\beta} $ &  $\mathrm{ESS}_{\beta} $/s \\
    \cmidrule(lr){2-4}\cmidrule(lr){5-7}\cmidrule(lr){8-10} 
BP& 1.36 & 119.8 &13.1 & 0.40 & 112.7 &12.2 & 0.10 & 110.3 &11.3\\
RR& 1.44 & 14.1 &0.3 & 0.03 & 4.5 & $<$0.1 & 0.01 & 3.9 & $<$0.1\\
RR-aux& 1.33 & 141.0 &12.4 & 0.41 & 150.7 &4.6 & 0.10 & 35.9 &0.6\\
Moment& 1.55 & - &- & 0.22 & - &- & 0.05 & - &-\\
MLE& 2.46 & - &- & 0.43 & -&- & 0.11 & - &-\\
Exch.& 2.70 & 103.8 &8.5 & 1.00 & 136.1 &9.6 & 0.62 & 107.7 &4.5\\
\midrule
           & \multicolumn{9}{c}{$\beta/\kappa$ = 0.25} \\
           \midrule
               & \multicolumn{3}{c}{$n=10$} & \multicolumn{3}{c}{$n=100$} &\multicolumn{3}{c}
 {$n=1{,}000$}\\
  & $\mathrm{RMSE}_\beta$ & $\mathrm{ESS}_{\beta} $ &  $\mathrm{ESS}_{\beta} $/s& $\mathrm{RMSE}_\beta$ & $\mathrm{ESS}_{\beta} $ &  $\mathrm{ESS}_{\beta} $/s  & $\mathrm{RMSE}_\beta$ & $\mathrm{ESS}_{\beta} $ &  $\mathrm{ESS}_{\beta} $/s \\
    \cmidrule(lr){2-4}\cmidrule(lr){5-7}\cmidrule(lr){8-10} 
BP& 0.66 & 99.3 &11.0 & 0.37 & 121.8 &13.3 & 0.13 & 129.6 &13.2\\
RR& 0.73 & 13.7 &0.3 & 0.35 & 3.8 & $<$0.1  & 0.18 & 2.7 & $<$0.1 \\
RR-aux& 0.64 & 114.1 &10.1 & 0.37 & 157.7 &4.8 & 0.13 & 31.2 &0.5\\
Moment& 1.06 & -&- & 0.60 & -&- & 0.64 & -&-\\
MLE& 2.03 & -&- & 0.38 & -&- & 0.14 & -&-\\
Exch.& 2.30 & 84.3 &6.7 & 1.58 & 79.1 &5.4 & 1.47 & 64.8 &2.6\\
\midrule
           & \multicolumn{9}{c}{$\beta/\kappa$ = 0.49} \\
           \midrule
               & \multicolumn{3}{c}{$n=10$} & \multicolumn{3}{c}{$n=100$} &\multicolumn{3}{c}
 {$n=1{,}000$}\\
  & $\mathrm{RMSE}_\beta$ & $\mathrm{ESS}_{\beta} $ &  $\mathrm{ESS}_{\beta} $/s& $\mathrm{RMSE}_\beta$ & $\mathrm{ESS}_{\beta} $ &  $\mathrm{ESS}_{\beta} $/s  & $\mathrm{RMSE}_\beta$ & $\mathrm{ESS}_{\beta} $ &  $\mathrm{ESS}_{\beta} $/s \\
    \cmidrule(lr){2-4}\cmidrule(lr){5-7}\cmidrule(lr){8-10} 
BP& 1.19 & 77.4 &8.5 & 0.39 & 102.4 &11.3 & 0.13 & 86.0 &8.8\\
RR& 1.15 & 11.0 &0.3 & 0.39 & 2.9 & $<$0.1  & 0.73 & 2.8 & $<$0.1 \\
RR-aux& 1.19 & 88.3 &8.0 & 0.38 & 134.7 &4.1 & 0.14 & 15.4 &0.3\\
Moment& 1.25 & -&- & 1.45 & -&- & 1.50 & -&-\\
MLE& 1.71 & -&- & 0.54 & -&- & 0.25 & -&-\\
Exch. & 2.51 & 68.3 &5.4 & 2.21 & 103.5 &6.8 & 1.91 & 100.7 &3.9\\
\bottomrule
\\
    \end{tabular}}
\caption{ Simulation results when estimating $\beta$  using 100 independent replications of an $\mathrm{FB}_5$ distribution. The RMSE numbers are with respect to the true value $\beta=0.05,1.25,2.45$. The $\mathrm{ESS_\beta}$/s is the effective sample size (altered with sign correction) per second.}
\label{tab:FB5_simulation_beta}
\end{table}

\begin{table}[]
    \centering
    \small{\begin{tabular}{r l l l | l l l|l l l}
    \toprule
               & \multicolumn{9}{c}{$\beta/\kappa$ = 0.01} \\
           \midrule
               & \multicolumn{3}{c}{$n=10$} & \multicolumn{3}{c}{$n=100$} &\multicolumn{3}{c}
 {$n=1{,}000$}\\
  & $\mathrm{RMSE}_{\frac{\beta}{\kappa}}$ & $\mathrm{ESS}_{\frac{\beta}{\kappa}} $ & $\mathrm{ESS}_{\frac{\beta}{\kappa}}$/s& $\mathrm{RMSE}_{\frac{\beta}{\kappa}}$ & $\mathrm{ESS}_{\frac{\beta}{\kappa}} $ & $\mathrm{ESS}_{\frac{\beta}{\kappa}}$/s  & $\mathrm{RMSE}_{\frac{\beta}{\kappa}}$ & $\mathrm{ESS}_{\frac{\beta}{\kappa}} $ & $\mathrm{ESS}_{\frac{\beta}{\kappa}}$/s \\
    \cmidrule(lr){2-4}\cmidrule(lr){5-7}\cmidrule(lr){8-10} 
BP& 0.22 & 147.3 &16.2 & 0.08 & 115.4 &12.5 & 0.02 & 110.4 &11.3\\
RR& 0.21 & 117.6 &2.8 & 0.01 & 4.0 &$<0.1$ & 0.00 & 4.3 &$<0.1$\\
RR-aux& 0.22 & 170.6 &15.0 & 0.08 & 155.0 &4.7 & 0.02 & 35.9 &0.6\\
Moment& 0.16 & -&- & 0.04 & -&- & 0.01 & -&-\\
MLE& 0.28 & -&- & 0.08 & -&- & 0.02 & -&-\\
Exch.& 0.26 & 140.3 &11.5 & 0.18 & 145.6 &10.2 & 0.12 & 109.7 &4.6\\
\midrule
           & \multicolumn{9}{c}{$\beta/\kappa$ = 0.25} \\
           \midrule
               & \multicolumn{3}{c}{$n=10$} & \multicolumn{3}{c}{$n=100$} &\multicolumn{3}{c}
 {$n=1{,}000$}\\
  & $\mathrm{RMSE}_{\frac{\beta}{\kappa}}$ & $\mathrm{ESS}_{\frac{\beta}{\kappa}} $ & $\mathrm{ESS}_{\frac{\beta}{\kappa}}$/s& $\mathrm{RMSE}_{\frac{\beta}{\kappa}}$ & $\mathrm{ESS}_{\frac{\beta}{\kappa}} $ & $\mathrm{ESS}_{\frac{\beta}{\kappa}}$/s  & $\mathrm{RMSE}_{\frac{\beta}{\kappa}}$ & $\mathrm{ESS}_{\frac{\beta}{\kappa}} $ & $\mathrm{ESS}_{\frac{\beta}{\kappa}}$/s \\
    \cmidrule(lr){2-4}\cmidrule(lr){5-7}\cmidrule(lr){8-10} 
BP& 0.05 & 121.1 &13.4 & 0.07 & 125.4 &13.7 & 0.02 & 131.1 &13.3\\
RR& 0.08 & 112.7 &2.7 & 0.06 & 3.6 &$<0.1$ & 0.04 & 2.7 &$<0.1$\\
RR-aux& 0.05 & 135.7 &12.1 & 0.07 & 163.5 &5.0 & 0.02 & 31.2 &0.5\\
Moment& 0.10 & -&- & 0.12 & -&- & 0.13 & -&-\\
MLE& 0.15 & -&- & 0.06 & -&- & 0.03 & -&-\\
Exch.& 0.06 & 110.4 &8.8 & 0.12 & 98.9 &6.7 & 0.14 & 79.3 &3.2\\
\midrule
           & \multicolumn{9}{c}{$\beta/\kappa$ = 0.49} \\
           \midrule
               & \multicolumn{3}{c}{$n=10$} & \multicolumn{3}{c}{$n=100$} &\multicolumn{3}{c}
 {$n=1{,}000$}\\
  & $\mathrm{RMSE}_{\frac{\beta}{\kappa}}$ & $\mathrm{ESS}_{\frac{\beta}{\kappa}} $ & $\mathrm{ESS}_{\frac{\beta}{\kappa}}$/s& $\mathrm{RMSE}_{\frac{\beta}{\kappa}}$ & $\mathrm{ESS}_{\frac{\beta}{\kappa}} $ & $\mathrm{ESS}_{\frac{\beta}{\kappa}}$/s  & $\mathrm{RMSE}_{\frac{\beta}{\kappa}}$ & $\mathrm{ESS}_{\frac{\beta}{\kappa}} $ & $\mathrm{ESS}_{\frac{\beta}{\kappa}}$/s \\
    \cmidrule(lr){2-4}\cmidrule(lr){5-7}\cmidrule(lr){8-10} 
BP& 0.21 & 98.4 &10.9 & 0.06 & 110.2 &12.1 & 0.01 & 119.4 &12.2\\
RR& 0.19 & 263.4 &6.3 & 0.07 & 3.0 &$<0.1$ & 0.11 & 3.8 &$<0.1$\\
RR-aux& 0.21 & 115.2 &10.4 & 0.06 & 144.5 &4.4 & 0.02 & 22.5 &0.4\\
Moment& 0.26 & -&- & 0.28 & -&- & 0.28 & 0.0 &0.0\\
MLE& 0.13 & -&- & 0.09 & -&- & 0.04 & 0.0 &0.0\\
Exch.& 0.18 & 92.4 &7.4 & 0.03 & 112.0 &7.3 & 0.03 & 117.0 &4.5\\
\bottomrule
\\
    \end{tabular}}
\caption{ Simulation results when estimating $\beta/\kappa$ using 100 independent replications of an $\mathrm{FB}_5$ distribution. The RMSE numbers are with respect to the true values of $\beta/\kappa$. The $\mathrm{ESS}_{\frac{\beta}{\kappa}}$/s is the effective sample size (altered with sign correction) per second.  }
\label{tab:FB5_simulation_beta_over_kappa}
\end{table}

\subsection{Additional results for the empirical study} \label{app: results of FB5}
Figures \ref{fig:FB5_kappa} and \ref{fig:FB5_ratio} show additional results for the estimation of $\beta$ and $\beta/\kappa$ in real data application, respectively. Similar conclusions to those in Section \ref{sec: empirical study} of the main paper are obtained, where our method BP outperforms the Bayesian alternatives (RR, RR-aux, and Exchange).

\begin{table}{}
\small{\begin{tabular}{lccc|ccc}
\toprule
Paleo & \multicolumn{3}{c}{Group 1 (n=9)} &\multicolumn{3}{c}{Group 2 (n=24)} \\
    & $\beta$ & $\mathrm{ESS}_{\beta}$ & $\mathrm{ESS}_{\beta}$/s & $\beta$ & $\mathrm{ESS}_{\beta}$ & $\mathrm{ESS}_{\beta}$/s  \\
\cmidrule(lr){2-4} \cmidrule(lr){5-7}
       BP &  2.68 &   183.91 &         7.29 &  4.97 &   198.29 &        20.76 \\
       RR &  3.03 &    18.98 &         0.53 &  4.02 &   182.33 &         2.14 \\
   RR-aux &  3.09 &   136.99 &         9.63 &   3.3 &   144.31 &         7.97 \\
 Exchange &  4.61 &    82.11 &         6.12 &  6.65 &    54.03 &         4.15 \\
   Moment &  4.03 &      - &          - &  5.88 &      - &          - \\
      MLE &  4.55 &      - &          - &  6.44 &      - &          - \\
  \midrule
Magnetic & \multicolumn{3}{c}{Group 1 (n=62)} &\multicolumn{3}{c}{Group 2 (n=62)} \\
    & $\beta$ & $\mathrm{ESS}_{\beta}$ & $\mathrm{ESS}_{\beta}$/s & $\beta$ & $\mathrm{ESS}_{\beta}$ & $\mathrm{ESS}_{\beta}$/s \\
   \cmidrule(lr){2-4} \cmidrule(lr){5-7}
       BP &   7.15 &   166.73 &        16.69 &  14.18 &   126.19 &        12.98 \\
       RR &   7.62 &    25.77 &         0.15 &  14.88 &     1.53 &         0.01 \\
   RR-aux &   7.43 &   174.93 &          6.4 &  14.51 &    48.32 &         1.66 \\
 Exchange &  21.92 &   101.14 &         6.57 &  51.89 &    66.67 &         4.34 \\
   Moment &   4.55 &      - &          - &   8.87 &      - &          - \\
      MLE &   8.24 &      - &          - &  15.57 &      - &          - \\
      \midrule
Sandstone & \multicolumn{3}{c}{Group 1 (n=35)} &\multicolumn{3}{c}{Group 2 (n=13)} \\
    & $\beta$ & $\mathrm{ESS}_{\beta}$ & $\mathrm{ESS}_{\beta}$/s & $\beta$ & $\mathrm{ESS}_{\beta}$ & $\mathrm{ESS}_{\beta}$/s \\
   \cmidrule(lr){2-4} \cmidrule(lr){5-7}
                    BP &  1.54 &   142.63 &         5.66 &   8.86 &    95.27 &        10.08 \\
       RR &  1.24 &    86.89 &         0.78 &   7.56 &   121.59 &         1.76 \\
   RR-aux &   1.8 &   185.85 &         8.86 &   7.65 &    96.16 &         6.37 \\
 Exchange &  3.31 &   109.17 &          7.6 &  13.69 &     4.92 &         0.38 \\
   Moment &  2.07 &      - &          - &  18.45 &      - &          - \\
      MLE &  2.42 &      - &          - &  20.15 &      - &          - \\
      \midrule
Stone & \multicolumn{3}{c}{Group 1 (n=101)} &\multicolumn{3}{c}{Group 2 (n=101)} \\
    & $\beta$ & $\mathrm{ESS}_{\beta}$ & $\mathrm{ESS}_{\beta}$/s & $\beta$ & $\mathrm{ESS}_{\beta}$ & $\mathrm{ESS}_{\beta}$/s \\
   \cmidrule(lr){2-4} \cmidrule(lr){5-7}
                    BP &  0.57 &    80.23 &         3.21 &  1.04 &    32.84 &         3.41 \\
       RR &  0.48 &    10.55 &         0.03 &  0.97 &    10.53 &         0.04 \\
   RR-aux &   0.6 &     75.8 &         2.05 &  1.07 &    37.73 &         1.03 \\
 Exchange &  2.22 &    56.21 &         3.95 &  1.44 &    64.99 &         5.02 \\
   Moment &  0.23 &      - &          - &  0.41 &      - &          - \\
      MLE &   0.6 &      - &          - &   1.1 &      - &          - \\
\bottomrule
\end{tabular}}
\caption{Results for the Kent model of the four datasets when estimating $\beta$. All the chains ran for 10,000 iterations. The term $\beta$ refers to the posterior mean for the Bayesian methods and estimates for the frequentist methods. The sign correction is applied for BP, RR and RR-aux. }\label{tab:FB5_empirical_beta}
\end{table}

\begin{table}{}
\small{\begin{tabular}{lccc|ccc}
\toprule
Paleo & \multicolumn{3}{c}{Group 1 (n=9)} &\multicolumn{3}{c}{Group 2 (n=24)} \\
    & $\kappa$ & $\mathrm{ESS}_{\kappa}$ & $\mathrm{ESS}_{\kappa}$/s & $\kappa$ & $\mathrm{ESS}_{\kappa}$ & $\mathrm{ESS}_{\kappa}$/s  \\
\cmidrule(lr){2-4} \cmidrule(lr){5-7}
       BP &  18.93 &    195.43 &          7.75 &  36.04 &    214.66 &         22.47 \\
       RR &  19.95 &    191.09 &          5.31 &   33.1 &    138.81 &          1.63 \\
   RR-aux &   18.8 &    150.38 &         10.57 &  33.79 &    124.84 &           6.9 \\
 Exchange &  21.62 &      72.4 &           5.4 &  38.37 &     52.11 &             4 \\
   Moment &  26.54 &       - &           - &  39.84 &       - &           - \\
      MLE &  26.65 &       - &           - &  40.02 &       - &           - \\
  \midrule
Magnetic & \multicolumn{3}{c}{Group 1 (n=62)} &\multicolumn{3}{c}{Group 2 (n=62)} \\
     & $\kappa$ & $\mathrm{ESS}_{\kappa}$ & $\mathrm{ESS}_{\kappa}$/s & $\kappa$ & $\mathrm{ESS}_{\kappa}$ & $\mathrm{ESS}_{\kappa}$/s  \\
   \cmidrule(lr){2-4} \cmidrule(lr){5-7}
        BP &  15.01 &     161.8 &          16.2 &   29.09 &    125.53 &         12.91 \\
       RR &  15.92 &     25.33 &          0.15 &   31.04 &      1.72 &          0.01 \\
   RR-aux &  15.43 &    183.07 &          6.69 &   29.63 &     48.07 &          1.65 \\
 Exchange &  45.45 &    101.12 &          6.57 &  107.28 &     63.17 &          4.12 \\
   Moment &  12.99 &       - &           - &   23.22 &       - &           - \\
      MLE &  16.49 &       - &           - &   31.13 &       - &           - \\
      \midrule
Sandstone & \multicolumn{3}{c}{Group 1 (n=35)} &\multicolumn{3}{c}{Group 2 (n=13)} \\
   & $\kappa$ & $\mathrm{ESS}_{\kappa}$ & $\mathrm{ESS}_{\kappa}$/s & $\kappa$ & $\mathrm{ESS}_{\kappa}$ & $\mathrm{ESS}_{\kappa}$/s  \\
   \cmidrule(lr){2-4} \cmidrule(lr){5-7}
                     BP &  20.54 &    232.08 &          9.21 &  49.05 &    117.05 &         12.38 \\
       RR &   20.5 &    131.06 &          1.17 &  44.27 &    141.78 &          2.05 \\
   RR-aux &  20.69 &    157.71 &          7.52 &  47.11 &    118.79 &          7.87 \\
 Exchange &  20.74 &     80.44 &           5.6 &  54.33 &     46.45 &           3.6 \\
   Moment &  22.36 &       - &           - &  68.94 &       - &           - \\
      MLE &  22.44 &       - &           - &  70.18 &       - &           - \\
      \midrule
Stone & \multicolumn{3}{c}{Group 1 (n=101)} &\multicolumn{3}{c}{Group 2 (n=101)} \\
    & $\kappa$ & $\mathrm{ESS}_{\kappa}$ & $\mathrm{ESS}_{\kappa}$/s & $\kappa$ & $\mathrm{ESS}_{\kappa}$ & $\mathrm{ESS}_{\kappa}$/s  \\
   \cmidrule(lr){2-4} \cmidrule(lr){5-7}
                           BP &  4.27 &     83.03 &          3.33 &  2.13 &     33.15 &          3.45 \\
       RR &  4.25 &     25.15 &          0.08 &  2.07 &     12.77 &          0.05 \\
   RR-aux &   4.2 &    129.21 &           3.5 &   2.2 &     37.41 &          1.02 \\
 Exchange &   5.3 &     45.59 &           3.2 &  2.95 &     62.47 &          4.83 \\
   Moment &  4.29 &       - &           - &  1.99 &       - &           - \\
      MLE &  4.32 &       - &           - &  2.19 &       - &           - \\
\bottomrule
\end{tabular}}
\caption{Results for the Kent model of the four datasets when estimating $\kappa$. All the chains ran for 10,000 iterations. The term $\kappa$ refers to the posterior mean for the Bayesian methods and estimates for the frequentist methods. The sign correction is applied for BP, RR and RR-aux. }\label{tab:FB5_empirical_kappa}
\end{table}
\begin{figure}[ht!]
    \centering\includegraphics[width=0.8\linewidth]{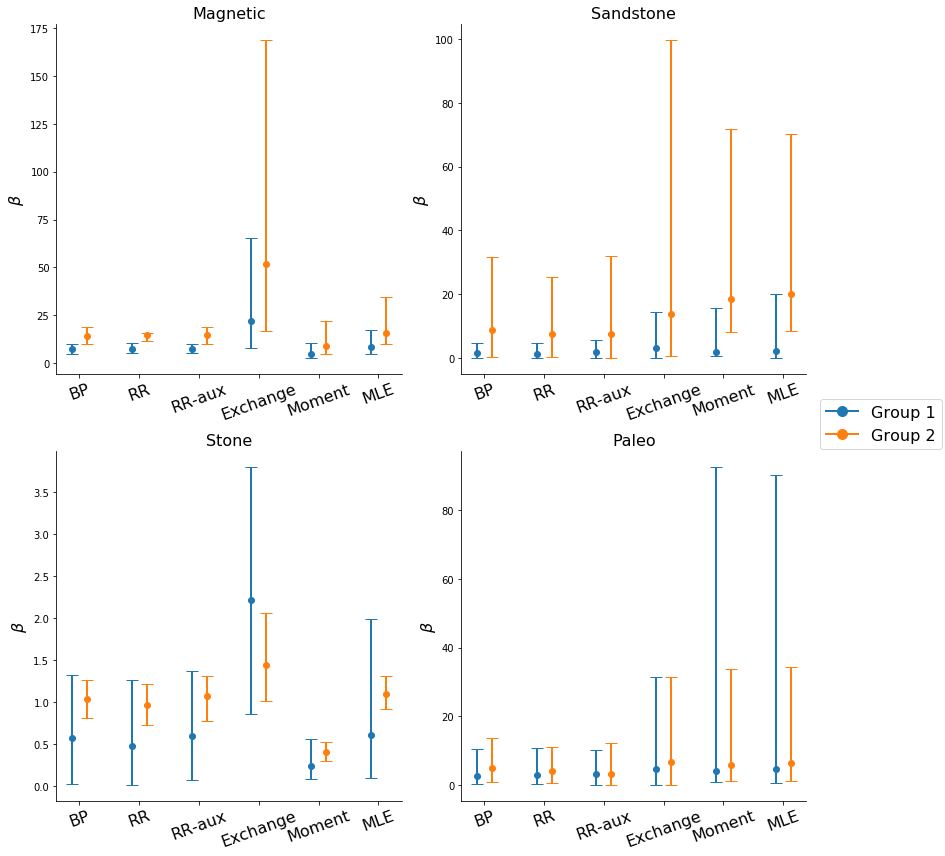}
    \caption{Posterior mean (estimate for Moment and MLE) of $\beta$ for two groups of the four datasets. For Bayesian methods, the error bar represents the 95\% credible interval of the posterior distributions. For MLE and Moment, the error bar shows the 95\% confidence intervals of the bootstrapped samples.}
    \label{fig:FB5_kappa}
\end{figure}

\begin{figure}[ht!]
    \centering
    \includegraphics[width=0.8\linewidth]{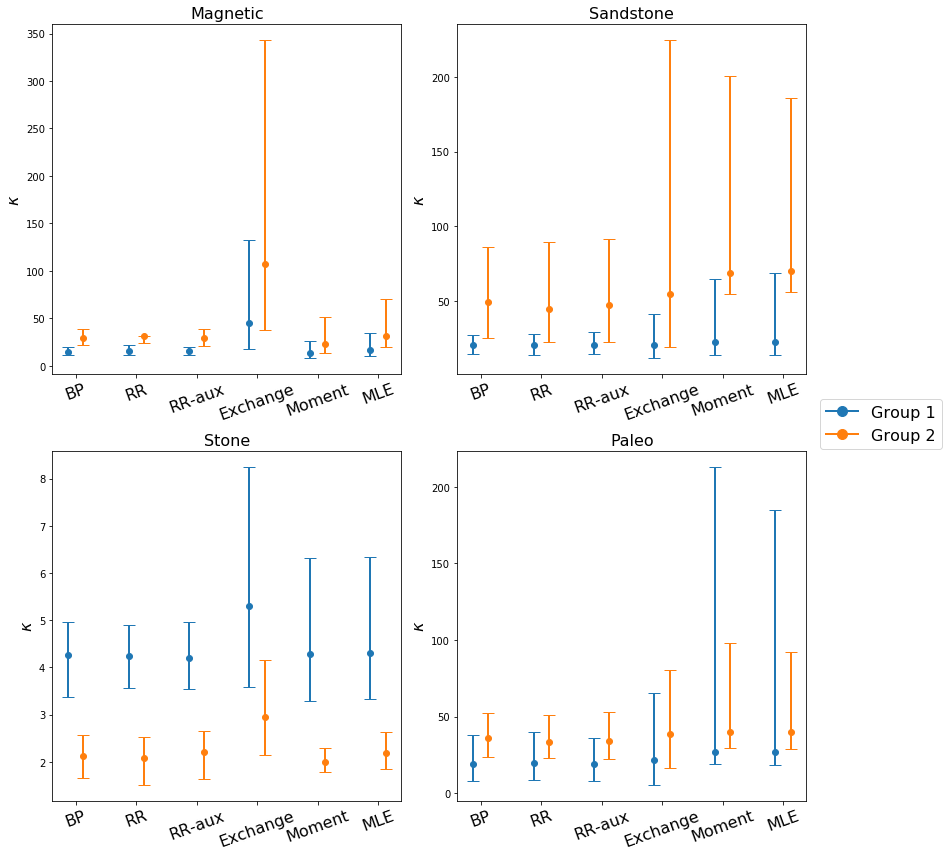}
    \caption{Posterior mean (estimate for Moment and MLE) of $\kappa$ for two groups of the four datasets. For Bayesian methods, the error bar represents the 95\% credible interval of the posterior distributions. For MLE and Moment, the error bar shows the 95\% confidence intervals of the bootstrapped samples.}
    \label{fig:FB5_ratio}
\end{figure}

\end{document}